\numberwithin{equation}{section}
\newtheorem{conjecture}{Conjecture}[section]
\crefname{conjecture}{conjecture}{conjectures}
\newtheorem{proposition}[conjecture]{Proposition}
\newtheorem{corollary}[conjecture]{Corollary}
\newtheorem{remark}[conjecture]{Remark}
\theoremstyle{definition}
\newtheorem{definition}[conjecture]{Definition}
\newcommand\td[1]{\tilde{#1}}
\newcommand\ii{\mathrm{i}} 
\newcommand\dif{\mathop{}\!\mathrm{d}} 
\newcommand\eu{\mathrm{e}} 
\newcommand\pt{\mathrm{pt}} 
\newcommand\scl{\mathrm{cl}} 
\newcommand\inst{\mathrm{inst}} 
\newcommand\reg{\mathrm{reg}} 
\newcommand\cmp{\mathrm{cpt}} 
\newcommand\eq{\mathrm{eq}} 
\newcommand\sing{\mathrm{sing}} 
\newcommand\qjk{\mathrm{QJK}} 
\newcommand\jk{\mathrm{JK}} 
\newcommand\sft{\mathsf{T}} 
\newcommand\sfa{\mathsf{A}} 
\newcommand\sfp{\mathsf{P}} 
\newcommand\chamber{\mathfrak{C}} 
\newcommand\dchamber{\Lambda} 
\newcommand\e{\epsilon}
\newcommand\ve{\varepsilon}
\newcommand\al{\alpha}
\newcommand\BR{\mathbb{R}}
\newcommand\BP{\mathbb{P}}
\newcommand\BC{\mathbb{C}}
\newcommand\BZ{\mathbb{Z}}
\newcommand\BN{\mathbb{N}}
\newcommand\q{\mathfrak{q}} 
\newcommand\cN{\mathcal{N}}
\newcommand\co{\mathcal{O}} 
\newcommand\cf{\mathcal{F}} 
\newcommand\cz{\mathcal{Z}} 
\newcommand\cd{\mathsf{d}} 
\newcommand\cH{\mathcal{H}} 
\newcommand\bt{{\boldsymbol{t}}} 
\newcommand\bn{\boldsymbol{n}} 
\newcommand\bd{\boldsymbol{d}} 
\newcommand\bgamma{\boldsymbol{\gamma}}
\newcommand\bT{\boldsymbol{T}} 
\newcommand\fock{\mathrm{Fock}} 
\newcommand\fp{\mathrm{FP}} 
\newcommand\shift{T^\dagger} 
\renewcommand\hat\widehat
\DeclareMathOperator{\cD}{\mathcal{D}} 
\DeclareMathOperator{\Ga}{\Gamma} 
\DeclareMathOperator{\ch}{ch} 
\DeclareMathOperator{\pd}{PD} 
\DeclareMathOperator{\Li}{Li} 
\DeclareMathOperator{\Lie}{Lie} 
\DeclareMathOperator\Cone{Cone} 
\DeclareMathOperator\PF{\mathcal{L}} 
\newcommand\tot{} 
\title{From equivariant volumes to equivariant periods}
\author[a]{Luca Cassia}
\author[a,b]{Nicolò Piazzalunga}
\author[a]{Maxim Zabzine}
\affil[a]{Department of Physics and Astronomy, Uppsala University}
\affil[b]{NHETC and Department of Physics and Astronomy, Rutgers University}
\date{}
\begin{document}

\maketitle

\begin{abstract}
\noindent
We consider generalizations of equivariant volumes of abelian GIT quotients
obtained as partition functions of 1d, 2d, and 3d supersymmetric GLSM
on $S^1$, $D^2$ and $D^2 \times S^1$, respectively.
We define these objects and study their dependence
on equivariant parameters for non-compact toric Kähler quotients.
We generalize the finite-difference equations (shift equations)
obeyed by equivariant volumes to these partition functions. 
The partition functions are annihilated by differential/difference operators
that represent equivariant quantum cohomology/K-theory relations of the target
and the appearance of compact divisors in these relations plays a crucial role 
in the analysis of the non-equivariant limit. 
We show that the expansion in equivariant parameters contains information
about genus-zero Gromov--Witten invariants of the target.

\end{abstract}

\tableofcontents


\section{Introduction}

This work continues our investigation \cite{Nekrasov:2021ked} of
Duistermaat--Heckman localization formula for non-compact toric Kähler 
manifolds.
The original motivation comes from the study of higher-rank K-theoretic
Donaldson--Thomas theory on Calabi--Yau threefolds \cite{DelZotto:2021gzy}.
Let us sketch some ideas, while definitions are given in \cref{sec:setup}.
Consider the Kähler quotient
$X_\bt = \BC^N // U(1)^r$ with charge matrix $Q$. Its equivariant volume
\begin{equation} \label{eq:original-intro}
 \cf (\bt,\e) = \int_{X_\bt} \eu^{\varpi_\bt - H_\e}
\end{equation}
can be computed as a contour integral 
\begin{equation} \label{eq:cont-intro}
 \cf (\bt,\e) = \oint_\jk  \prod_{a=1}^r \frac {\dif \phi_a} {2\pi \ii} \,
 \frac {\eu^{ \sum_a \phi_a t^a }} {\prod_{i=1}^N
 \left(\e_i + \sum_a \phi_a Q^a_i\right)}
 ~.
\end{equation}
Once a chamber for $\bt$ is fixed,
the contour is given by the Jeffrey--Kirwan (JK) prescription.
In general, $\cf (\bt,\e)$ is a function of $\bt$ and the
equivariant parameters $\e$. If $X_\bt$ is compact, then $\cf (\bt,\e)$ is a
regular function around $\e = 0$ and $\cf (\bt,0)$ is a homogeneous polynomial
that encodes the intersection theory of $X_\bt$. If instead $X_\bt$ is not
compact, then $\cf (\bt,\e)$ has singular terms in $\e$ around $\e=0$,
and there is no canonical way to extract a polynomial in $\bt$ that could be
interpreted as intersection polynomial. The quantum mechanical analog of
$\cf (\bt,\e)$ is the equivariant count of states (holomorphic sections of
appropriate line bundles over $X_\bt$), which can be presented as
\begin{equation} \label{intro-KPF}
 \cz (\bT,q) = \sum_{Q \cdot \bn = \bT} \prod_{i=1}^N q_i^{n^i}
 ~,
\end{equation}
where $\bt = \hbar \bT$ and $q=\eu^{-\hbar \e}$. Here the sum is over integer
points inside the momentum polyhedron. The classical limit in $\hbar$ gives
the relation
\begin{equation}
\cf (\bt,\e) = \lim_{\hbar \to 0}~ \hbar^\cd \cz (\bT,q)
\end{equation}
with $\cd = \dim_\BC X_\bt = N-r$. If the manifold $X_\bt$ is compact, then the
sum in \cref{intro-KPF} has a finite number of terms since the momentum
polyhedron is compact. In this case $\cz (\bT,q)$ is a polynomial in $q$ and we 
can set $q=1$. Thus $\cz(\bT, 1)$ is a polynomial in $\bT$'s and its
highest-degree part is the classical intersection polynomial.
If instead $X_\bt$ is non-compact,
then $\cz (\bT,q)$ is a meromorphic function in $q$'s and there is no canonical
non-equivariant limit.
In the non-compact case the structure of $\cf (\bt,\e)$ and $\cz (\bT,q)$ is
controlled \cite{Nekrasov:2021ked} by the action of compact support cohomology
$H^\bullet_\cmp (X_\bt)$ on de Rham cohomology $H^\bullet (X_\bt)$.
If $H^2_\cmp (X_\bt)$ is non-empty, then the problem is controlled
by compact toric divisors
(which are Poincaré dual to elements of $H^2_\cmp (X_\bt)$).
This results in the shift equation
\begin{equation} \label{shift-eq-intro}
 \big( 1 - \eu^{-\sum_{i\in I_\cmp} m^i \cD_i} \big) \cf (\bt,\e) =
 \wp_\cd (\bt, m) + O(\e)
 ~,
\end{equation}
where $\cD_i= \e_i + Q^a_i \frac{\partial}{\partial t^a}$ are first-order 
differential operators in $\bt$ associated to divisors $D_i$ and $m^i$ are
auxiliary parameters. The sum runs over the set of compact toric divisors. This
equation allows us to define the intersection polynomial in a non-canonical way,
which requires a non-canonical embedding of $H^2_\cmp (X_\bt)$ into
$H^2 (X_\bt)$.
Another way to look at \cref{shift-eq-intro} is to present
$\cf (\bt, \e)$ as a sum of singular and regular terms
\begin{equation} \label{intro-reg-sing}
 \cf (\bt, \e) = \cf_\sing (\bt, \e) + p_\cd (\bt) + O(\e)
 ~,
\end{equation}
which cannot be done canonically, as there is always a trade-off between
$\cf_\sing (\bt, \e)$ and $p_\cd (\bt)$. Here $\cf_\sing (\bt, \e)$
is in the kernel of $\cD_i$ for all compact divisors.  
\Cref{shift-eq-intro} allows us to analyze
possible ambiguities in the representation via \cref{intro-reg-sing}.
A similar shift equation exists for $\cz (\bT,q)$ and can be analyzed similarly.
We can consider more general cases with the insertion of an equivariant 
cohomology class in \cref{eq:original-intro,eq:cont-intro}
\begin{equation}
 \cf_\alpha (\bt, \e)
 = \int_{X_\bt} \eu^{\varpi_\bt - H_\e} \alpha (R_\eq) = 
 \oint_\jk  \prod_{a=1}^r \frac {\dif \phi_a} {2\pi \ii} \,
 \frac {\eu^{\sum_a \phi_a t^a }} {\prod_{i=1}^N
 \left(\e_i + \sum_a \phi_a Q^a_i\right)} \alpha(\phi,\e)
\end{equation}
with $\alpha$ being a suitable function
of the equivariant curvature $R_\eq$.
The object $\cf (\bt, \e)$ can be
regarded as a generating function for such insertions, since they can be
generated by derivatives in $\bt$'s. The previous discussion of the
behavior around $\e=0$ can be extended to $\cf_\alpha (\bt, \e)$ and 
there is an analog of the shift equation for $\cf_\alpha (\bt, \e)$
on non-compact quotients.

Our goal is to extend these ideas
to more complicated objects such as the partition function on the
disk $\cf^D (\bt,\e;\lambda)$ and its K-theoretic generalization
$\cz^D (\bT,q;\q)$. 
What is the role of equivariant parameters in these generalizations? Is there 
an analog of shift equation?
How to extract a non-equivariant answer from the fully equivariant answer
and what are the possible ambiguities? What is the impact of these 
considerations on enumerative geometry of non-compact toric Kähler manifolds?

In this paper we study a generalization of the equivariant volume 
\cref{eq:cont-intro} 
\begin{equation} \label{eq:cont-gamma-intro}
 \cf^D (\bt,\e; \lambda) := \lambda^{-N} \oint_\qjk \prod_{a=1}^r \frac {\dif 
\phi_a} {2\pi \ii} \,
 \eu^{\sum_a \phi_a t^a } \prod_{i=1}^N \Ga
 \left (\frac{\e_i + \sum_a \phi_a Q^a_i }{\lambda} \right )
~,
\end{equation}
where the contour is specified by the quantum Jeffrey-Kirwan prescription,
discussed in \cref{sec:disk-functions}.
Physically, \cref{eq:cont-gamma-intro} is the partition function of a
(twisted) gauged linear sigma model (GLSM)
with worldsheet a disk and boundary condition a space-filling brane
\cite{Hori:2013ika, Sugishita:2013jca, Honda:2013uca},
based on earlier works \cite{Benini:2012ui, Doroud:2012xw} on $S^2$. 
The parameter $\lambda$ is an equivariant parameter on the disk, such that 
\begin{equation}
 \lim_{\lambda \rightarrow \infty} \cf^D (\bt,\e; \lambda) = \cf (\bt,\e)
\end{equation}
as we discuss in \cref{sec:expansions},
and the parameters $\e$'s are masses in the GLSM
(they are equivariant parameters from the target view-point).
We refer to $\cf^D (\bt,\e; \lambda)$ as the disk partition function.

In analogy with $\cf (\bt,\e)$, the disk partition function 
$\cf^D (\bt,\e;\lambda)$ has a K-theoretic lift, which we denote by
$\cz^D (\bT,q;\q)$, with $\q = \eu^{-\hbar \lambda}$.
This reduces to the known count when we collapse the disk,
$\cz^D (\bT,q;1) = \cz(\bT, q)$. In \cref{sec:BPScount} we discuss
the contour integral representation of $\cz^D (\bT,q;\q)$
and the equivalent representation given by the sum
\begin{equation} \label{intro-KDPF}
 \cz^D (\bT,q;\q) =
 \sum_{Q \cdot \bn = \bT} \prod_{i=1}^N \frac {q_i^{n^i}} {(\q;\q)_{n^i}}
 ~,
\end{equation}
which is the natural disk generalization of \cref{intro-KPF} and has a nice 
combinatorial interpretation.
By construction we have the relation (see \cref{sec:expansions})
\begin{equation}
 \lim_{\hbar \to 0} \hbar^\cd \cz^D(\bT,q;\q) =
 \cf^D (\bt, \e; \lambda)
 ~.
\end{equation}
The K-theoretic disk partition function $\cz^D (\bT,q;\q)$ is
the partition function on $D\times S^1$ of the 3d uplift of a 2d GLSM,
and it is related to holomorphic blocks \cite{Beem:2012mb}.

The function $\cf^D (\bt, \e; \lambda)$ is regular around $\e=0$
for compact quotients and singular for non-compact quotients.
The main issue is how to control the singular terms.
For every compact toric divisor, its equivariant volume $\cD_i \cf (\bt,\e)$
is regular around $\e=0$.
A priori, we cannot expect this to hold
for $\cD_i \cf^D (\bt,\e;\lambda)$,
since there is no geometric interpretation of this object. However,
we find that $\cD_i \cf^D (\bt,\e;\lambda)$ is regular at $\e=0$
for every compact divisor $\cD_i$.
Thus, we have a shift equation for the disk partition function 
\begin{equation} \label{shift-eq-disk-intro}
 \big( 1 - \eu^{-\sum_{i\in I_\cmp} m^i \cD_i} \big) \cf^D (\bt,\e;\lambda)
 = \text{regular}
\end{equation}
as well as a K-theoretic generalization of this equation. 
We explain these ideas in \cref{sec:shift}.

The disk partition function is the solution of
equivariant Picard--Fuchs (PF) equations
\begin{equation}
 \PF^\eq_\gamma \cf^D (\bt,\e;\lambda) = 0
\end{equation}
with prescribed semi-classics
\begin{equation}
 \cf^D (\bt,\e;\lambda) = \int_{X_{\bt}} \eu^{\varpi_{\bt}-H_{\e}}
 \hat \Ga_\eq + O(\eu^{-\lambda t})
 ~,
\end{equation}
where we insert the equivariant Gamma-class.
The equivariant PF differential operator $\PF^\eq_\gamma$
encodes quantum equivariant cohomology relations.
It depends on geometric data, on $\lambda$ and on $\e$'s.
If we send $\lambda \to \infty$, then $\PF^\eq_\gamma$ collapses to 
the classical equivariant cohomology relations.  
If instead we set all $\e=0$, then it becomes the standard PF operator.
(In the K-theoretic case,
quantum equivariant cohomology relations $\PF^\eq_\gamma$
are promoted to difference equations.)
The disk partition function $\cf^D (\bt,\e;\lambda)$ can be generalized
by changing the semi-classical expansion
and still requiring it to be annihilated by $\PF^\eq_\gamma$
\begin{equation}
 \PF^\eq_\gamma \cf^D_\alpha (\bt,\e;\lambda)=0~, \quad
 \cf^D_\alpha (\bt,\e;\lambda) = \cf_\alpha (\bt,\e) + O(\eu^{-\lambda t}) 
 ~.
\end{equation}
This way we can find a basis of solutions to equivariant PF 
equations (which we regard as equivariant periods).
To understand the singularities in $\e$'s we follow Givental's approach
\cite{MR1408320, Givental:9701016} to mirror symmetry and use the formalism
of Givental's equivariant $I$-function $I_{X_{\bt}}$
(and the corresponding Givental's operator $\hat I_{X_\bt}$)
to represent the disk partition function
\begin{equation}
 \cf^D (\bt, \e; \lambda)
 = \lambda^{-N}\oint_\jk \prod_{a} \frac{\dif\phi_a}{2\pi\ii}
 I_{X_{\bt}} \prod_{i}\Ga\left (\frac{\e_i + \sum_a \phi_a Q^a_i 
 }{\lambda} \right ) = \hat{I}_{X_{\bt}} \cdot \cf_{\Ga} (\bt, \e )
 ~.
\end{equation}
These ideas are discussed in \cref{sec:quantum-coh}.

In analogy with \cref{intro-reg-sing} we can represent the disk 
partition function as 
\begin{equation} \label{intro-sing-reg}
 \cf^D (\bt,\e;\lambda) =
 \cf^D_\sing (\bt,\e;\lambda) + \cf^D_\reg (\bt, \e ;\lambda)
 ~,
\end{equation}
where the singular term $\cf^D_\sing (\bt,\e;\lambda)$ is in the kernel of 
compact divisor operators $\cD_i$. 
This splitting is non-canonical and it requires some choices.
In \cref{sec:regularization} we study the relation between the shift equation 
and equivariant quantum cohomology relations encoded
in the equivariant PF equations.
The appearance of compact divisors in the equivariant Givental function is
related to the possible ways of calculating the splitting \cref{intro-sing-reg}.

Our function $\cf^D (\bt,\e;\lambda)$, being a GLSM quantity,
is related to the count of quasi-maps
\cite{Shadchin:2006yz, Bullimore:2018jlp, Ciocan_Fontanine_2010, MR3126932}
from the formal disk to a target $X_\bt$.
However, there's a difference: rather than a fixed boundary
condition at infinity for the adjoint scalar,
we sum over all possible choices, compatible with symmetries,
in a sense that is made precise in \cref{fprmk},
and the object we are computing is closer to the central charge
of a brane \cite{Aleshkin:2022eop, Knapp:2020oba}.
These are UV calculations.
After integrating out gauge fields, the theory of quasi-maps flows
in the IR to a non-linear sigma model, counting stable maps to the same target.
Turning on the $\Omega$-background $\lambda$ corresponds to
equivariant GW theory \cite{MR1408320}
on $X_\bt \times \BP^1$, counting maps of bidegree $(d,1)$,
with an $S^1$ action on $\BP^1$.  
In this work, we concentrate on structural aspects of
$\cf^D (\bt,\e;\lambda)$ and $\cz^D (\bT,q;\q)$ 
(and other generalizations, e.g.\ $\cf^D_\alpha (\bt,\e;\lambda)$)
for toric non-compact manifolds
and base our considerations on the integral representations and on
the equivariant Picard--Fuchs equation
(or its K-theoretic lift \cite{Givental:2016afst6, Givental:many, Lee:2001qkt}).

When the target is a Calabi--Yau three-fold, the RG flow corresponds to mirror
symmetry \cite{Witten:1993yc, Morrison:1994fr, Losev:1999nt}
and the semi-classical expansion of $\cf^D$
coincides with the central charge of a single D6-brane wrapping
$X_\bt \times S^1$ near large radius \cite{Hori:2013ika}, which is the natural
candidate for the classical action of DT theory \cite{DelZotto:2021gzy},
so it is natural to conjecture a relation to Gromov--Witten (GW) invariants.
In \cref{sec:Gromov-Witten} we show how to extract closed genus-zero GW
invariants from $\cf^D (\bt,\e;\lambda)$, or more precisely from 
$\cf^D_\reg (\bt, 0 ;\lambda)$, in the spirit of the relation
between GLSM localization calculations on $S^2$ 
and genus-zero closed GW invariants \cite{Jockers:2012dk,Bonelli:2013mma}.
The ambiguities in $\cf^D_\reg (\bt, 0 ;\lambda)$ translate into ambiguities
for GW invariants (but not for all spaces). 
We trace these ambiguities to some old issues for some of the 
examples in ref.~\cite{Chiang:1999tz},
where some rational Gopakumar--Vafa invariants appear.
We explain how, within our framework,
certain instanton sectors cannot be trusted when we 
take the non-equivariant limit, as certain quantum equivariant 
cohomology relations do not contain compact divisors.

After presenting the general theory, we go through a number of examples.
There are cases when all quantum equivariant cohomology relations contain
compact divisors and thus all singular terms sit within the semi-classical part,
for example local $\BP^1 \times \BP^1$ and local $\BP^2$. 
There can be other cases when some of the quantum equivariant 
cohomology relations do not contain 
compact divisors, and thus singular terms appear in specific parts of the 
instanton expansion, for example local $F_2$ and local $A_2$ spaces. 
We collect the examples with compact divisors in \cref{sec:examples1}.
In \cref{sec:examples2} we present a few examples without compact divisors.

\section{The setup} \label{sec:setup}

Let $\sfa = U(1)^r$ be a torus of rank $r$ acting on $\BC^N$
via an integer-valued matrix of charges $Q$
\begin{equation}
 Z^i \mapsto \eu^{\ii \sum_{a=1}^r \vartheta_a Q^a_i} Z^i,
 \quad i=1,\ldots,N
\end{equation}
for real variables $\vartheta_a$
and holomorphic coordinates $Z^i$ on $\BC^N$.
The corresponding momentum map is $\mu: \BC^N \to \BR^r = (\Lie \sfa)^*$
\begin{equation} \label{eq:moment-map}
 \mu^a (Z,\bar Z) = \sum_{i=1}^N Q_i^a |Z^i|^2, \quad a=1,\dots,r
.\end{equation}
Let $\bt = (t^1,\dots,t^r) \in \BR^r$ be a regular value for $\mu$,
and $\chamber \subseteq (\Lie \sfa)^*$ an open connected subset
of the set of regular values, containing $\bt$.
We call $\chamber$ a \emph{chamber}.
We consider toric Kähler manifolds
of complex dimension $\cd=N-r$ obtained by symplectic reduction
\begin{equation} \label{eq:symplectic-quotient}
 X_{\bt} = \mu^{-1}(\bt) / \sfa
 ~.
\end{equation}
They are equipped with a symplectic form $\varpi_\bt$.
The Kähler moduli space is partitioned into disjoint chambers,
such that two manifolds $X_{\bt}$ and $X_{\bt'}$ are
symplectomorphic iff $\bt$ and $\bt'$ are in the same chamber.
We define the dual of the cone $\chamber$ 
\begin{equation}
 \chamber^\vee := \left\{\bd\in\BR^r\,\middle|\,\sum_{a=1}^r
 d_a t^a\geq 0, \quad\forall \bt\in\chamber\right\}
 ~.
\end{equation}

We require $X_\bt$ to be smooth, which is equivalent
\cite{MR1362837,MR2091310} to the requirement that
any $r \times r$ minor of $Q$, such that $\bt$ lies
in the convex span of its columns, has determinant $\pm 1$.

On $X_\bt$ we have a non-faithful action of $\sft=U(1)^N$
inherited from the standard action on $\BC^N$,
whose matrix of charges is the $N \times N$ identity matrix.
The corresponding momentum maps are $p^i (Z,\bar Z) = |Z^i|^2$,
for $i=1,\dots,N$.
We define $\e_i \in H^2_\sft (\BC^N)$ to be the equivariant parameter
corresponding to the action of the $i$-th factor in $\sft$,
while $\phi_a \in H^2_\sfa(\BC^N)$ the one corresponding
to the action of the $a$-th factor in $\sfa$.
The variables $\phi_a$ descend to generators of $H^2 (X_{\bt})$
and they correspond to Chern roots of $r$ tautological line bundles
associated to the toric fibration $\mu^{-1}(\bt) \to X_{\bt}$.
We package momentum maps and equivariant parameters together, by writing
$\mu_\phi := \sum_{a=1}^r \phi_a \mu^a$ and $H_\e := \sum_{i=1}^N \e_i p^i$.
We introduce equivariant Chern roots
$x_i := \e_i + \sum_{a=1}^r \phi_a Q^a_i \in H^2_{\sft} (X_{\bt})$.
The Kähler moduli $t^a = \int_{C^a} \varpi_\bt$ can be obtained by
integrating the symplectic form $\varpi_\bt$
on a basis of cycles $C^a\in H_2(X_{\bt})$ dual to the classes $\phi_a$.

The equivariant cohomology\footnote
{If instead we work with the $\cd$-dimensional torus $\sft/\sfa$,
we have the isomorphism \cite{MR1057441}
\begin{equation}
 H_{\sft/\sfa}^\bullet (X_\bt) \cong \BC[x_1,\ldots,x_N] /
 I_\mathrm{SR}
.\end{equation}
This isomorphism identifies any variable $x_i$ with
the equivariant Chern class of toric divisor $D_i$.}
ring
\begin{equation} \label{eq:coho-ring}
 H_\sft^\bullet (X_\bt) \cong \BC[\phi_1,\dots,\phi_r,\e_1,\dots,\e_N]
 / I_\mathrm{SR}
\end{equation}
is isomorphic to the quotient of
the $(\sfa \times \sft)$-equivariant cohomology of $\BC^N$
by the Stanley--Reisner ideal $I_\mathrm{SR}$
generated by square-free monomials in the Chern roots
\begin{equation}
 I_\mathrm{SR} = \left\langle x_{i_1} \cdots x_{i_s}
 \,\middle| \, \Cone (u^{i_1},\dots,u^{i_s})
 \, \text{is not a cone of $\Sigma$}
 \right\rangle
 ~,
\end{equation}
where $\Sigma$ is the toric fan of $X_{\bt}$ generated by the vectors
$u^i\in\BZ^{N-r}$ defined by the property
\begin{equation}
 \sum_{i=1}^N Q^a_i u^i = 0
 ~.
\end{equation}
To each coordinate in $\BC^N$, we can associate a toric divisor
$D_i = \{ p^i = 0 \} \cap X_\bt$,
obtained as the symplectic reduction of the locus
where that coordinate is identically zero.
A toric divisor $D_i$ is compact if its corresponding
vertex $u^i$ is an interior point of the toric fan $\Sigma$.
Let us introduce the set
\begin{equation}
 I_{\cmp} := \{ i | \, D_i \text{ is compact} \}
 ~.
\end{equation}
We identify the equivariant Chern root $x_i\in H^\bullet_\sft(X_\bt)$ as the 
image of
$1 \in H^\bullet_\sft(D_i)$ under pushforward along the inclusion
$D_i \hookrightarrow X_\bt$.
In the non-equivariant setting, compact toric divisors in $H_{2d-2} (X_\bt)$
are Poincaré-dual to classes in cohomology with compact support
$H^2_\cmp (X_\bt)$, and similarly lower-dimensional compact cycles are dual
to higher-degree classes in $H^\bullet_\cmp (X_\bt)$.
In the equivariant setting, we regard $x_i$ as the equivariant upgrade
of the Poincaré dual of $D_i$, and
we use the fact that Poincaré duality send intersections to products as
$\pd(D_{i_1} \cap \dots \cap D_{i_s}) = x_{i_1} \cdots x_{i_s}$.
With a slight abuse of notation we use the same symbol for
equivariant and non-equivariant Poincaré duality.

The equivariant K-theory ring of $X_\bt$
\begin{equation}
K_\sft (X_{\bt}) \cong \BC[w_1^\pm,\dots,w_r^\pm,q_1^\pm,\dots,q_N^\pm]
 / I^K_\mathrm{SR}
\end{equation}
is described in terms of equivariant K-theoretic parameters
$w_a \in K_\sfa(\BC^N)$ and $q_i \in K_\sft(\BC^N)$.
It is isomorphic to the quotient of the $\sfa \times \sft$-equivariant
K-theory of $\BC^N$ by the ideal
\begin{equation} \label{eq:K-theory-rel}
 I^K_\mathrm{SR} = \left\langle (1-q_{i_1}\prod_{a} w_a^{Q^a_{i_1}}) \cdots
 (1-q_{i_s}\prod_{a} w_a^{Q^a_{i_s}})
 \,\middle| \, \Cone (u^{i_1},\dots,u^{i_s})
 \, \text{is not a cone of $\Sigma$}
 \right\rangle
\end{equation}
generated by polynomials in the K-theoretic Chern roots.

A toric quotient $X_\bt$ is Calabi--Yau (CY) iff
the first Chern class of its tangent bundle is zero,
which is equivalent to the requirement that the charges for each
$U(1)_a$ sum to zero,
\begin{equation}
 c_1(TX_\bt) = 0 \Longleftrightarrow \sum_{i=1}^N Q_i^a = 0~,
 \quad \forall a
 ~.
\end{equation}
From this constraint on the charges, it follows that all toric CYs are
non-compact, which implies that their volume is divergent. This forces us to
work equivariantly with respect to the torus $\sft$, so that equivariance
effectively regularizes all integrals over $X_\bt$.

\subsection{Cohomological partition function}

We compute equivariant symplectic volumes as integrals over $\sfa$-equivariant
parameters that implement the symplectic quotient
\begin{equation} \label{eq:eq-volume-informal}
 \int_{X_{\bt}} \eu^{\varpi_\bt-H_{\e}} \sim
 \int_{\BC^N} \prod_{i=1}^N \frac {\dif Z^i \dif\bar{Z}^i} {2\pi\ii}
 \int_{(\ii\BR)^r} \prod_{a=1}^r \frac {\dif \phi_a} {2\pi \ii} \,
 \exp \left[ \sum_a \phi_a t^a - H_{\e} - \mu_\phi \right]
 ~.
\end{equation}
If we perform the $Z^i$ integrals first, we can use the identity
\begin{equation}
\label{eq:coho-integral}
 \int_{\BC} \frac {\dif Z^i \dif\bar{Z}^i} {2\pi\ii}
 \exp \left[ - H_{\e} - \mu_\phi \right]
 = \int_0^\infty \dif p^i
 \eu^{- x_i p^i} = \frac{1}{x_i}
\end{equation}
and we are led to the following integral representation for the
equivariant volume
\begin{equation} \label{eq:cont-F-def}
 \cf (\bt,\e) := \oint_\jk \prod_{a=1}^r \frac {\dif \phi_a} {2\pi \ii} \,
 \frac {\eu^{\sum_{a} \phi_a t^a }} {\prod_{i=1}^N x_i}
 ~,
\end{equation}
where $(\ii \BR)^r$ is replaced by a contour defined via the Jeffrey--Kirwan
prescription \cite{MR1318878, MR1710758}.
The contour is defined in such a way that the integral can be computed
by iterated residues. The residues are specified by
arrangements of hyperplanes in $\BC^r$, i.e.\ choices of $r$-tuples of indices
$(i_1,\dots,i_r)$ that specify which of the denominators go to zero at the pole.
The JK prescription then says that the poles to be taken are those
for which the cone spanned by vectors $Q_{i_1},\dots,Q_{i_r}$
contains the chamber $\chamber$. Then we can define
\begin{equation}
\label{eq:JKpoles}
 \jk:=\left\{(i_1,\dots,i_r)\,\middle|\,
 \chamber \subseteq \Cone (Q_{i_1},\dots,Q_{i_r})\right\}
 ~.
\end{equation}
With this JK prescription for the residue computation, we can rewrite the
integral for $\cf$ via a fixed-point formula of Duistermaat--Heckman type
\begin{equation}
\label{eq:fpsum}
 \cf(\bt,\e) = \sum_{p \in \fp} \eu^{-H_\e(p)}
 \frac 1 {\prod_{j \notin p} \ve_j (p)}
 ~,
\end{equation}
where we identify JK poles with fixed points in $X_{\bt}$
\begin{equation}
 \fp \ni p = (i_1,\dots,i_r) \in \jk
 ~.
\end{equation}
The smoothness of $X_\bt$ allows us to invert the matrix
\begin{equation}
 Q_p = (Q_{i_1}|\dots|Q_{i_r}) \in \mathrm{SL} (r,\BZ)
\end{equation}
at each fixed point.\footnote
{To invert this matrix, it is sufficient that fixed points
are isolated. Smoothness implies that $\det Q_p = \pm 1$.}
At a JK pole the variables $\phi_a$ evaluate to
\begin{equation}
 \phi_a\equiv\phi_a(p) = -\sum_{b=1}^r \e_{i_b} (Q_p^{-1})^b_a
 ~.
\end{equation}
The local Hamiltonian
\begin{equation} \label{eq:local-hamiltonian}
 H_\e(p) = \sum_{a,b=1}^r \e_{i_b} (Q_p^{-1})^b_a t^a
\end{equation}
is a linear function of $\bt$ and $\e$,
obtained by evaluating $H_\e$ at the fixed point,
and the $\ve_i (p)$ are the weights of the normal bundle
to the fixed point w.r.t.\ the $\sft$-action
\begin{equation}
 \ve_{j}(p) = \e_{j}
 -\sum_{a,b=1}^r \e_{i_b} (Q_{p}^{-1})^b_a Q^a_j~,\quad
 \text{for}\, j=1,\ldots,N, \, j \notin p
 ~.
\end{equation}

The Kähler moduli $t^a$ are defined as conjugate variables to $\phi_a$'s,
therefore we can formally identify the equivariant Chern roots $x_i$
with the differential operators
\begin{equation}
\label{eq:differential-operators}
 \cD_i := \e_i + \sum_a Q^a_i \frac{\partial} {\partial t^a}
 ~.
\end{equation}
Acting with $\cD_i$ on the volume $\cf (\bt,\e)$ corresponds to
inserting $x_i$ in the integral in \cref{eq:cont-F-def}
\begin{equation} \label{eq:divisor-insertion}
 \cD_{i_1}\cdots\cD_{i_s} \cf(\bt,\e)
 = \int_{X_\bt}\eu^{\varpi_\bt-H_\e}x_{i_1}\cdots x_{i_s}
 ~,
\end{equation}
which computes the intersection number of the divisors $D_{i_1},\dots,D_{i_s}$.

Suppose $x_{i_1} \cdots x_{i_s}$ is a monomial in the ideal $I_\mathrm{SR}$ of
cohomology relations and therefore a zero element in the cohomology of $X_\bt$,
then we must have
\begin{equation}
\label{eq:classical-cohomology}
 \cD_{i_1}\cdots\cD_{i_s} \cf(\bt,\e) = 0
 ~,
\end{equation}
therefore $\cf(\bt,\e)$ is a D-module for the equivariant cohomology of $X_\bt$.

\subsection{K-theoretic partition function}

The natural generalization of the volume $\cf(\bt,\e)$ to K-theory is obtained
by computing the partition function of a supersymmetric QM on $S^1$ with target
space $X_{\bt}$. We can represent it as 1d GLSM with $N$ chiral fields
charged under the gauge symmetry $\sfa$ and flavor symmetry $\sft$.
Introduce K-theoretic equivariant parameters
\begin{equation} \label{eq:K-th-param}
 w_a = \eu^{-\hbar\phi_a} \in K_{\sfa}(\BC^N)~,
 \quad q_i = \eu^{-\hbar\e_i} \in K_{\sft} (\BC^N)
 ~,
\end{equation}
where $\hbar$ is the radius of $S^1$. The partition function of the QM is
the contour integral
\begin{equation}
 \cz (\bT,q) := \hbar^{-\cd} \oint_\jk \prod_{a=1}^r
 \frac {\dif\phi_a} {2\pi \ii} \,
 \frac {\eu^{\sum_a \phi_a t^a}} {\prod_{i=1}^N x_i}
 \prod_{i=1}^N \frac {\hbar x_i}{1 - \eu^{-\hbar x_i}}
\end{equation}
or equivalently, using the exponentiated parameters of \cref{eq:K-th-param},
\begin{equation}
 \cz (\bT,q) = (-1)^r\oint_\jk \prod_{a=1}^r
 \frac{\dif w_a} {2\pi \ii w_a} \, w_a^{-T^a} \,
 \frac{1}{\prod_{i=1}^N \left( 1 - q_i\prod_a w_a^{Q^a_i} \right)}
 ~,
\end{equation}
where $T^a = t^a / \hbar$ are rescaled Kähler moduli satisfying the 
quantization
condition $T^a \in \BZ$. The contour picks up the same poles as in the
cohomological setting, namely
\begin{equation}
 w_a\equiv w_a(p) = \prod_{b=1}^r q_{i_b}^{-(Q_p^{-1})^b_a}~,\quad
 \text{for } p=(i_1,\dots,i_r)\in\jk
\end{equation}
with JK defined as in \cref{eq:JKpoles}.

In analogy with \cref{eq:coho-integral} we can write the identity
\begin{equation}
 \sum_{n^i=0}^\infty \eu^{- \hbar x_i n^i}
 = \frac{1}{1 - \eu^{-\hbar x_i}}
 ~,
\end{equation}
so that one can interpret the infinite sum $\sum_{n^i=0}^\infty$ as the
``quantization'' of the integral over momenta $\int_0^\infty \dif p^i$ where
formally $\hbar n^i=p^i$.

By Hirzebruch--Riemann--Roch theorem, the index $\cz(\bT,q)$ is the push-forward
to the point of the K-theory class of a line bundle $L_{\bT}$ represented by
$\prod_a w_a^{-T^a}$, i.e.\ $\cz(\bT,q) = \chi (X_{\bt}, L_{\bT})$.
The K-theoretic version of Duistermaat--Heckman localization formula gives
\begin{equation} \label{eq:fpsumK-th}
 \cz(\bT,q) = \sum_{p \in \fp} \eu^{-H_\e(p)}
 \frac 1 {\prod_{j \notin p} \left(1-\eu^{-\hbar\ve_{j}(p)}\right)}
 ~.
\end{equation}

Similarly to \cref{eq:differential-operators}, we define difference operators
\begin{equation} \label{eq:delta-operator}
 \Delta_i := \eu^{-\hbar\cD_i} = q_i \prod_a (\shift_a)^{-Q^a_i}
 ~,
\end{equation}
where $\shift_a$ is the shift operator that acts by shifting $T^a$ by $1$,
\begin{equation}
 \shift_a f (T^1,\dots,T^r) = f (T^1,\dots,T^a+1,\dots,T^r)
 ~.
\end{equation}
Insertions of equivariant K-theory classes
$L_i:=\eu^{-\hbar x_i} = q_i \prod_a w_a^{Q_i^a}$,
the class of the line bundle corresponding to divisor $D_i$,
can be realized by acting with operators $\Delta_i$
\begin{equation}
 \Delta_i \cz(\bT,q) = \chi(X_{\bt},L_{\bT} \otimes L_i)
 ~.
\end{equation}
Similarly we have
\begin{equation}
 (1-\Delta_i) \cz(\bT,q) =
 \chi(X_{\bt},L_{\bT} \otimes \Lambda_{-1}^\bullet L_i)
 ~,
\end{equation}
where $\Lambda^\bullet_{-1}$ is the exterior power operator $\Lambda^\bullet_y V
:= \oplus_{n=0}^\infty y^n \Lambda^n V$, so that $\Lambda_{-1}^\bullet
L_i=(1-L_i)$. These identities are the K-theory
analogue of \cref{eq:divisor-insertion}.

To every relation in the equivariant K-theory of $X_\bt$, there corresponds an
element of the ideal $I_\mathrm{SR}^K$ defined in
\cref{eq:K-theory-rel}, to which we can associate a
finite difference equation for the partition function $\cz(\bT,q)$
\begin{equation}
\label{eq:classical-K-th}
 (1-\Delta_{i_1})\cdots(1-\Delta_{i_s}) \cz(\bT,q) = 0
\end{equation}
for $(1-L_{i_1})\cdots(1-L_{i_s})\in I^K_\mathrm{SR}$.

\section{The theory on the disk}
\label{sec:disk-functions}

We reviewed the construction of GLSM partition functions
on the point and on $S^1$.
In this section we uplift them to the backgrounds $D^2$ and $D^2 \times S^1$.
The space of fields now admits an additional $U(1)$ action
associated to rotations of the disk,
to which we assign an equivariant parameter $\lambda \in H^2_{U(1)} (D^2)$.
This is equivalent to an $\Omega$-background on the disk.
In the K-theoretic setup we define the variable
$\q = \eu^{-\hbar\lambda} \in K_{U(1)} (D^2)$, which acts as a fugacity for
the $U(1)$-symmetry in the counting of BPS states. The disk is fibered over
$S^1$ with holonomy $\q$, which corresponds to the $\Omega$-background
for 3d supersymmetric theories \cite{Beem:2012mb, Dimofte:2010tz}.

\subsection{Cohomological disk partition function}

We start by analyzing the 2d GLSM case.
Supersymmetric localization of $\cN=(2,2)$ theories on $D^2$
indicates that one-loop determinants of free chiral fields contribute as
$\lambda^{-1} \Ga \left( x_i / \lambda \right)$ and the partition
function of the GLSM is defined as follows.
\begin{definition}
The disk partition function is given by the integral
\begin{equation}
\label{eq:cont-F-def-D2}
 \cf^D (\bt,\e;\lambda) := \lambda^{-N}
 \oint_\qjk \prod_{a=1}^r \frac {\dif\phi_a} {2\pi\ii} \,
 \eu^{\sum_{a} \phi_a t^a } \prod_{i=1}^N
 \Ga \left( \frac{x_i}{\lambda} \right)~,
\end{equation}
where we define the \textit{Quantum Jeffrey-Kirwan} (QJK) contour via a
generalization of the JK prescription in the following way.
Every $\Ga$-function has a classical pole associated to the hyperplane
$x_i = 0$, corresponding to the same pole in \cref{eq:cont-F-def}.
To each such pole corresponds a tower of poles at $x_i + \lambda k = 0$ for
$k \in \BZ_{>0}$.
These integral shifts of the hyperplanes can be re-absorbed in a redefinition of
the corresponding $\e_i$, to which the JK prescription is blind. Hence,
if a classical pole is inside the classical JK contour,
then it is also picked up by the QJK contour and its infinite tower of higher
poles is picked up as well.
If instead a classical pole is not in the JK contour,
then that pole and its tower of higher poles are not in the QJK contour.
More concretely, we define the quantum JK poles as
\begin{equation}
 \qjk := \jk\times\BZ_{\geq0}^r
\end{equation}
so that at a QJK pole the variables $\phi_a$ evaluate to
\begin{equation}
 \phi_a\equiv\phi_a(p,k) = -\sum_{b=1}^r (\e_{i_b}+\lambda k_b) (Q_p^{-1})^b_a~.
\end{equation}
\end{definition}
\begin{remark}
From the definition of the QJK contour it follows that the disk partition
function $\cf^D(\bt,\e;\lambda)$ can be written via the fixed-point formula
\begin{equation} \label{eq:localization-Fdisk}
 \cf^D(\bt,\e;\lambda) = \sum_{k \in \BZ^r_{\geq0}}
 \frac {(-1)^{\sum_{i=1}^r k_i}} {\prod_{i=1}^r k_i!}
 \sum_{p \in \fp} \eu^{-H_\e(p,k)}
 \prod_{j \notin p} \Ga \left(\frac{\ve_{j}(p,k)}{\lambda}\right)~,
\end{equation}
where the Hamiltonian and local weights at $p \in \fp$ get shifted by $k$ as
\begin{equation}
 H_\e(p,k) := H_\e(p)
 + \lambda \sum_{a,b=1}^r k_b (Q_p^{-1})^b_a t^a~,
\end{equation}
\begin{equation}
 \ve_{j} (p,k) := \ve_{j}(p)
 - \lambda \sum_{a,b=1}^r k_b (Q_p^{-1})^b_a Q^a_j~.
\end{equation}
\end{remark}

The semi-classical part of $\cf^D (\bt,\e;\lambda)$ is the integral over the
classical JK contour
\begin{equation}
 \label{eq:Fscl-Ga-class}
 \cf_{\Ga} (\bt,\e)
 := \oint_\jk \prod_{a=1}^r \frac {\dif\phi_a} {2\pi\ii} \,
 \frac{\eu^{\sum_{a} \phi_a t^a }}{\prod_{i=1}^N x_i}
 \prod_{i=1}^N \Ga \left(1+\tfrac{x_i}{\lambda} \right)
 = \int_{X_\bt} \eu^{\varpi_\bt-H_\e} \hat \Ga (TX_\bt)
\end{equation}
so that we only pick up residues at the classical poles,
while we drop all higher poles.
Since the JK contour avoids the poles of the $\Ga$-function and picks up only
poles of the denominator, the factor $\prod_{i=1}^N \Ga(1+\tfrac{x_i}{\lambda})$
can be seen as the insertion of the $\Ga$-class of $X_{\bt}$
\cite{Iritani:20091016} in the integral for the equivariant volume $\cf$,
hence the notation $\cf_{\Ga}$.
Moreover, as $\cf_{\Ga}$ is a classical integral, it follows that it must
satisfy the same classical cohomology relations as in
\cref{eq:classical-cohomology}. This is however not true for the full disk
function $\cf^D$, which (as we show below) satisfies a quantum deformation
of cohomology relations.

\begin{remark}
We point out a few important properties of
the disk partition function $\cf^D$.
\begin{itemize}
\item The scaling property
\begin{equation}
\cf^D (\bt,\e;\lambda) = \lambda^{-\cd} \cf^D (\lambda\bt,\lambda^{-1}\e;1)~,
\end{equation}
which shows that it is a function of two dimensionless variables,
up to overall scaling.
\item The action of the differential operators $\cD_i$,
defined in \cref{eq:differential-operators},
\begin{equation}
\label{eq:disk-identity-D1}
 \left(\tfrac{\cD_i}{\lambda}\right)_n \cf^D(\bt,\e;\lambda)
 = \eu^{\lambda n\frac{\partial}{\partial\e_i}}
 \cf^D(\bt,\e;\lambda),
 \quad
 n\in\BZ_{\geq0}
\end{equation}
corresponds to shifts of equivariant parameters,
where $\eu^{\lambda\frac{\partial}{\partial\e_i}}$ is the operator
that sends $\e_i$ to $\e_i+\lambda$, and $(z)_n$ is the
Pochhammer symbol in \cref{eq:pochhammer}.
\item One can trade shifts in equivariant parameters
for shifts in Kähler parameters
\begin{equation}
\label{eq:disk-identity-D2}
 \eu^{\lambda \sum_{a,i} \gamma_a Q^a_i\tfrac{\partial}{\partial\e_i}}
 \cf^D(\bt,\e;\lambda) = \eu^{-\lambda \sum_a \gamma_a t^a}
 \cf^D(\bt,\e;\lambda), \quad \bgamma\in\BZ^r~,
\end{equation}
which follows from the change of variables
$\phi_a\mapsto\phi_a-\lambda\gamma_a$ inside the integral.
\end{itemize}
\end{remark}

From the localization formula in \cref{eq:localization-Fdisk} it is evident that
the disk function can be written as a sum of contributions weighted by the
``non-perturbative'' factors $\eu^{-\lambda k_a t^a}$.
These non-perturbative corrections are interpreted as instantonic
contributions to the 2d partition function that vanish in the large volume
limit. For later convenience we introduce the instanton counting
variables $z_a:=\eu^{-\lambda t^a}$ (not to be confused with the coordinates
on $\BC^N$) so that we can write $\cf^D$ as a power series in the $z$'s.

\subsection{K-theoretic disk partition function}

The one-loop determinant of a free chiral in a 3d $\cN=2$ supersymmetric gauge
theory on $D^2\times S^1$ \cite{Beem:2012mb,Yoshida:2014ssa} gives
$(\eu^{-\hbar x_i};\q)_\infty^{-1} = (q_i \prod_a w_a^{Q^a_i};\q)_\infty^{-1}$,
where we define the $\q$-Pochhammer symbol $(z;\q)_d$ as in
\cref{eq:q-pochhammer}.
\begin{definition}
We define the K-theoretic disk partition function
\begin{equation} \label{eq:cont-F-def-D2S1}
 \cz^D (\bT,q;\q) := (-1)^r
 \oint_\qjk \prod_{a=1}^r \frac {\dif w_a} {2\pi\ii w_a} \,
 w_a^{-T^a} \prod_{i=1}^N
 \frac {1} {\left( q_i\prod_a w_a^{Q^a_i};\q \right)_\infty}
\end{equation}
with QJK contour that selects the same poles as \cref{eq:cont-F-def-D2}.
\end{definition}

The partition function $\cz^D (\bT,q;\q)$ is the K-theoretic (3d)
refinement of the (2d) disk partition function $\cf^D (\bt,\e;\lambda)$.
One should think of it as a Witten index
on the space of holomorphic maps from $D^2$ to $X_{\bt}$,
computed via an infinite-dimensional version of Hirzebruch--Riemann--Roch.
Instead of trying to make this picture rigorous,
we use \cref{eq:cont-F-def-D2S1} as the definition of the index
and a simultaneous generalization of $\cf^D (\bt,\e;\lambda)$ and $\cz (\bT,q)$.

To make the connection to the 2d function $\cf^D (\bt,\e;\lambda)$ clear,
we rewrite the integrand in terms of Jackson $\q$-Gamma functions in
\cref{eq:q-gamma}. We then have the identity
\begin{equation} \label{eq:q-Gamma-integral}
 \cz^{D}(\bT,q;\q) =
 \frac{{\hbar^r}(1-\q)^{\sum_{i}\e_i/\lambda-N}}{(\q;\q)_\infty^N}
 \oint_\qjk \prod_{a=1}^{r} \frac{\dif{\phi}_{a}}{2\pi {\ii}} \,
 \eu^{\sum_a \phi_a\left(\hbar T^a+\lambda^{-1}\log(1-\q)\sum_i Q^a_i\right)}
 \prod_{i=1}^N \Ga_{\q} \left( \tfrac{x_i}{\lambda} \right)~,
\end{equation}
where the r.h.s.\ is a $\q$-deformation of the integral in
\cref{eq:cont-F-def-D2}.
If $X_\bt$ is a CY manifold,
as we assume in our examples,
then there is no shift in Kähler moduli
in the r.h.s.\ of \cref{eq:q-Gamma-integral}.

The semi-classical part is the contribution of the classical poles only
\begin{equation}
 \cz_{\Ga_\q} (\bT,q) := (-1)^r
 \oint_\jk \prod_{a=1}^r \frac {\dif w_a} {2\pi\ii w_a} \,
 w_a^{-T^a} \prod_{i=1}^N
 \frac {1} {\left( q_i\prod_a w_a^{Q^a_i};\q \right)_\infty}
\end{equation}
and it satisfies the relation $\cz_{\Ga_\q}(\bT,q) = \cz(\bT,q) + O(\q)$.\
Moreover, we can use the recurrence relation for the $\q$-Gamma in
\cref{eq:recurrence-q-gamma}, to write
\begin{equation}
\begin{aligned}
 \cz_{\Ga_\q} (\bT,q) &= \frac{(1-\q)^{\sum_{i}\e_i/\lambda} (-1)^r}
 {(\q;\q)_\infty^N}
 \oint_\jk \prod_{a=1}^r \frac {\dif w_a} {2\pi\ii w_a} \,
 \frac{w_a^{-T^a}}{\prod_{i=1}^N (1-\eu^{-\hbar x_i})} \prod_{i=1}^N
 \Ga_{\q} \left(1+ \frac{x_i}{\lambda} \right) \\
 &= \frac{(1-\q)^{\sum_{i}\e_i/\lambda}} {(\q;\q)_\infty^N}
 \left[\cz(\bT,q)+O\left(\lambda^{-1}\right)\right]
\end{aligned}
\end{equation}
so that, up to an overall factor, this computes the insertion of the
$\hat \Ga_{\q}$-class of $X_{\bt}$ in the 1d partition function $\cz(\bT,q)$.
From this observation it follows that the semi-classical function
$\cz_{\Ga_\q}$ satisfies the same set of K-theoretic relations as in
\cref{eq:classical-K-th}. This is however not true for the full disk
function $\cz^D$, which satisfies a quantum deformation of the
K-theory relations.

Using \cref{eq:q-pochhammer-difference} we obtain the useful identity
\begin{equation} \label{eq:disk-identity-delta}
 (\Delta_i;\q)_n \cz^D(\bT,q;\q)
 = \q^{n q_i\frac{\partial}{\partial q_i}} \cz^D(\bT,q;\q)~,
 \quad
 n\in\BZ_{\geq0}~,
\end{equation}
where $\Delta_i$ is defined in \cref{eq:delta-operator} and the operator
$\q^{q_i \frac {\partial} {\partial q_i}}$ sends $q_i$ to $\q q_i$.

\section{BPS states counting} \label{sec:BPScount}

We provide an interpretation of K-theoretic partition functions $\cz^D$ and 
$\cz$
as equivariant indices counting BPS states in the Hilbert space
of a certain quantum mechanics on $X_{\bt}$.
The physical theories have many $U(1)$ flavor symmetries,
whose fugacities are identified with K-theoretic equivariant parameters
\cite{Dimofte:2010tz, Nekrasov:2004vw}.

\subsection{Free theory}

We start with a quantum mechanical index on $\BC$.
Physically, this is the partition function of a free chiral field on $S^1$
charged under a flavor symmetry $\sft=U(1)$ with fugacity $q_1$.

The equivariant index is computed by the character map
$\ch: K_\sft (\BC) \xrightarrow{\cong} \BC [q_1^{\pm}]$,
applied to the Hilbert space of the QM. The computation goes as follows:
the single-particle Hilbert space $\cH$ is one-dimensional, generated by a
state of charge 1 under the $U(1)$ flavor symmetry,
hence its character is given by $\ch (\cH) = q_1$.
The full space of states of the QM is the Fock space
$\fock = S^\bullet \cH = \bigoplus_{n \geq 0} S^n \cH$, obtained
by summing over all symmetric tensor powers of $\cH$.
The index is given by the character of this space
$\cz(q_1) = \ch(S^\bullet \cH) = \frac {1} {1-q_1}$.
The index of two or more free chirals is the product of the indices of each
chiral, by the multiplicative nature of the character map.

Next we consider a 3d refinement of this counting.
Physically, we uplift the theory from the circle to $D^2\times S^1$.
The Hilbert space of this theory splits into components graded
both by the action of $\sft$ on the target
and $U(1)_\q$ on the disk.
As before we start by identifying the single-particle Hilbert space
$\cH^D \cong \bigoplus_{i\geq1} \cH^D_{(i)}$,
where $\cH^D_{(i)}$ are one-dimensional spaces
corresponding to an infinite tower of states coming from the disk.
All these components have charge one under the symmetry $\sft$
but they are distinguished by their $U(1)_\q$ charge
$\ch (\cH^D_{(i)}) = q_1 \q^{i-1}$.
The full space of states of the 3d theory is the Fock space
$\fock^D = S^\bullet \cH^D$ and its index
\begin{equation}
\label{eq:toyindex3d}
 \cz^D (q_1;\q)
 = \ch (S^\bullet \cH^D) = \frac{1}{(q_1;\q)_\infty}
\end{equation}
matches the one-loop determinant of a free chiral obtained via localization.

The states of the 1d theory are contained in the Hilbert space of the 3d theory
as those states with zero charge under $U(1)_\q$. In the limit $\q \to 0$,
all 3d states with higher $U(1)_\q$-charges decouple and the 3d index reproduces
the 1d index, $\lim_{\q \to 0} \cz^D (q_1;\q) = \cz (q_1)$.

A basis for the space $\fock^D$ is given by states of the form
\begin{equation}
\label{eq:basis-states}
 \alpha_{-i_1} \alpha_{-i_2} \cdots \alpha_{-i_n} |0\rangle~,
 \quad i_1 \geq \dots \geq i_n \geq 1~,
\end{equation}
where $\alpha_{-i}$ are mutually commuting creation operators
with charges $q_1 \q^{i-1}$.
Since the indices in \cref{eq:basis-states} are ordered,
we can label each state by an integer partition
$\mu = [i_1-1,i_2-1,\dots,i_n-1]$. So the index can be computed as
a sum of charges over the Fock space of all such states
\begin{equation}
\label{eq:exp-pochhammer}
 \cz^D(q_1;\q) = \sum_{n=0}^\infty \sum_{\ell(\mu)\leq n} q_1^n\q^{|\mu|}~,
\end{equation}
where the second sum ranges over all integer partitions $\mu$ of length less
or equal to $n$ (and arbitrary size). \Cref{eq:toyindex3d} can then be
recovered by using \cref{eq:p-len,eq:qbinomial}.

\subsection{Abelian GLSM}

We consider a toric variety $X_{\bt}$ obtained
as symplectic quotient of $\BC^N$
by the action of a torus $\sfa$ with momentum map $\mu$
as in \cref{eq:symplectic-quotient}.
The GLSM describing such quotient has $N$ chiral fields.
Each chiral field is charged both w.r.t.\ the flavor symmetry group $\sft$
and the gauge group $\sfa$, as specified by the corresponding matrix of charges.

Before looking at the gauged sigma model,
we consider the fully $(\sfa \times \sft)$-equivariant index
on the ambient space $\BC^N$,
where $\sfa$ is also regarded as a global symmetry.
This is the product of $N$ copies of the index in \cref{eq:toyindex3d},
each depending on the appropriate fugacities,
\begin{equation}
\label{eq:indexofCN}
 \prod_{i=1}^N \frac{1}{\left(q_i\prod_{a=1}^r w_a^{Q_i^a};\q\right)_\infty}
 = \sum_{\bn\in\BZ^N_+} \prod_{i=1}^N
 \frac{q_i^{n^i} \prod_{a=1}^r w_a^{Q_i^a n^i}}{(\q;\q)_{n^i}}~.
\end{equation}
We can restrict the sum over Fock space in the r.h.s.\ of \cref{eq:indexofCN}
to a given $\sfa$-charge sector $\cH_{\bT}$, $\bT=(T^1,\dots,T^r)\in\BZ^r$,
by imposing the Gauss law 
\begin{equation}
\label{eq:gauss-law}
 \sum_{i=1}^N Q_i^a n^i = T^a~,
 \quad
 a=1,\dots,r~.
\end{equation}
This can be implemented on \cref{eq:indexofCN} by the contour integral
\begin{equation} \label{eq:ZD-BPScount}
 \cz^D (\bT,q;\q) = (-1)^r \oint_\qjk \prod_{a=1}^r \frac
	{\dif w_a} {2\pi\ii w_a} w_a^{-T^a}
 \prod_{i=1}^N \frac{1}
	{\left( q_i \prod_{a=1}^r w_a^{Q_i^a};\q \right)_\infty}
 = \sum_{Q \cdot \bn = \bT} \prod_{i=1}^N \frac {q_i^{n^i}} {(\q;\q)_{n^i}}
 = \ch(\cH_{\bT})
\end{equation}
with a QJK contour defined as in \cref{sec:disk-functions}.
The Fock space of the linear sigma model splits as a sum over $\sfa$-charge
sectors, $\fock = \oplus_{\bT} \cH_{\bT}$ so that
\begin{equation}
 \ch(\fock) = \sum_{\bT} \cz^D(\bT,q;\q) \prod_{a=1}^r w_a^{T^a}
 =\prod_{i=1}^N \frac{1}{\left(q_i\prod_{a=1}^r w_a^{Q_i^a};\q\right)_\infty}~.
\end{equation}

Geometrically, the Gauss law constraint implements the restriction from $\BC^N$
to the stable locus $\mu^{-1}(\bt)$
and simultaneously the quotient w.r.t.\ the $\sfa$-action.
By comparing \cref{eq:gauss-law,eq:moment-map} we can interpret the index
as a certain graded count of integer points inside $X_{\bt}$,
where the integers $n^i$ replace the real momenta $p^i$.

\section{Expansions}
\label{sec:expansions}

We study degeneration limits of the 3d partition function
$\cz^D(\bT,q;\q)$ corresponding to shrinking either the disk $D^2$,
the circle $S^1$ or both. These degenerations fit into the commutative diagram
of world-volume geometries
\begin{equation}
\label{eq:limit-geometry}
\begin{tikzcd}
 D^2\times S^1
 \arrow[d]
 \arrow[r]
 & S^1
 \arrow[d] \\
 D^2
 \arrow[r]
 & \pt
\end{tikzcd}
\end{equation}
to which we give an interpretation in terms of limits of partition functions.
For simplicity, in this section we assume that $X_\bt$ is CY.

It turns out that the limit in which the disk $D^2$ shrinks to zero-size
can be implemented by sending the equivariant disk parameter $\lambda$ to
$\infty$, so that the K-theoretic variable $\q$ goes to 0.
This limit corresponds to the horizontal arrows in \cref{eq:limit-geometry}.
Moreover, as we explain below, in this limit
the infinite towers of poles coming from the functions $\Ga$ and $\Ga_{\q}$
are sent to infinity and only classical poles survive.
For this reason the QJK contour can be shrunk back to the classical JK contour
when $\lambda$ is infinitely large.

The limit corresponding to vertical arrows in \cref{eq:limit-geometry},
in which the circle $S^1$ shrinks to zero radius,
is modulated instead by the parameter $\hbar$ going to 0.
This implies that all K-theoretic parameters go to one, as one would expect
from the reduction of K-theoretic computations to cohomology.

The two limits can be composed in two ways.
First reducing along the disk and then the circle or vice-versa.
We consider these two cases separately.
The main goal of this section is to show that these two paths
lead to the same result,
thus proving that we have a commutative diagram of partition functions
\begin{equation}
\begin{tikzcd}
 \cz^D(\bT,q;\q)
 \arrow[d,"\hbar \to 0"]
 \arrow[r,"\q \to 0"]
 & \cz(\bT,q)
 \arrow[d,"\hbar \to 0"] \\
 \cf^D (\bt,\e;\lambda)
 \arrow[r,"\lambda \to \infty"]
 & \cf(\bt,\e)
\end{tikzcd}
\end{equation}

\subsection{From 3d to 1d to 0d}

The degeneration of $\cz^D(\bT,q;\q)$ to $\cz(\bT,q)$ is rather straightforward 
to
implement. Each $\q$-Pochhammer factor in the integrand can be expanded
using \cref{eq:exp-pochhammer} and in the limit $\q\to0$ we find
\begin{equation}
 \lim_{\q\to0} \frac {1} {(\eu^{-\hbar x_i};\q)_\infty}
 = \frac {1} {1-\eu^{-\hbar x_i}}~.
\end{equation}
All the poles at $x_i+n\lambda = 0$ for $n>0$ are killed
by the $\lambda \to \infty$ limit
and one is left with the integral representation
for the 1d partition function $\cz(\bT,q)$
\begin{equation}
 \cz^D(\bT,q;\q) = \sum_{Q\cdot\bn=\bT} \prod_{i=1}^N
	\frac {q_i^{n^i}} {(\q;\q)_{n^i}}
 \xrightarrow {\q\to0}
 \sum_{Q\cdot\bn=\bT} \prod_{i=1}^N {q_i^{n^i}}
 = \cz(\bT,q)~,
\end{equation}
which agrees with previous results \cite{Nekrasov:2021ked}.

Next we reduce along the circle.
This is the cohomological limit of the Witten index $\cz(\bT,q)$
and it is known to reproduce the equivariant volume $\cf(\bt,\e)$.
We review here how the limit goes.
Using the series representation of the Todd genus
\begin{equation}
 \frac{\hbar x}{1-\eu^{-\hbar x}} =
 \sum_{n=0}^{\infty} \frac{B_n(-\hbar x)^n}{n!}
\end{equation}
we can write
\begin{equation}
 \cz(\bT,q) = \hbar^{r-N} \oint_\jk \frac {\dif\phi_i} {2\pi\ii}
 \frac {\eu^{\sum_a \phi_a t^a}} {\prod_{i=1}^N x_i} \left( 1+O(\hbar) \right)
 = \hbar^{-\cd} \cf(\bt,\e) + O(\hbar^{-\cd+1})~.
\end{equation}
Higher order corrections in $\hbar$ correspond to insertions of characteristic
classes of $X_{\bt}$.

\subsection{From 3d to 2d to 0d}

The degeneration of $\cz^D(\bT,q;\q)$ to the 2d partition function
$\cf^D(\bt,\e;\lambda)$ is slightly more involved and it requires to use
the representation in terms of Jackson $\Ga_{\q}$ function as in
\cref{eq:q-Gamma-integral}.
The function $\cz^{D} (\bT,q;\q)$ has infinitely many poles at $\q=1$,
therefore one needs to multiply it by $(\q;\q)_\infty^N$
to get a well-defined Laurent expansion.
Using the standard identity
$\lim_{\q \to 1} \Ga_{\q}(z) = \Ga(z)$, we then obtain
\begin{equation}
 (\q;\q)_\infty^N \cz^{D} (\bT,q;\q) =
	\hbar^{-\cd} \cf^{D} (\bt,\e;\lambda) + O(\hbar^{-\cd+1})~.
\end{equation}
The product is still divergent but it has a finite number
of negative powers of $\hbar$ in its Laurent series expansion.
Moreover, in the same limit we have
\begin{equation}
\label{eq:delta-limit}
 (1-\Delta_i) = (1-\eu^{-\hbar\cD_i}) = \hbar \cD_i + O (\hbar^2)
\end{equation}
so that the leading order in $\hbar$ of the difference operator $1-\Delta_i$
is the differential operator $\cD_i$.

The next limit is the zero-volume limit of the disk, $\lambda \to \infty$.
The function $\cf^D(\bt,\e;\lambda)$ depends on $\lambda$
through factors of $\Ga(\tfrac{x_i}{\lambda})$ in the integrand.
The limit of the integrand can be computed using
the series expansion of the $\Ga$-function
\begin{equation}
 \lim_{\lambda \to \infty} \frac1{\lambda}
 \Ga \left( \frac{x_i}{\lambda} \right) = \frac1{x_i}~.
\end{equation}
The QJK contour surrounds infinitely many poles of the $\Ga$-functions,
located at $x_i+\lambda k = 0$.
In the limit $\lambda \to \infty$ all of these poles run away to infinity
except for classical poles at $k=0$.
Therefore we obtain
$\lim_{\lambda \to \infty} \cf^D(\bt,\e;\lambda) = \cf(\bt,\e)$.
While the $\lambda \to \infty$ limit of $\cf^D(\bt,\e;\lambda)$ is well-defined,
its Laurent series expansion is not.
The reason is that one can expand $\cf^D(\bt,\e;\lambda)$
as a sum over infinitely many residues,
each residue at $x_i+\lambda k=0$ giving a contribution
proportional to $\eu^{- \lambda \bd\cdot\bt}$ times a power series
in $\lambda^{-1}$. Schematically,
\begin{equation}
\label{eq:lambda-expansion}
 \cf^D(\bt,\e;\lambda) = \sum_{\bd} \eu^{-\lambda\bd\cdot\bt}
 \times (\text{Laurent series in $\lambda^{-1}$})
\end{equation}
for $\bd$ a vector of integers ranging over a convex subset of $\BZ^r$ as we 
show in \cref{sec:Givental}.
In the $\lambda \to \infty$ limit, the $\eu^{-\lambda\bd\cdot\bt}$ contributions
go to zero exponentially fast (provided $\bd \cdot \bt>0$,
which we show in \cref{th:main_identity}),
therefore only the classical contributions at $\bd = 0$ survive.
The limit can then be computed by expanding $\cf_{\Ga}(\bt,\e)$ as a
Taylor series in $\lambda^{-1}$ as in \cref{eq:Fscl-Ga-class}.
Hence we can write
\begin{equation}
 \cf_{\Ga} (\bt,\e)
 = \cf (\bt,\e) - \tfrac{\gamma}{\lambda} \int_{X_{\bt}}
 \eu^{\varpi_{\bt}-H_{\e}} c_1 + O (\lambda^{-2})~,
\end{equation}
where we use the expansion of the Gamma-class of $TX_\bt$
as in \cref{eq:gamma-class}.

Due to the expansion in \cref{eq:lambda-expansion}, one should regard
contributions from higher poles as higher-order instanton corrections to
the classical partition function with instanton counting parameters
$z_a = \eu^{-\lambda t^a}$. By analogy with the genus zero Gromov--Witten theory
of the target $X_{\bt}$, one can interpret such contributions as coming from
higher-degree maps. See \cref{sec:Gromov-Witten} for a more detailed
discussion.

\section{Shift equations}
\label{sec:shift}

As discussed in \cref{sec:BPScount}, the disk partition function
$\cz^D (\bT,q;\q)$ is a graded count of integer points in $X_{\bt}$, or
equivalently the graded dimension of the space of sections of a certain
prequantum line bundle over the space of maps from the disk to $X_{\bt}$.
We want to know whether this function is well-defined when $\sft$-equivariant
parameters are turned off. This corresponds to the limit in which all $\e_i$
are set to zero, i.e.\ $q_i \to 1$ for $i=1,\dots,N$. As a generalization of
the volume of $X_{\bt}$, we can immediately see that this limit is not
defined if $X_{\bt}$ is non-compact, as the sum over integer points is
divergent. As a simple example consider the non-compact case of $X_{\bt} = \BC$,
then $\cz^D(\bT,q;\q) = (q_1;\q)^{-1}_\infty$, which has a simple pole at 
$q_1=1$.
On the other hand, if $X_{\bt}$ is compact, then the disk partition function
is a sum over a finite number of points and therefore it has a well-defined
limit for $q_i\to1$.

We argue that while $\cz^D (\bT,q;\q)$ does not have a
non-equivariant limit for $X_{\bt}$ non-compact, one can extract a convergent
quantity by applying a finite difference operator corresponding to a compact
toric divisor of $X_{\bt}$. This generalizes the shift equation
from ref.~\cite[Section 4]{Nekrasov:2021ked} to the disk partition functions
$\cf^D (\bt,\e;\lambda)$ and $\cz^D (\bT,q;\q)$.
The statement of regularity for $\cz^D$ requires an analysis
of the $q_i$ dependence of the disk function in the $\q$ expansion.

For simplicity, we assume that $H^2_\cmp (X_{\bt})$ is non-empty.
Let $\psi: H^2_\cmp (X_{\bt}) \to H^2(X_{\bt})$ be the map sending
cohomology classes with compact support to ordinary cohomology classes.
One can decompose its image over a basis of $H^2 (X_{\bt})$,
so that $\psi = \sum_{a=1}^r \phi_a \psi^a$.
For a toric Kähler quotient $X_{\bt}$ with charge matrix $Q^a_i$,
the map $\psi$ can be represented by a matrix of integers
\begin{equation}
\label{eq:psi-map}
 \psi^a (\pd(D_i)) = - Q^a_i~, \quad i \in I_{\cmp}~.
\end{equation}
If there are no compact divisors then the
set $I_\cmp$ is empty and the $\psi$-map is identically zero.
\begin{proposition}
\label{conj:compact-divisor-K-th}
Let $M = \sum_{i\in I_{\cmp}} M^i \pd (D_i) \in H^2_\cmp (X_{\bt})$ with
$M^i \in \BZ$. Assume that $\bT + \psi(M)$ is in the same chamber as $\bT$.
Then the difference
\begin{equation}
 \cz^D (\bT,q;\q) - \prod_{i \in I_{\cmp}} q_i^{M^i} \cz^D (\bT+\psi(M),q;\q)
	\in \BZ [q_1,\ldots,q_N] \, [[\q]]
 \label{eq:shiftZ}
\end{equation}
is a formal power series in $\q$, with \emph{polynomial} coefficients
in the variables $q_i$.
\end{proposition}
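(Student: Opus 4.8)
The plan is to prove the statement directly from the lattice-sum representation \cref{eq:ZD-BPScount}, turning the difference in \cref{eq:shiftZ} into a single sum over the solutions of the Gauss law and then controlling, term by term, its order in $\q$. Write $g_n := (\q;\q)_n^{-1} \in \BZ_{\geq0}[[\q]]$, so that $\cz^D(\bT,q;\q) = \sum_{\bn} \prod_{i=1}^N g_{n^i}\, q_i^{n^i}$, with the sum over $\bn \in \BZ_{\geq0}^N$ satisfying $Q\cdot\bn = \bT$. Extending $M^i := 0$ for $i\notin I_\cmp$ and recalling $\psi^a(M) = -\sum_i M^i Q^a_i$, the substitution $\bn = \bm + M$ identifies the summation index of $\cz^D(\bT+\psi(M),q;\q)$ with that of $\cz^D(\bT,q;\q)$: indeed $Q\cdot\bm = \bT+\psi(M)$ is equivalent to $Q\cdot\bn = \bT$, and $\prod_{i\in I_\cmp} q_i^{M^i}\prod_i q_i^{m^i} = \prod_i q_i^{n^i}$. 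Since $\bT$ and $\bT+\psi(M)$ lie in the same chamber, both functions are given by the lattice sum \cref{eq:ZD-BPScount} for a common QJK contour, so the difference \cref{eq:shiftZ} collapses to a single sum over $\{\bn\in\BZ^N_{\geq0} : Q\cdot\bn = \bT\}$ up to finitely many boundary terms. We treat the geometrically relevant case of an effective class, $M^i \geq 0$; this is what ensures honest polynomials rather than Laurent polynomials in the $q_i$.

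On the overlap of the two supports the coefficient of $\prod_i q_i^{n^i}$ is $\prod_{i\notin I_\cmp} g_{n^i}\,\big(\prod_{j\in I_\cmp} g_{n^j} - \prod_{j\in I_\cmp} g_{n^j - M^j}\big)$. The analytic input is that $g_n$ is the generating function for partitions into parts of size at most $n$, whence $g_n \equiv g_\infty \pmod{\q^{n+1}}$ with $g_\infty := (\q;\q)_\infty^{-1}$; in particular $g_n - g_{n'} = O(\q^{1+\min(n,n')})$. Telescoping the two products then yields $\prod_{j\in I_\cmp} g_{n^j} - \prod_{j\in I_\cmp} g_{n^j - M^j} = O\!\big(\q^{\,1 + \min_{j\in I_\cmp} n^j - C}\big)$, where $C := \max_{j\in I_\cmp} M^j$ depends only on $M$. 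Thus a point $\bn$ with $\min_{j\in I_\cmp} n^j$ large contributes only to high powers of $\q$, while the unmatched boundary terms (those $\bn\geq0$ with $n^j < M^j$ for some $j\in I_\cmp$) are confined to $\min_{j\in I_\cmp} n^j \leq C$.

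The remaining step is a finiteness statement: for every $R\geq0$ the set $\{\bn \in P_\bT : n^{j}\leq R \text{ for some } j\in I_\cmp\}$ is finite, where $P_\bT$ is the rational $\cd$-dimensional polyhedron $\{\bn\in\BR^N_{\geq0} : Q\cdot\bn = \bT\}$ whose lattice points index the sum. Its recession cone is $C_\infty := \{\bv \geq 0 : Q\cdot\bv = 0\}$, independent of $\bT$, and I claim $v^{j} > 0$ for every $j \in I_\cmp$ and every nonzero $\bv \in C_\infty$. This is precisely compactness of $D_j$: the facet $\{\bn\in P_\bT : n^{j}=0\}$ is the moment polytope of $D_j$, and it is unbounded exactly when some nonzero $\bv\in C_\infty$ has $v^{j}=0$, so $D_j$ compact forces $v^{j}>0$ on $C_\infty\setminus\{0\}$. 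Consequently, for each fixed $j_0\in I_\cmp$ the slice $P_\bT\cap\{n^{j_0}\leq R\}$ has trivial recession cone ($C_\infty\cap\{v^{j_0}\leq0\}=\{0\}$) and is bounded; the finite union over $j_0\in I_\cmp$ proves the claim, so $\min_{j\in I_\cmp} n^{j}\to\infty$ as $\bn\to\infty$ within $P_\bT$.

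Putting the pieces together: for each fixed $m$ only the finitely many $\bn$ with $\min_{j\in I_\cmp} n^j \leq m + C$ can contribute to the coefficient of $\q^m$ in \cref{eq:shiftZ}, and together with the finitely many boundary terms each contributes a monomial $\prod_i q_i^{n^i}$ with $n^i\geq0$ and integer $\q$-series coefficient. Hence $[\q^m]$ of the difference is a genuine polynomial in the $q_i$ with integer coefficients, and the difference lies in $\BZ[q_1,\dots,q_N][[\q]]$. The main obstacle is the finiteness argument of the third paragraph, i.e.\ the translation of ``$D_j$ compact'' into strict positivity on the recession cone and its uniform use across a fixed $\q$-order. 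As a consistency check, the equivalent operator form $\cz^D(\bT) - \prod_{i\in I_\cmp} q_i^{M^i}\cz^D(\bT+\psi(M)) = \big(1 - \prod_{i\in I_\cmp}\Delta_i^{M^i}\big)\cz^D(\bT)$, with $\Delta_i$ as in \cref{eq:delta-operator}, makes both the chamber hypothesis and the role of the sign of $M$ transparent.
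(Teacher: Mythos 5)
Your proof is correct and reaches the conclusion by a genuinely different route from the paper's. The paper works at the operator level: it reduces to the single case $M=\pd(D_i)$ using \cref{eq:disk-identity-delta} (which replaces $q_i$ by $\q q_i$ in the lattice sum), asserts that $\Lambda_i(\bT,k)$ is finite "by compactness of $D_i$", controls the $\q$-order via \cref{eq:p-len}, and then bootstraps to general $M$ by the telescoping identity $(1-\Delta_i^{M^i})=(1+\Delta_i+\dots+\Delta_i^{M^i-1})(1-\Delta_i)$ and an inclusion--exclusion induction over the divisors. You instead treat the full difference for arbitrary effective $M$ in one pass on the lattice sum \cref{eq:ZD-BPScount}, with two inputs the paper leaves implicit: the congruence $(\q;\q)_n^{-1}\equiv(\q;\q)_\infty^{-1}\pmod{\q^{n+1}}$ feeding a telescoping estimate on the difference of products, and --- most valuably --- an actual proof of the finiteness statement via the recession cone of $P_\bT$, i.e.\ the translation of "$D_j$ compact" into $v^j>0$ on $C_\infty\setminus\{0\}$. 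This is precisely the step the paper's proof takes on faith, so your argument is the more self-contained one; what the paper's inductive operator formulation buys in exchange is brevity and the fact that the same induction carries over verbatim to the later generalization $\prod_{i\in S}(1-\Delta_i)\cz^D$ for compact intersections. One remark on your restriction to $M^i\geq0$: this is not really a loss. For $M^i<0$ the paper's own reduction $(1-\Delta_i^{M^i})=-\Delta_i^{M^i}(1-\Delta_i^{-M^i})$ produces an overall factor $q_i^{M^i}$ with negative exponent, and indeed already for $K_{\BP^1}$ with $M^2=-1$ the difference \cref{eq:shiftZ} contains an uncancelled $q_2^{-1}$; so the literal polynomiality claim only holds for effective $M$ (otherwise one gets Laurent polynomials), and your hypothesis is the honest one.
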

\begin{proof}
The expression in \cref{eq:shiftZ} can be rewritten as
\begin{equation}
 \left( 1-\eu^{-\hbar \sum_{i\in I_\cmp} M^i\cD_i} \right) \cz^D (\bT,q;\q)
 = \left( 1-\prod_{i \in I_\cmp}\Delta_i^{M^i} \right) \cz^D (\bT,q;\q)~.
\end{equation}
We first consider the case when $M = \pd (D_i)$ for some $i \in I_{\cmp}$.
Using \cref{eq:disk-identity-delta,eq:ZD-BPScount}, we can write
\begin{equation}
\label{eq:proof-shift-ZD}
 (1-\Delta_i) \cz^D (\bT,q;\q) = \sum_{n^i=0}^\infty
	\frac {(\q q_i)^{n^i}} {(\q;\q)_{n^i}}
 \sum_{\Lambda_i(\bT,n^i)} \prod_{j\neq i} \frac {q_j^{n^j}} {(\q;\q)_{n^j}}~,
\end{equation}
where the set
\begin{equation}
 \Lambda_i (\bT,k) := \left\{ (n^1,\dots,n^N) \in \BZ^N_{\geq0}
 \middle| \, \sum_{j=1}^N Q^a_j n^j = T^a \text{ and } n^i=k \right\}
\end{equation}
is finite by the assumption of compactness of divisor $D_i$. By repeatedly
applying \cref{eq:p-len}, we see that a given power of $\q$ in
\cref{eq:proof-shift-ZD} only receives contributions from a finite number
of $\Lambda$'s.
This shows that $(1 - \Delta_i) \cz^D$ satisfies the thesis.
Next we consider the case when $M$ is an integer multiple of a
generator $x_i$, i.e.\ $M=M^i\pd(D_i)$ with $M^i\in\BZ$ (no sum over $i$
is implied here). We then have
\begin{equation}
\label{eq:DeltaMi}
 \left( 1-\Delta_i^{M^i} \right) \cz^D (\bT,q;\q)
 = \left( 1+\Delta_i+\dots+\Delta_i^{M^i-1} \right)
	(1-\Delta_i) \cz^D (\bT,q;\q)~.
\end{equation}
Since the r.h.s.\ is a regular operator acting on the regular expression
$(1-\Delta_i) \cz^D$, we can use the previous result to the deduce
that the l.h.s.\ is also regular for any $M^i > 1$. For $M^i<0$ we use that
$(1-\Delta^{-M^i})=-\Delta^{-M^i}(1-\Delta^{M^i})$.

Given any pair of compact divisors $D_i$ and $D_j$, with $i,j \in I_{\cmp}$,
we have
\begin{equation}
 \left( 1-\Delta_i^{M^i} \Delta_j^{M^j} \right)
 = \left( 1-\Delta_i^{M^i} \right) + \left( 1-\Delta_j^{M^j} \right)
 -  \left( 1-\Delta_i^{M^i} \right) \left( 1-\Delta_j^{M^j} \right)~.
\end{equation}
Applying our previous result to terms on the right, we deduce that
$\left( 1-\Delta_i^{M^i} \Delta_j^{M^j} \right) \cz^D$
satisfies the thesis and by induction we conclude that
$\left( 1-\prod_{i\in I_{\cmp}} \Delta_{i}^{M^{i}} \right) \cz^D$
satisfies it as well.
\end{proof}

By taking the 1d limit $\q \to 0$, we find an analogous compact divisor shift
equation
\begin{equation}
 (1-\Delta_i) \cz(\bT,q) = \sum_{\Lambda_i(\bT,0)}\prod_{j\neq i}q_j^{n^j}
 \in \BZ[q_1,\dots,q_N]
\end{equation}
for the $S^1$ partition function \cite{Nekrasov:2021ked}.
In this case the quantity on the r.h.s.\ is a polynomial
in $q_i$'s with integer coefficients, hence
$\lim_{q \to 1} (1-\Delta_i) \cz(\bT,q)$ is an integer.

If instead we reduce along the circle (cohomological limit $\hbar \to 0$),
we find that
\begin{equation}
\label{eq:shift-equation-cohomology}
 \cD_i \cf^D (\bt,\e;\lambda) \quad\text{is analytic at $\e=0$}
\end{equation}
if the divisor $D_i$ is compact.

A different 2d limit is the double scaling $\hbar \to 0$ and $M \to \infty$
with $m := \hbar M$ constant, in which case \cref{eq:shiftZ} becomes
the shift equation of ref.~\cite{Nekrasov:2021ked}, namely:
\begin{proposition}
\label{conj:compact-divisor-coho}
Let $m = \sum_{i\in I_{\cmp}} m^i \pd (D_i) \in H^2_\cmp (X_{\bt})$.
Assume that $\bt + \psi(m)$ is in the same chamber as $\bt$.
Then the difference
\begin{equation}
 \label{eq:shiftF}
 \cf^D (\bt,\e;\lambda)
 - \eu^{-\sum_{i \in I_{\cmp}}m^i \e_i}
 \cf^D (\bt+\psi(m),\e;\lambda)
\end{equation}
is regular in the non-equivariant limit $\e\to0$.
\end{proposition}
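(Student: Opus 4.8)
The plan is to recognize the difference in \cref{eq:shiftF} as the image of $\cf^D$ under the operator $1-\eu^{-\sum_{i\in I_\cmp}m^i\cD_i}$, i.e.\ the left-hand side of the cohomological shift equation \cref{shift-eq-disk-intro}, and then reduce its regularity to the already-established single-divisor statement \cref{eq:shift-equation-cohomology}. First I would set $D_m:=\sum_{i\in I_\cmp}m^i\cD_i$ with $\cD_i$ as in \cref{eq:differential-operators}. Since the $\cD_i$ mutually commute and each $\e_j$ is a constant, the translation part $\eu^{-\sum_a(\sum_i m^iQ^a_i)\partial_{t^a}}$ factors off and sends $t^a\mapsto t^a-\sum_{i\in I_\cmp}m^iQ^a_i=t^a+\psi^a(m)$ by \cref{eq:psi-map}. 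Hence $\eu^{-D_m}\cf^D(\bt,\e;\lambda)=\eu^{-\sum_{i\in I_\cmp}m^i\e_i}\cf^D(\bt+\psi(m),\e;\lambda)$, so \cref{eq:shiftF} equals $\big(1-\eu^{-D_m}\big)\cf^D(\bt,\e;\lambda)$. This mirrors the first step in the proof of \cref{conj:compact-divisor-K-th}, where $1-\prod_i\Delta_i^{M^i}$ plays the role of $1-\eu^{-D_m}$.

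The second step is to write, by the fundamental theorem of calculus applied to the smooth one-parameter family of shift operators $s\mapsto\eu^{-sD_m}$, the identity
\begin{equation}
 \big(1-\eu^{-D_m}\big)\cf^D(\bt,\e;\lambda)
 = \int_0^1 \eu^{-sD_m}\big(D_m\cf^D\big)\,\dif s
 = \sum_{i\in I_\cmp}m^i\int_0^1 \eu^{-sD_m}\big(\cD_i\cf^D\big)\,\dif s~.
\end{equation}
By \cref{eq:shift-equation-cohomology} each $\cD_i\cf^D$ with $i\in I_\cmp$ is analytic at $\e=0$, and this holds for every value of the Kähler parameter inside the chamber $\chamber$. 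The operator $\eu^{-sD_m}$ multiplies by the entire factor $\eu^{-s\sum_{i\in I_\cmp}m^i\e_i}$ and shifts the Kähler parameter to $\bt+s\psi(m)$, so it preserves analyticity at $\e=0$ as long as the shifted point stays in $\chamber$.

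This is where the hypothesis enters: since $\bt$ and $\bt+\psi(m)$ lie in the same chamber and chambers are convex, the whole segment $\{\bt+s\psi(m):s\in[0,1]\}$ is contained in $\chamber$, and \cref{eq:shift-equation-cohomology} applies at every point of it. Integrating a family that is analytic at $\e=0$ uniformly over the compact interval $s\in[0,1]$ again yields a function analytic at $\e=0$, which proves the statement.

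The main obstacle is the control of regularity under the resummation implicit in $\eu^{-D_m}$: because the $m^i$ are continuous parameters, $\eu^{-D_m}$ is a genuine infinite-order translation rather than a finite product of the building blocks $\Delta_i$ used in \cref{conj:compact-divisor-K-th}, so one cannot simply telescope as in the K-theoretic case. The integral representation above circumvents this by trading the infinite-order operator for a compact average of the finite-order data $\cD_i\cf^D$, and convexity of the chamber guarantees that this average never leaves the locus where \cref{eq:shift-equation-cohomology} holds. (Alternatively one could try to obtain the statement directly as the double-scaling limit $\hbar\to0$, $M\to\infty$ with $m=\hbar M$ fixed of \cref{eq:shiftZ}, as suggested in the surrounding text; but making that limit rigorous requires uniform control of the $\q$-series as $\q\to1$ and ultimately reduces to the same analyticity bookkeeping, so I would favor the operator argument.)
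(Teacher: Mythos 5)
Your argument is correct, but it is not the route the paper takes. The paper obtains \cref{conj:compact-divisor-coho} as the double-scaling limit $\hbar\to0$, $M\to\infty$ with $m=\hbar M$ fixed of the K-theoretic statement \cref{eq:shiftZ}, whose proof is combinatorial: it uses the sum representation \cref{eq:ZD-BPScount}, the finiteness of the sets $\Lambda_i(\bT,k)$ for a compact divisor, and the telescoping identity $1-\Delta_i^{M^i}=(1+\Delta_i+\dots+\Delta_i^{M^i-1})(1-\Delta_i)$, which only works for integer powers. You instead stay entirely in cohomology: after correctly identifying \cref{eq:shiftF} with $(1-\eu^{-D_m})\cf^D$ via \cref{eq:psi-map}, you replace the paper's discrete telescoping by the integral representation $(1-\eu^{-D_m})\cf^D=\sum_{i\in I_\cmp}m^i\int_0^1\eu^{-sD_m}(\cD_i\cf^D)\,\dif s$, which handles continuous $m^i$ directly, and you close the argument with the single-divisor regularity \cref{eq:shift-equation-cohomology} together with convexity of the chamber (which is indeed a cone here, so the segment $\bt+s\psi(m)$ stays inside it). What your approach buys is that it sidesteps the $\q\to1$ double-scaling limit, which the paper asserts rather than controls; what it costs is that it leans on \cref{eq:shift-equation-cohomology}, which in the paper is itself only obtained by circle reduction of the K-theoretic proposition, so the logical dependence on the 3d statement is not fully removed, merely localized to the single-divisor case. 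Your closing remark correctly identifies this trade-off; the only point worth making explicit is the (mild, but necessary) uniformity in $s$ of the analyticity at $\e=0$ when passing the limit through the integral, which holds here because the integrand is jointly analytic in $(s,\e)$ on $[0,1]\times$ a neighborhood of $\e=0$.
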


If $H^2_\cmp (X_\bt)$ is empty, then we look at any
set $S \subseteq \{1,\ldots,n\}$ of divisors
such that their intersection is compact;
the action of the corresponding product of operators
makes the disk function regular in the non-equivariant limit
\begin{equation}
 \bigcap_{i \in S} D_i \text{ compact } \Longrightarrow
 \prod_{i \in S} (1-\Delta_i) \cz^D (\bT,q;\q)
 \quad\text{is analytic at $q=1$}~.
\end{equation}
The proof of this statement is a straightforward generalization
of the argument in \cref{conj:compact-divisor-K-th}.
By reducing along the circle ($\hbar \to 0$), we find that
\begin{equation}
\label{eq:generalized-shift-D}
 \bigcap_{i \in S} D_i \text{ compact } \Longrightarrow
 \prod_{i \in S} \cD_i \cf^D (\bt,\e;\lambda)
 \quad\text{is analytic at $\e=0$}~.
\end{equation}
In this case the analog of the shift equation corresponds to
some higher-order difference equation. 
We consider examples without compact divisors in \cref{sec:examples2}.

\section{Quantum cohomology and quantum K-theory} \label{sec:quantum-coh}

\subsection{Equivariant Picard--Fuchs equations}

Let us fix a chamber $\chamber$, and work in cohomology for simplicity
(everything can be rephrased in K-theory terms).
We define the equivariant Picard--Fuchs operator
\begin{equation}
 \PF^\eq_\gamma :=
 \prod_{\{i|\sum_a\gamma_a Q^a_i>0\}}
 \left(\tfrac{\cD_i}{\lambda}\right)_{\sum_a\gamma_a Q^a_i}
 - \eu^{-\lambda \sum_{a} \gamma_a t^a}
 \prod_{\{i|\sum_a\gamma_a Q^a_i\leq0\}}
 \left(\tfrac{\cD_i}{\lambda}\right)_{-\sum_a\gamma_a Q^a_i}~.
\end{equation}
Then, for any $\gamma \in \chamber^\vee\cap\BZ^r$,
\cref{eq:disk-identity-D2,eq:disk-identity-D1} imply the relations
\begin{equation}
\label{eq:epf}
 \PF^\eq_\gamma \cf^D (\bt,\e;\lambda) = 0~.
\end{equation}
By the formal identification of differential operators $\cD_i$ and Chern roots
$x_i$, we can interpret \cref{eq:epf} as a differential operator representation
of the Batyrev or Quantum Stanley--Reisner ideal $I_\mathrm{QSR}$
defined by products
\begin{equation} \label{eq:QSR}
 \prod_{\{i|\sum_a\gamma_a Q^a_i>0\}}
 x_i^{\sum_a\gamma_a Q^a_i}
 - \prod_a z_a^{\gamma_a}
 \prod_{\{i|\sum_a\gamma_a Q^a_i\leq0\}}
 x_i^{-\sum_a\gamma_a Q^a_i}
.\end{equation}

We argue that by \cref{eq:epf} the disk function $\cf^D$ is a D-module for
the Quantum Cohomology ring of the toric quotient $X_\bt$
\begin{equation}
 QH^\bullet_{\sft}(X_\bt) :=
 \BC[\phi_1,\dots,\phi_r,\e_1,\dots,\e_N,z_1,\dots,z_r] / I_\mathrm{QSR}~.
\end{equation}
See refs.~\cite{Batyrev:9310004,Gonzalez:2012fd} for a discussion of
Batyrev description of quantum cohomology and
quantum deformations of the Kirwan map.

Differential equations of the type of \cref{eq:epf} encode a quantum
deformation of classical cohomology and are known as
\emph{equivariant Picard--Fuchs} (PF) equations. In fact, it follows that in the
classical limit $\lambda\to\infty$ (or large volume limit $\bt\to\infty$)
the quantum deformation vanishes (by the assumption on $\gamma$)
and the operators $\PF^\eq_\gamma$ provide a realization of the classical
cohomology relations as elements of the Stanley--Reisner ideal.

The usual non-equivariant PF operators are recovered when we send all $\e_i$
to zero,
\begin{equation}
 \PF_\gamma =
 \prod_{\{i|\sum_a\gamma_a Q^a_i>0\}}
 \big(-\sum_a\theta_a Q^a_i\big)_{\sum_a\gamma_a Q^a_i}
 - \eu^{-\lambda \sum_{a} \gamma_a t^a}
 \prod_{\{i|\sum_a\gamma_a Q^a_i\leq0\}}
 \big(-\sum_a\theta_a Q^a_i\big)_{-\sum_a\gamma_a Q^a_i}
\end{equation}
with $\theta_a:=\partial/\partial\log z_a
=-\tfrac1{\lambda}\partial/\partial t^a$.
Observe that while the PF operators themselves always have a well-defined
non-equivariant limit, this might not be the case for the disk function.
In fact, we have that for any non-compact manifold $X_\bt$, the disk function is
singular at $\e=0$, and therefore \cref{eq:epf} generically does not have a
non-equivariant limit.

The equivariant K-theoretic Picard--Fuchs operators are defined as
\begin{equation}
 \PF^{K_\eq}_\gamma :=
 \prod_{\{i|\sum_a\gamma_a Q^a_i>0\}}
 \left(\Delta_i;\q\right)_{\sum_a\gamma_a Q^a_i}
 - \q^{\sum_{a} \gamma_a T^a}
 \prod_{\{i|\sum_a\gamma_a Q^a_i\leq0\}}
 \left(\Delta_i;\q\right)_{-\sum_a\gamma_a Q^a_i}
\end{equation}
and they annihilate the K-theoretic disk function,
\begin{equation}
\label{eq:Kepf}
 \PF^{K_\eq}_\gamma \cz^D (\bT,q;\q) = 0~,
\end{equation}
thus providing a representation of quantum K-theory relations.\footnote
{The correspondence between 3d $\cN=2$ gauge theories and quantum
K-theory has been previously observed in ref.~\cite{Jockers:2018sfl},
where a dictionary to match the two sides was worked out. Here we extend the
discussion to the equivariant setting for arbitrary toric CYs.
Moreover, our results follow directly from the choice of integration
contour for the integral representation of the disk function that we postulated
in \cref{eq:cont-F-def-D2,eq:cont-F-def-D2S1}.
This choice is motivated by the symplectic geometry of the target and
extends naturally to any toric example. For simplicity, in our discussion
we omit any reference to the level structure of
quantum K-theory \cite{Ruan:2018tls}, in other words we assume level 0.}
The non-equivariant K-theoretic PF operators are obtained by using the formula
\begin{equation}
 \lim_{q\to1}\Delta_i = \q^{-\sum_a \theta_a Q_i^a}~.
\end{equation}

\subsection{The Givental \texorpdfstring{$\hat{I}$}{I}-operator}
\label{sec:Givental}

\begin{definition}
Inspired by work of Givental \cite{MR1408320,Givental:9701016}, we define the
equivariant $I$-function
\begin{equation}
 I_{X_{\bt}} :=
 \eu^{\sum_{a} \phi_a t^a} \sum_{\bd \in \dchamber}
 \eu^{-\lambda\sum_{a=1}^r d_a t^a}
 \prod_{i=1}^N \left(\frac{x_i}{\lambda}\right)_{-\sum_{a} d_a Q^a_i} ~,
\end{equation}
where $\dchamber := \chamber^\vee \cap \BZ^r$ is
the intersection of the lattice $\BZ^r$ with the dual of the chamber.\footnote
{This choice guarantees that the classical cohomology limit
$\lambda \to \infty$ is well-defined.}
\end{definition}
Our considerations in this section follow from the following fact.
\begin{proposition}
\label{th:main_identity}
There is an identity
\begin{equation}
 \oint_\qjk \prod_{a} \frac{\dif\phi_a}{2\pi\ii}
 \eu^{\sum_a \phi_a t^a}\prod_{i} \Ga\left(\frac{x_i}{\lambda}\right)
 = \oint_\jk \prod_{a} \frac{\dif\phi_a}{2\pi\ii}
 I_{X_{\bt}} \prod_{i}\Ga\left(\frac{x_i}{\lambda}\right)~.
\end{equation}
\end{proposition}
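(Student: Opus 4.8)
The plan is to show that the sum over $\bd\in\dchamber$ hidden inside the $I$-function, once integrated over the \emph{classical} JK contour, reproduces term by term the tower of poles that the QJK contour encircles. The starting point is the recurrence $(z)_m\,\Ga(z)=\Ga(z+m)$, valid for every $m\in\BZ$ after the Pochhammer symbol of \cref{eq:pochhammer} is extended to negative index. Applying it factor by factor with $z=x_i/\lambda$ and $m=-\sum_a d_a Q^a_i$ turns each Pochhammer of $I_{X_{\bt}}$ into a shift of the argument of the matching $\Ga$-function:
\begin{equation}
 \eu^{\sum_a\phi_a t^a}\,\eu^{-\lambda\sum_a d_a t^a}\prod_{i}\left(\tfrac{x_i}{\lambda}\right)_{-\sum_a d_a Q^a_i}\Ga\!\left(\tfrac{x_i}{\lambda}\right)
 =\eu^{\sum_a(\phi_a-\lambda d_a)t^a}\prod_i\Ga\!\left(\tfrac{\e_i+\sum_a(\phi_a-\lambda d_a)Q^a_i}{\lambda}\right).
\end{equation}
Thus the $\bd$-th summand of $I_{X_{\bt}}\prod_i\Ga(x_i/\lambda)$ is precisely the QJK integrand $\eu^{\sum_a\phi_a t^a}\prod_i\Ga(x_i/\lambda)$ evaluated at $\phi_a\mapsto\phi_a-\lambda d_a$, and the right-hand side becomes $\sum_{\bd\in\dchamber}$ of the classical-JK integral of this shifted integrand.

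Next I would compute the classical-JK residues of each shifted summand. The JK contour of \cref{eq:JKpoles} is fixed by the combinatorics of the charge vectors and the chamber, so for every $\bd$ it still selects the same fixed points $p=(i_1,\dots,i_r)\in\jk$. At such a point the $\bd$-summand is singular only if all $r$ on-shell factors remain so, i.e.\ if the shifted argument $(x_{i_b}-\lambda\sum_a d_a Q^a_{i_b})/\lambda$ hits a pole of $\Ga$ as $x_{i_b}\to0$; this happens exactly when $k_b:=\sum_a d_a Q^a_{i_b}\in\BZ_{\geq0}$ for every $b$, and otherwise that $\Ga$ is regular and the iterated residue in the corresponding direction vanishes. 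Hence the Pochhammer factors of $I_{X_{\bt}}$ act as a built-in selection rule discarding the shifts pointing ``backwards''. When the residue survives, the change of variables $\phi\mapsto\phi-\lambda\bd$ identifies it with the residue of the original integrand at $\phi(p)-\lambda\bd=\phi(p,k)$, the QJK pole labelled by $(p,k)$ in \cref{eq:localization-Fdisk}.

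It then remains to verify that the surviving contributions match the QJK tower bijectively. For fixed $p$ the linear map $\bd\mapsto k=Q_p^{\top}\bd$ is an automorphism of $\BZ^r$ since $Q_p\in\mathrm{SL}(r,\BZ)$, and the surviving condition $k\in\BZ^r_{\geq0}$ is equivalent to $\bd\in\Cone(Q_{i_1},\dots,Q_{i_r})^\vee$. Because $p\in\jk$ forces $\chamber\subseteq\Cone(Q_{i_1},\dots,Q_{i_r})$ and therefore $\Cone(Q_{i_1},\dots,Q_{i_r})^\vee\subseteq\chamber^\vee$, every surviving $\bd$ automatically lies in $\dchamber$; conversely each $(p,k)$ with $k\geq0$ comes from the unique $\bd=Q_p^{-\top}k\in\dchamber$. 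This is a bijection between the nonvanishing $(p,\bd)$-terms of the right-hand side and the QJK poles $\jk\times\BZ^r_{\geq0}$, under which the residues coincide by the previous step, so summing over $p$ and $\bd$ rebuilds the fixed-point expansion of $\cf^D$ on the left.

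I expect the main obstacle to be analytic bookkeeping rather than combinatorics: one must justify interchanging the infinite sum over $\bd$ with the iterated-residue operation, and check that the classical-JK residues of the rewritten integrand are faithfully captured by the QJK prescription with no intermediate poles crossed and no off-shell $\Ga(\ve_j(p,k)/\lambda)$ accidentally landing on a pole. Convergence of the $\bd$-sum is governed by the factors $\eu^{-\lambda\sum_a d_a t^a}$, which are bounded for $\bd\in\dchamber$ and $\bt\in\chamber$ by the defining inequality of $\chamber^\vee$ and decay as $\bt$ moves into the chamber interior, so the rearrangement is legitimate in the instanton-expansion regime where both sides are read as power series in $z_a=\eu^{-\lambda t^a}$.
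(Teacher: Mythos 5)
Your argument is correct and follows essentially the same route as the paper's proof: you rewrite each $\bd$-summand of the right-hand side as the original integrand shifted by $\phi_a\mapsto\phi_a-\lambda d_a$, observe that for a fixed JK pole $p$ the shifted $\Ga$-factors are regular (so the residue vanishes) unless $k=Q_p^{\mathrm t}\bd\in\BZ^r_{\geq0}$, and then match the surviving terms to the QJK tower via the cone duality $\Cone((Q_p^{-1})^{\mathrm t})=\Cone(Q_p)^\vee\subseteq\chamber^\vee$, which is exactly the inclusion the paper establishes. Your closing remarks on interchanging the $\bd$-sum with the residue operation make explicit an analytic point the paper leaves implicit (both sides being read as power series in $z_a=\eu^{-\lambda t^a}$), but the substance of the proof is the same.
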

\begin{proof}
Let us discuss the identity one JK pole $p$ at a time.
On the l.h.s.\ we use the definition of the QJK contour to write
\begin{equation}
\begin{aligned}
 \mathrm{LHS} &= \sum_{k\in\BZ_{\geq0}^r}
 \oint_{\phi=\phi(p)-\lambda (Q_p^{-1})^{\mathrm{t}} k}
 \prod_{a} \frac{\dif\phi_a}{2\pi\ii}
 \eu^{\sum_a \phi_a t^a}\prod_{i} \Ga\left(\tfrac{\e_i+\sum_a\phi_a Q^a_i}
 {\lambda}\right) \\
 &= \sum_{d\in(Q_p^{-1})^{\mathrm{t}}\BZ_{\geq0}^r}
 \oint_{\phi=\phi(p)-\lambda d}
 \prod_{a} \frac{\dif\phi_a}{2\pi\ii}
 \eu^{\sum_a \phi_a t^a}\prod_{i} \Ga\left(\tfrac{\e_i+\sum_a\phi_a Q^a_i}
 {\lambda}\right)~,
\end{aligned}
\end{equation}
where we relabeled the sum in terms of
$d_a = \sum_b k_b (Q_p^{-1})^b_a$.
On the r.h.s.\ we use the definition of the $I$-function and
the change of variables $\td{\phi}_a=\phi_a-\lambda d_a$,
\begin{equation}
\begin{aligned}
 \mathrm{RHS} &=
 \sum_{\bd\in \dchamber}
 \oint_{\phi=\phi(p)} \prod_{a} \frac{\dif\phi_a}{2\pi\ii}
 \eu^{\sum_{a}(\phi_a-\lambda d_a)t^a}
 \prod_{i}\Ga\left(\tfrac{\e_i+\sum_a\phi_a Q^a_i}{\lambda}
 -\sum_{a}d_a Q^a_i\right) \\
 &= \sum_{\bd\in \dchamber}
 \oint_{\td{\phi}=\phi(p)-\lambda d} \prod_{a}\frac{\dif\td{\phi}_a}{2\pi\ii}
 \eu^{\sum_{a}\td{\phi}_a t^a}
 \prod_{i}\Ga\left(\tfrac{\e_i+\sum_a\td{\phi}_a Q^a_i}{\lambda}\right) ~.
\end{aligned}
\end{equation}
The difference between the two sides of the equation is in the range of the sum
over instanton charges $\bd$. At first glance one would like to show that
the two cones $(Q_p^{-1})^{\mathrm{t}}\BZ^r_+$ and $\dchamber$ coincide for
every fixed point $p$. On closer inspection, however, we realize that a weaker
condition is sufficient, namely that
\begin{equation}
\label{eq:cone-relation}
 (Q_p^{-1})^{\mathrm{t}}\BZ^r_+\subseteq \dchamber~.
\end{equation}
This is because if $\bd\notin(Q_p^{-1})^{\mathrm{t}}\BZ^r_+$
then some of the $k_a$ become negative and the corresponding residue integral
picks up a zero of one of the $\Ga$-functions instead of a pole.
We therefore need to prove \cref{eq:cone-relation} for any fixed point $p$.
The l.h.s.\ is the cone generated by the column vectors of the matrix
$(Q_p^{-1})^{\mathrm{t}}$. For brevity we indicate this as
$\Cone((Q_p^{-1})^{\mathrm{t}})$. The cone on the r.h.s.\ is by
definition the integer cone dual to the chamber, i.e.\ $\dchamber
=\chamber^\vee\cap\BZ^r$.
Hence we need to prove the inclusion
\begin{equation}
 \Cone((Q_p^{-1})^{\mathrm{t}}) \subseteq \chamber^\vee\cap\BZ^r~.
\end{equation}
We can now use the simple fact that
\begin{equation}
 \Cone((Q_p^{-1})^{\mathrm{t}}) = \Cone(Q_p)^\vee
\end{equation}
and the fact that inclusion of cones is reversed under duality,
to rewrite \cref{eq:cone-relation} as
\begin{equation}
 \chamber\cap\BZ^r \subseteq \Cone(Q_p)~.
\end{equation}
By definition this is true for any JK pole $p$ and so the content of the
proposition is true.
\end{proof}
The argument used in the proof indicates that JK poles
are the only ones that allow for the integral over the quantum contour to be
expressed via the $I$-function. This observation then leads to the conclusion
that JK poles, together with their towers of quantum corrections, are in 
one-to-one
correspondence with solutions of equivariant PF equations, and
that there is a basis of solution labeled by fixed points of the $\sft$-action.
\begin{definition}
By replacing $x_i$ with $\cD_i$ in the $I$-function, let us define the Givental 
operator
\begin{equation}
 \hat{I}_{X_{\bt}} := \sum_{\bd \in \dchamber} 
\eu^{-\lambda\sum_{a=1}^r d_a t^a}
 \prod_{i=1}^N \left(\frac{\cD_i}{\lambda}\right)_{-\sum_{a} d_a Q^a_i}~.
\end{equation}
\end{definition}
This definition together with \cref{th:main_identity} imply the following.
\begin{corollary}
The $I$-function and the $\hat{I}$-operator are related by the identity
\begin{equation}
 I_{X_{\bt}}=\hat{I}_{X_{\bt}}\cdot\eu^{\sum_{a} \phi_a t^a}~,
\end{equation}
therefore the disk function satisfies the relation
\begin{equation}
 \cf^D
 = \hat{I}_{X_{\bt}} \cdot \cf_{\Ga}~.
\end{equation}
\end{corollary}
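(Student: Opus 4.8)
The plan is to establish the operator identity $I_{X_\bt} = \hat I_{X_\bt}\cdot\eu^{\sum_a \phi_a t^a}$ first, and then feed it into \cref{th:main_identity} to obtain the factorization of $\cf^D$.

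First I would observe that the differential operators $\cD_i = \e_i + \sum_a Q^a_i\,\partial/\partial t^a$ of \cref{eq:differential-operators} have constant coefficients, hence mutually commute, and that $\eu^{\sum_a \phi_a t^a}$ is a simultaneous eigenfunction: $\cD_i\,\eu^{\sum_a \phi_a t^a} = (\e_i + \sum_a Q^a_i\phi_a)\,\eu^{\sum_a \phi_a t^a} = x_i\,\eu^{\sum_a \phi_a t^a}$, where the eigenvalue $x_i$ is independent of $\bt$. Consequently any polynomial in the commuting $\cD_i$ — in particular each Pochhammer factor $(\cD_i/\lambda)_{-\sum_a d_a Q^a_i}$ — acts on $\eu^{\sum_a \phi_a t^a}$ as multiplication by the corresponding Pochhammer symbol in $x_i$. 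Since the instanton factor $\eu^{-\lambda\sum_a d_a t^a}$ is purely multiplicative and commutes with the resulting scalar, applying $\hat I_{X_\bt}$ term by term to $\eu^{\sum_a \phi_a t^a}$ reproduces exactly the sum defining $I_{X_\bt}$. I would note here that in $\hat I_{X_\bt}$ the exponential sits to the left of the Pochhammer operators, so the $\cD_i$ never differentiate it; this ordering is precisely what makes the two sides agree.

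For the second identity I would start from \cref{th:main_identity}, multiply both sides by $\lambda^{-N}$ to recognize the left-hand side as $\cf^D$, and rewrite the right-hand side using the first identity as $\lambda^{-N}\oint_\jk \prod_a \frac{\dif\phi_a}{2\pi\ii}\,(\hat I_{X_\bt}\,\eu^{\sum_a \phi_a t^a})\prod_i \Ga(x_i/\lambda)$. Because the factors $\Ga(x_i/\lambda)$ depend on $\phi$ and $\e$ but not on $\bt$, the $\bt$-differential operator $\hat I_{X_\bt}$ treats them as constants, so $\hat I_{X_\bt}[\eu^{\sum_a \phi_a t^a}\prod_i \Ga(x_i/\lambda)] = I_{X_\bt}\prod_i \Ga(x_i/\lambda)$; moreover $\hat I_{X_\bt}$ commutes with the $\phi$-integration and can be pulled outside the contour. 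What remains inside is $\lambda^{-N}\oint_\jk \prod_a \frac{\dif\phi_a}{2\pi\ii}\,\eu^{\sum_a \phi_a t^a}\prod_i \Ga(x_i/\lambda)$, which I would identify with $\cf_\Ga$ of \cref{eq:Fscl-Ga-class} by applying the recurrence $\Ga(1+z)=z\,\Ga(z)$ to each factor: this turns $\lambda^{-N}\prod_i \Ga(x_i/\lambda)$ into $(\prod_i x_i)^{-1}\prod_i \Ga(1+x_i/\lambda)$, exactly the integrand of $\cf_\Ga$. Hence $\cf^D = \hat I_{X_\bt}\cdot\cf_\Ga$.

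The main obstacle is not any single computation but the careful bookkeeping that lets $\hat I_{X_\bt}$ pass through both the $\bt$-independent $\Ga$-factors and the $\phi$-contour integral; the crucial inputs are the commutativity and constant coefficients of the $\cD_i$, the $\bt$-independence of $\Ga(x_i/\lambda)$, and the left-placement of the instanton exponential in the definition of $\hat I_{X_\bt}$. A secondary point to track is the matching of the $\lambda^{-N}$ prefactor against the $(\prod_i x_i)^{-1}$ together with the shift $\Ga(x_i/\lambda)\mapsto\Ga(1+x_i/\lambda)$; and, as throughout, the sum over $\bd\in\dchamber$ should be read as a formal power series in the instanton variables $z_a=\eu^{-\lambda t^a}$, so that all manipulations are carried out term by term in $\bd$.
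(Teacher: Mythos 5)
Your proof is correct and follows exactly the route the paper intends (the paper omits the proof, stating only that the definition of $\hat I_{X_\bt}$ together with \cref{th:main_identity} imply the corollary): the eigenfunction property $\cD_i\,\eu^{\sum_a\phi_a t^a}=x_i\,\eu^{\sum_a\phi_a t^a}$ gives the first identity term by term, and substituting it into the right-hand side of \cref{th:main_identity}, pulling $\hat I_{X_\bt}$ through the $\bt$-independent $\Ga$-factors and the contour, and using $\Ga(1+z)=z\Ga(z)$ to match $\lambda^{-N}\prod_i\Ga(x_i/\lambda)$ against the integrand of \cref{eq:Fscl-Ga-class} gives the second. Your attention to the operator ordering (the instanton exponential sitting to the left of the Pochhammer factors) and to the $\lambda^{-N}$ bookkeeping correctly fills in the details the paper leaves implicit.
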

\begin{remark} \label{fprmk}
Any solution to the classical cohomology equations can be written
as an integral over the classical JK contour
for some cohomology class $\al(\phi,\e)\in H^\bullet_{\sft}(X_\bt)$
\begin{equation}
 \cf_\al(\bt,\e) = \oint_\jk \prod_{a} \frac{\dif\phi_a}{2\pi\ii}
 \frac{\eu^{\sum_{a} \phi_a t^a}}{\prod_{i}x_i} \al(\phi,\e)
 = \int_{X_{\bt}} \eu^{\varpi_{\bt}-H_\e} \,\al~.
\end{equation}
The semi-classical partition function $\cf_{\Ga}$ corresponds to the choice
of $\al$ equal to the $\hat \Ga$-class of the manifold $X_\bt$.
Moreover, for every fixed point $p$ there exists a class $\pd(p)$ that evaluates
to $0$ on all fixed points but $p$.
Since these classes form a basis for the (localized) equivariant cohomology,
we can then write any classical solution as a linear combination
\begin{equation}
 \cf_\al(\bt,\e) = \sum_{p\in\fp} \al_p(\e) \cf_{\pd(p)}(\bt,\e)~,
 \quad
 \cf_{\pd(p)}(\bt,\e) = \eu^{-H_\e(p)}~,
\end{equation}
where $\al_p(\e)$ are the coefficients of $\al$ in the fixed-point basis.
\end{remark}
One can then use the operator $\hat{I}_{X_{\bt}}$ to construct arbitrary
solutions to equivariant PF equations out of any solution to the classical
cohomology relations.
\begin{proposition}
For a generic solution $\cf_\al(\bt,\e)$ of classical cohomology equations, the
disk function $\hat{I}_{X_{\bt}}\cdot\cf_\al(\bt,\e)$ is a formal solution to
the equivariant PF equations.
\end{proposition}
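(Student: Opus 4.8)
\section*{Proof proposal}

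The plan is to reduce the statement to the fixed-point basis of classical solutions and then to a combinatorial identity for the $I$-function. First I would note that both $\hat{I}_{X_\bt}$ and $\PF^\eq_\gamma$ are assembled from the operators $\cD_i=\e_i+\sum_a Q^a_i\frac{\partial}{\partial t^a}$ and from multiplication by $z^\gamma=\eu^{-\lambda\sum_a\gamma_a t^a}$; since these differentiate only in $\bt$, both commute with multiplication by any function of $\e$ and are therefore $\BC[\e]$-linear. Writing a general classical solution through \cref{fprmk} as $\cf_\al=\sum_{p\in\fp}c_p(\e)\,\cf_{\al_p}$, with $\cf_{\al_p}=\eu^{-H_\e(p)}=\eu^{\sum_a\phi_a(p)t^a}$, it then suffices to prove $\PF^\eq_\gamma(\hat{I}_{X_\bt}\cdot\cf_{\al_p})=0$ for each fixed point $p$ and each $\gamma\in\dchamber$.

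The point of passing to this basis is that $\cf_{\al_p}$ is an exponential $\eu^{\sum_a\phi_a t^a}$ with $\phi_a$ frozen at the fixed-point value $\phi_a(p)$, so by the Corollary $\hat{I}_{X_\bt}\cdot\cf_{\al_p}$ is precisely the integrand $I_{X_\bt}$ evaluated at $\phi=\phi(p)$, and the task becomes the identity $\PF^\eq_\gamma I_{X_\bt}\big|_{\phi=\phi(p)}=0$. I would expand $\PF^\eq_\gamma=A_\gamma-z^\gamma B_\gamma$, with $A_\gamma,B_\gamma$ the two Pochhammer products, acting on $I_{X_\bt}=\eu^{\sum_a\phi_a t^a}\sum_{\bd\in\dchamber}z^\bd\prod_i(x_i/\lambda)_{-Q_i^\bd}$ (writing $G_i=\sum_a\gamma_a Q^a_i$ and $Q_i^\bd=\sum_a d_a Q^a_i$), using that on the $\bd$-th summand each $\cD_i/\lambda$ acts as multiplication by $x_i/\lambda-Q_i^\bd$ while $z^\gamma$ shifts $\bd\mapsto\bd+\gamma$. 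Collecting the coefficient of $z^{\bd'}$ then pits the $A_\gamma$-contribution of the $\bd'$ term against the $B_\gamma$-contribution of the $\bd'-\gamma$ term.

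The engine is a telescoping identity: using $Q_i^{\bd'}=Q_i^{\bd'-\gamma}+G_i$, a short rearrangement of Pochhammer symbols (equivalently ratios of $\Gamma$-functions) shows these two contributions are equal as rational functions of the $x_i$ for every $\bd'$, so the bulk cancels in pairs. At $\phi=\phi(p)$ the Chern roots satisfy $x_{i_b}(p)=0$ for $i_b\in p$, which makes $\prod_i(x_i/\lambda)_{-Q_i^\bd}$ vanish for $\bd\notin C_p:=(Q_p^{-1})^{\mathrm t}\BZ^r_{\geq0}$; since $C_p\subseteq\dchamber$ by \cref{th:main_identity}, the effective sum runs over $C_p$, and the surviving coefficient of $z^{\bd'}$ is nonzero only when exactly one of $\bd',\bd'-\gamma$ lies in $C_p$.

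The step I expect to cost the most care is the vanishing of these boundary coefficients. The case $\bd'\notin C_p$ is harmless: by the bulk identity the surviving term equals the $A_\gamma$-term at $\bd'$, whose Pochhammer product $\prod_i(x_i/\lambda)_{-Q_i^{\bd'}}$ vanishes. The genuine case is $\bd'\in C_p$ with $\bd'-\gamma\notin C_p$: here $k=Q_p^{\mathrm t}\bd'\in\BZ^r_{\geq0}$ has, by the definition of $C_p$, a component with $k_{b_0}<G_{i_{b_0}}$, forcing $G_{i_{b_0}}>0$ and $0\le k_{b_0}\le G_{i_{b_0}}-1$. The factor of $A_\gamma$ coming from $i=i_{b_0}\in p$ is then $(x_{i_{b_0}}(p)/\lambda-k_{b_0})_{G_{i_{b_0}}}=(-k_{b_0})_{G_{i_{b_0}}}$, a rising product passing through $0$, so it vanishes and kills the term. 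Genericity of $\al$ enters only through the equivariant parameters, keeping the values $x_j(p)/\lambda$ off the integer resonances where these rational functions have poles. Collecting all $z^{\bd'}$ yields $\PF^\eq_\gamma I_{X_\bt}|_{\phi=\phi(p)}=0$, and $\BC[\e]$-linearity extends this to every $\cf_\al$; the result is manifestly a formal power series in $z=\eu^{-\lambda t}$, matching the notion of formal solution in the statement.
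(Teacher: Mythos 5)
Your proof is correct, but it takes a genuinely different route from the paper's. The paper's argument is essentially two lines: after decomposing $\cf_\al=\sum_{p\in\fp}c_p(\e)\,\cf_{\al_p}$ exactly as you do, it invokes \cref{th:main_identity} to identify each $\hat{I}_{X_{\bt}}\cdot\cf_{\al_p}$ with the contour integral over the single tower of quantum poles above $p$, and then observes that each such integral satisfies \cref{eq:epf} by the same shift identities (\cref{eq:disk-identity-D1,eq:disk-identity-D2}, i.e.\ the change of variables $\phi\mapsto\phi-\lambda\gamma$) that established the PF equation for the full disk function. You instead bypass the integral representation entirely and verify $\PF^\eq_\gamma I_{X_{\bt}}\big|_{\phi=\phi(p)}=0$ coefficient-by-coefficient in $z$: a telescoping Pochhammer identity kills the bulk, and the boundary terms --- those $\bd'$ with exactly one of $\bd',\bd'-\gamma$ in the cone $C_p=(Q_p^{-1})^{\mathrm t}\BZ^r_{\geq0}$ --- are handled by exhibiting an explicit vanishing factor $(-k_{b_0})_{G_{i_{b_0}}}$ with $0\leq k_{b_0}\leq G_{i_{b_0}}-1$. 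Both arguments are sound; yours is longer but makes explicit two things the paper leaves implicit: why the degrees $\bd\in\dchamber\setminus C_p$ appearing in the definition of $I_{X_{\bt}}$ do not spoil the equation (their summands vanish at $\phi(p)$), and precisely where genericity of $\e$ is needed (to keep $x_j(p)/\lambda$, $j\notin p$, off the integer resonances so that the meromorphic telescoping identity extends by continuity to $\phi=\phi(p)$). The paper's route buys brevity by reusing \cref{th:main_identity} and the already-proved integral form of the PF equation; yours buys a self-contained combinatorial verification at the level of series coefficients.
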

\begin{proof}
If $\cf_\al(\bt,\e)$ solves the classical cohomology equations, then it can be 
written as a linear combination of integrals over classical JK poles.
By \cref{th:main_identity}, the function
\begin{equation}
 \hat{I}_{X_\bt} \cdot \cf_\al(\bt,\e) =
 \sum_{p\in\fp} \al_p(\e)\, \hat I_{X_\bt} \cdot \cf_{\pd(p)}(\bt,\e)
\end{equation}
can be written as a linear combination of integrals,
each of which satisfies equivariant PF equations in \cref{eq:epf}.
(In this sense, we call $\hat I_{X_\bt} \cdot \cf_\al$ an equivariant period.)
\end{proof}
In the K-theoretic case we define
\begin{equation}
 \hat{I}^K_{X_{\bt}} := \sum_{\bd \in \dchamber} 
 \q^{\sum_{a=1}^r d_a T^a}
 \prod_{i=1}^N \left(\Delta_i;\q\right)_{-\sum_{a} d_a Q^a_i}~,
 \quad
 I^K_{X_{\bt}} := \hat{I}^K_{X_{\bt}} \cdot\prod_a w_a^{-T^a}
\end{equation}
and we have the identity
\begin{equation}
 \oint_\qjk \prod_{a} \frac{\dif w_a}{2\pi\ii w_a}
 \prod_a w_a^{T^a} \prod_{i} \frac1{\left(L_i;\q\right)_\infty}
 = \oint_\jk \prod_{a} \frac{\dif w_a}{2\pi\ii w_a}
 I^K_{X_{\bt}} \prod_{i} \frac1{\left(L_i;\q\right)_\infty}~.
\end{equation}
Similarly to the cohomological case, we can generate solutions to the PF
equations by applying the $\hat{I}^K$-operator to a classical K-theory solution,
written as a linear combination of fixed point solutions.

\subsection{Non-equivariant limit, singularities and instantons}

The non-equivariant limit is defined by sending all $\sft$-equivariant
parameters $\e_i$ to zero. In this limit, the equivariant (quantum) cohomology
of $X_\bt$ reduces to ordinary (quantum) cohomology and the operators $\cD_i$
simplify to linear combinations of derivatives
\begin{equation}
 \lim_{\e\to0}\frac{\cD_i}{\lambda} = \frac1{\lambda}
 \sum_a Q_i^a\frac{\partial}{\partial t^a}
 = -\sum_a Q_i^a \theta_a~,
\end{equation}
which act as operators inserting ordinary cohomology classes
$\sum_a\phi_a Q_i^a\in H^2(X_\bt)$.
Picard--Fuchs operators $\PF^\eq_\gamma$ are analytic in the $\e_i$'s, hence
they also degenerate in this limit to the non-equivariant PF operators
$\PF_\gamma$ and similarly one can set all $\e_i$'s to zero in the
$\hat{I}$-operator.
However, the function $\cf_\al(\bt,\e)$ might have a singular behavior near
$\e=0$, and in that case the disk function $\hat{I}_{X_{\bt}}\cdot
\cf_\al(\bt,\e)$ is not analytic at $\e=0$.
This follows from the observation that the degree-zero term in the
instanton expansion of $\hat{I}_{X_{\bt}}$ is the identity operator.
Corrections at higher instanton degree might or might not cure the singularity
in $\cf_\al(\bt,\e)$, according to the details of the geometry of $X_{\bt}$.
The main result of this section is \cref{eq:singular-instantons}, which
establishes a criterion to determine whether instanton contributions to the
disk function are singular or not in the non-equivariant limit.
In our case, the semi-classical part $\cf_{\Ga}$ is indeed singular
in the non-equivariant limit for non-compact manifolds $X_{\bt}$,
as $\cf_{\Ga}$ is a deformation of the volume.
In the compact case this function is regular and so also $\cf^D$ is
regular, since instanton corrections cannot introduce singular behavior.

In order to study the behavior of the instanton corrections we introduce
\textit{instanton operators}
\begin{equation}
 \sfp_{\bd} := \eu^{-\lambda \sum_a d_a t^a}
 \prod_{i=1}^N \left(\frac{\cD_i}{\lambda}\right)_{-\sum_a d_a Q^a_i}
 \quad \text{for} \quad \bd \in \Lambda~.
\end{equation}
From the definition of the $\hat{I}$-operator it follows that we can write
\begin{equation}
 \hat{I}_{X_{\bt}} = \sum_{d\in\dchamber} \sfp_{\bd}~.
\end{equation}
\begin{proposition}
The instanton operators $\sfp_{\bd}$ form an abelian monoid isomorphic
to $\Lambda$.
\end{proposition}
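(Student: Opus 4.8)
The plan is to exhibit an explicit monoid isomorphism $\Lambda \to \{\sfp_{\bd}\}$ and verify that the multiplication on instanton operators corresponds to addition on $\Lambda$. First I would define the map $\bd \mapsto \sfp_{\bd}$ and note that $\sfp_{\mathbf 0} = 1$ is the identity operator (the empty Pochhammer product is $1$ and $\eu^0 = 1$), so the unit of $\Lambda$ maps to the unit operator. The crux of the claim is the homomorphism property $\sfp_{\bd}\,\sfp_{\bd'} = \sfp_{\bd+\bd'}$ for $\bd,\bd' \in \dchamber$, which I would establish by a direct manipulation of the defining formula.

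\smallskip

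The key computation compares the product of the two Pochhammer factors with the single factor at $\bd + \bd'$. For each index $i$ write $n_i(\bd) := -\sum_a d_a Q^a_i$, so that $n_i(\bd+\bd') = n_i(\bd) + n_i(\bd')$. The essential point is the Pochhammer splitting identity
\begin{equation}
 \left(\tfrac{\cD_i}{\lambda}\right)_{n_i(\bd)+n_i(\bd')}
 = \left(\tfrac{\cD_i}{\lambda}\right)_{n_i(\bd)}
 \left(\tfrac{\cD_i}{\lambda}+n_i(\bd)\right)_{n_i(\bd')}~,
\end{equation}
together with the observation that conjugating $\cD_i$ by the instanton weight $\eu^{-\lambda\sum_a d_a t^a}$ implements exactly the shift $\cD_i \mapsto \cD_i + \lambda n_i(\bd)$, by \cref{eq:disk-identity-D2} applied at $\bgamma = \bd$. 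This is what makes the operators combine correctly: the leading exponential of $\sfp_{\bd}$ shifts the argument of the second Pochhammer block so that the two factors reassemble into the Pochhammer at $\bd+\bd'$. Since the $\cD_i$ mutually commute, there is no ordering ambiguity and the product is manifestly commutative. Because $\dchamber = \chamber^\vee \cap \BZ^r$ is closed under addition (it is the lattice points of a cone), $\bd + \bd'$ again lies in $\dchamber$, so the index-set $n_i(\bd+\bd') \leq 0$ condition is preserved and $\sfp_{\bd+\bd'}$ is a well-defined instanton operator.

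\smallskip

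Finally I would check injectivity, which makes the surjection $\bd \mapsto \sfp_{\bd}$ a genuine isomorphism onto its image. Distinct $\bd$ produce distinct exponential prefactors $\eu^{-\lambda\sum_a d_a t^a}$, and since these functions of $\bt$ are linearly independent for distinct $\bd$, the operators $\sfp_{\bd}$ are distinct. Thus the map is a bijective monoid homomorphism, i.e.\ an isomorphism of $\Lambda$ onto the monoid of instanton operators.

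\smallskip

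I expect the main obstacle to be the bookkeeping in the Pochhammer splitting when some $n_i(\bd)$ or $n_i(\bd')$ have opposite signs, since $\left(\tfrac{\cD_i}{\lambda}\right)_n$ is only literally a rising-factorial product for $n \geq 0$; for indices where the sign of $\sum_a d_a Q^a_i$ is not uniform across $\bd$, $\bd'$, and $\bd+\bd'$, one must use the analytic continuation of the Pochhammer symbol (via $(z)_n = \Ga(z+n)/\Ga(z)$) and verify that the splitting identity still holds as an operator identity on the space where $\cf^D$ lives. Handling this uniformly — rather than case-splitting on signs — is the delicate point, but it reduces to the standard functional equation $(z)_{m+n} = (z)_m (z+m)_n$ valid for the Gamma-based definition at all integer $m,n$, which dispatches the issue cleanly.
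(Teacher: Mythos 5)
Your proof is correct and follows essentially the same route as the paper's: commute the exponential instanton weight past a Pochhammer block (which shifts $\cD_i$ by $\lambda n_i$) and reassemble via the functional equation $(z)_{m+n}=(z)_m(z+m)_n$ of the Gamma-based Pochhammer symbol, valid for all integer $m,n$. Your additional checks (unit element, injectivity, uniform treatment of signs) are sensible refinements of the same argument rather than a different approach.
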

\begin{proof}
The composition of instanton operators is commutative and gives:
\begin{equation}
\begin{aligned}
 \sfp_{\bd} \sfp_{\bd'} &=
 \eu^{-\lambda \sum_a d_a t^a}
 \prod_{i} \left(\tfrac{\cD_i}{\lambda}\right)_{-\sum_a d_a Q^a_i}
 \eu^{-\lambda \sum_a d'_a t^a}
 \prod_{i} \left(\tfrac{\cD_i}{\lambda}\right)_{-\sum_a d'_a Q^a_i} \\
 &=
 \eu^{-\lambda \sum_a(d_a+d'_a)t^a}
 \prod_{i} \left(\tfrac{\cD_i}{\lambda}-\sum_a d'_a Q^a_i\right)_{-\sum_a d_a 
Q^a_i}
 \left(\tfrac{\cD_i}{\lambda}\right)_{-\sum_a d'_a Q^a_i} \\
 &=
 \eu^{-\lambda \sum_a(d_a+d'_a)t^a}
 \prod_{i}
 \left(\tfrac{\cD_i}{\lambda}\right)_{-\sum_a(d_a+d'_a)Q^a_i} \\
 &= \sfp_{\bd+\bd'}
\end{aligned}
\end{equation}
for $\bd,\bd' \in \Lambda$.
This completes the proof.
\end{proof}

We can then discuss the behavior of the instanton corrections in the limit
$\e\to0$ by making use of the fact that the instanton operators are proportional
to products of divisor operators $\cD_i$ and when such products correspond
to compact intersections their action makes the integral regular as $\e \to 0$.
\begin{proposition}
\label{eq:singular-instantons}
For any fixed instanton charge $\bd\in\Lambda$, if the intersection
\begin{equation}
 \bigcap_{\{i|\sum_a d_a Q_i^a < 0 \}} D_i
\end{equation}
is compact in $X_{\bt}$, then the instanton corrections proportional to
$\eu^{-\lambda\sum_a d_a t^a}\equiv z^{\bd}$ are analytic at $\e=0$.
Conversely, if the intersection of all divisors $D_i$ with $\sum_a d_a Q_i^a<0$
is non-compact, then the instantons of degree $\bd$ are singular in the
$\e\to0$ limit. 
\end{proposition}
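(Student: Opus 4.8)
The plan is to isolate the coefficient of $z^{\bd}=\eu^{-\lambda\sum_a d_a t^a}$ in the disk function and to show that its behaviour at $\e=0$ is governed, term by term in $\lambda^{-1}$, by a single equivariant volume. Since $z^{\bd}$ is $\e$-independent and nonzero, $\e$-regularity of the instanton correction is equivalent to that of its coefficient $\cf^D_{\bd}$. Starting from the corollary $\cf^D=\hat{I}_{X_{\bt}}\cdot\cf_{\Ga}$ together with \cref{th:main_identity}, and using $\Ga(x_i/\lambda)=\tfrac{\lambda}{x_i}\Ga(1+x_i/\lambda)$, this coefficient is
\begin{equation}
 \cf^D_{\bd} = \oint_\jk \prod_a\frac{\dif\phi_a}{2\pi\ii}
 \frac{\eu^{\sum_a\phi_a t^a}}{\prod_i x_i}
 \prod_{i=1}^N \left(\tfrac{x_i}{\lambda}\right)_{-\sum_a d_a Q^a_i}
 \Ga\!\left(1+\tfrac{x_i}{\lambda}\right)~.
\end{equation}
Writing $S_{\bd}:=\{i\mid\sum_a d_a Q^a_i<0\}$, I would use that a Pochhammer of positive index $(x_i/\lambda)_{n}$ has leading factor $x_i/\lambda$, whereas one of non-positive index is regular and non-vanishing at $x_i=0$. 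Hence the integrand factorizes as $\lambda^{-|S_{\bd}|}\big(\prod_{i\in S_{\bd}}x_i\big)\,\gamma_{\bd}$, where $\gamma_{\bd}$ is a power series in $\lambda^{-1}$ whose coefficients are polynomial in the equivariant classes $x_i$ and whose constant term is a nonzero number $C_{\bd}$.

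Next I would read the insertion $\prod_{i\in S_{\bd}}x_i=\pd\big(\bigcap_{i\in S_{\bd}}D_i\big)$ via \cref{eq:divisor-insertion} as reducing the integral over $X_{\bt}$ to one over the toric subvariety $Y_{\bd}:=\bigcap_{i\in S_{\bd}}D_i$. Concretely, localizing as in \cref{eq:fpsum} and using that, at a fixed point $p$, the data $\phi_a(p)$ give $x_i|_p=0$ for $i\in p$ and $x_i|_p=\ve_i(p)$ otherwise, the sum collapses to the fixed points lying on $Y_{\bd}$, with the normal weights in the $S_{\bd}$-directions cancelled against the numerator; this is precisely the Duistermaat--Heckman formula for the equivariant volume $\cf_{Y_{\bd}}(\bt,\e)$. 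This yields the key identity, order by order in $\lambda^{-1}$,
\begin{equation}
 \cf^D_{\bd} = C_{\bd}\,\lambda^{-|S_{\bd}|}\,\cf_{Y_{\bd}}(\bt,\e)
 + O\big(\lambda^{-|S_{\bd}|-1}\big)~,
\end{equation}
where each higher coefficient is an equivariant integral over $X_{\bt}$ of $\pd(Y_{\bd})$ times a polynomial in the $x_i$, i.e.\ a sum of equivariant volumes of closed subvarieties of $Y_{\bd}$.

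For the forward direction, if $Y_{\bd}$ is compact then every such subvariety is compact, so each coefficient in the $\lambda^{-1}$-expansion is an intersection integral over a compact cycle and is therefore analytic at $\e=0$; hence $\cf^D_{\bd}$ is analytic. This is the instanton-resolved version of \cref{eq:generalized-shift-D}. For the converse, if $Y_{\bd}$ is non-compact then, by the foundational fact that the equivariant volume of a non-compact toric variety has singular terms at $\e=0$, the leading coefficient $C_{\bd}\,\cf_{Y_{\bd}}$ is singular. Because it sits at the definite lowest power $\lambda^{-|S_{\bd}|}$ while all remaining contributions appear at strictly higher powers of $\lambda^{-1}$, no cancellation is possible and $\cf^D_{\bd}$ is singular.

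The main obstacle is the converse, where one must certify that the singular leading term truly survives. The two load-bearing ingredients are (i) the organization of $\cf^D_{\bd}$ as a power series in $\lambda^{-1}$ with a definite lowest power $\lambda^{-|S_{\bd}|}$, which forbids the higher instanton-internal corrections from cancelling the leading singularity, and (ii) the converse of the regularity criterion (non-compact $\Rightarrow$ singular) for the equivariant volume $\cf_{Y_{\bd}}$, inherited from the analysis of \cref{eq:fpsum} in \cite{Nekrasov:2021ked}. The identification $\prod_{i\in S_{\bd}}\cD_i\,\cf=\cf_{Y_{\bd}}$ by localization is the linchpin that converts both claims into the single regular-iff-compact dichotomy for toric equivariant volumes.
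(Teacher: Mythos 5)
Your proof is correct and follows essentially the same route as the paper's: you factor the product $\prod_{\{i\,|\,\sum_a d_a Q^a_i<0\}}\cD_i$ out of the instanton operator via Pochhammer identities and reduce both directions to the regular-iff-compact dichotomy for the equivariant integral with the class $\pd\bigl(\bigcap_i D_i\bigr)$ inserted, which is exactly the content of \cref{eq:generalized-shift-D} invoked in the paper. Your extra bookkeeping of the $\lambda^{-1}$-grading — pinning the singular equivariant volume $\cf_{Y_{\bd}}$ to the definite lowest power $\lambda^{-|S_{\bd}|}$ so that no cancellation against higher orders is possible — is a slightly more explicit justification of the converse than the paper's one-line assertion, but it is the same underlying argument.
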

\begin{proof}
Using the definition of the Pochhammer symbol in \cref{eq:pochhammer} we can 
write
\begin{multline}
 \sfp_{\bd} = \eu^{-\lambda \sum_a d_a t^a}
 \left[ \frac
 {\prod_{\{i|\sum_a d_a Q_i^a<0\}}
 \left(\frac{\cD_i}{\lambda}\right)_{-\sum_a d_a Q^a_i}}
 {\prod_{\{i|\sum_a d_a Q_i^a>0\}}(-1)^{\sum_a d_a Q^a_i}
 \left(1-\frac{\cD_i}{\lambda}\right)_{\sum_a d_a Q^a_i}}
 \right] \\
 = \eu^{-\lambda \sum_a d_a t^a}
 \left[ \frac
 {\prod_{\{i|\sum_a d_a Q_i^a<0\}}
 \left(1+\frac{\cD_i}{\lambda}\right)_{-\sum_a d_a Q^a_i-1}}
 {\prod_{\{i|\sum_a d_a Q_i^a>0\}}(-1)^{\sum_a d_a Q^a_i}
 \left(1-\frac{\cD_i}{\lambda}\right)_{\sum_a d_a Q^a_i}}
 \right]
 \prod_{\{i|\sum_a d_a Q_i^a<0\}}
 \frac{\cD_i}{\lambda}~.
\end{multline}
Therefore, if $\bigcap_{\{i|\sum_a d_a Q_i^a < 0 \}} D_i$ is compact in $X_\bt$,
by the shift \cref{eq:generalized-shift-D} the function
$\sfp_{\bd}\cdot\cf_{\Ga}$ is regular in the non-equivariant limit.
All singularities of the semi-classical integral are
cured by the insertion of the compact class $\prod_{\{i|\sum_a d_aQ_i^a<0\}}
x_i$. If this class is non-compact, then the integral is still singular at
$\e=0$, which implies that this instanton is singular.
\end{proof}

The K-theoretic instanton operators are defined as
\begin{equation}
 \sfp^K_{\bd} := \q^{\sum_a d_a T^a}
 \prod_{i=1}^N \left(\Delta_i;\q\right)_{-\sum_a d_a Q^a_i}
\end{equation}
so that
\begin{equation}
 \hat{I}^K_{X_\bt} = \sum_{\bd\in\Lambda} \sfp^K_{\bd}
\end{equation}
and an analogous statement to \cref{eq:singular-instantons} holds.
One can check that $\lim_{\hbar \to 0} \hat{I}^K_{X_\bt} = \hat{I}_{X_\bt}$.

\section{Regularization}
\label{sec:regularization}

In the previous section we observed that for non-compact CY manifolds both the
classical part of the disk partition functions and the instanton corrections
can have singular behavior in the non-equivariant limit.
This implies that some PF solutions, such as the disk function itself, do not
admit a limit and therefore cannot be used to extract information about
non-equivariant GW theory and other enumerative geometric invariants.
In this section we argue that one can come up with some prescription to
regularize the singular PF solutions using the shift equations in
\cref{sec:shift}. We argue that there is \emph{no canonical way} to
split the function $\cf^D$ into a regular and a singular parts. However, using
compact divisor operators $\cD_i$, with $i \in I_\cmp$, we can construct
a family of functions that are both regular and in a certain sense contain
the same amount of information as the original function.

We define a ``regularization'' of $\cf^D$ to be any function $\cf^D_\reg$
such that
\begin{equation}
\label{eq:regularizeFD}
 \cD_i\cf^D_\reg = \cD_i\cf^D~,
\quad \forall i \in I_\cmp~.
\end{equation}
It clearly follows from this definition that
$\cf^D_\reg$ differs from $\cf^D$ by some singular function
that sits in the common kernel of all compact divisor operators, and
$\cf^D_\reg$ is no longer a solution of PF equations, but it does solve
an extended set of PDEs related to the original PF equations in a specific way,
such that solutions to this system contain the original PF solutions as a
subset. The main feature of this regularization procedure is that generically
\cref{eq:regularizeFD} only defines $\cf^D_\reg$ up to arbitrary elements of the
kernel of the compact divisor operators and therefore contains an intrinsic
ambiguity corresponding to the fact that the splitting between regular and
parts of the disk function is not canonically defined.

The extended system of PDEs are sometimes known as ``modified
Picard--Fuchs equations''.
Some specific cases of extended systems of quantum equations in the context of
local mirror symmetry have previously appeared in ref.~\cite{Forbes:2005xt}
for manifolds with no compact divisors.
Here we give a systematic treatment of these equations in the toric CY case
while also working in the fully equivariant setting.

We start by defining the sub-lattice of singular instantonic contributions as
\begin{equation}
 \Lambda_\sing := \left\{\bd\in\Lambda\,\middle|\,
 \sfp_{\bd}\cdot\cf_{\Ga}\text{ is singular at }\e=0\right\}
 \subseteq\Lambda~.
\end{equation}
If the manifold $X_\bt$ is compact, then $\cf^D$ is regular and the singular
sub-lattice is empty. For non-compact $X_\bt$, the disk function is singular and
therefore $\Lambda_\sing$ contains at least the origin, i.e.\ the semi-classical
contribution. Higher degree instanton contributions could also be singular as
discussed in the previous section. Then $\Lambda_\sing$ is a sub-cone of
$\Lambda$. Similarly, let
\begin{equation}
 \cf^D_\sing := \sum_{\bd\in\Lambda_\sing}
 \sfp_{\bd}\cdot\cf_{\Ga}
\end{equation}
so that $\cf^D-\cf^D_\sing$ is regular by construction.

We give a prescription to regularize $\cf^D$ for a non-compact manifold
$X_\bt$ by making use of the shift equation. For simplicity, we consider
the case when $X_\bt$ admits at least one compact divisor.
The strategy we adopt is the following: we remove the singular part of the
disk function and add it back again after applying to it the shift operator in
\cref{eq:shiftF}. By construction, the resulting function is
regular, but we also show that it differs from the original disk function
by a term that is annihilated by all compact divisor operators.

First observe that
\begin{equation}
 \left(\cf^D-\cf^D_\sing\right)
 + \left(1-\eu^{-\sum_{i\in I_\cmp}m^i\cD_i}\right)\cf^D_\sing
\end{equation}
is a regular function in the non-equivariant limit. To define a regularized
disk function we give a prescription to fix the values of $m$'s:
we look for a matrix $R^i_a$ such that
\begin{equation}
\label{eq:left-inverse}
 \sum_{a=1}^r R^j_a Q_i^a = \delta^j_i~,
 \quad \text{for } i,j \in I_\cmp~,
\end{equation}
i.e.\ a left-inverse of (minus) the $\psi$-map in \cref{eq:psi-map}.
If it exists (it may not be unique), we let
\begin{equation}
 m^i = \sum_{a=1}^r R^i_a t^a
\end{equation}
and we define the regularized disk function
\begin{equation}
\label{eq:Freg}
 \cf^D_\reg(\bt,\e;\lambda) := \cf^D(\bt,\e;\lambda)
 - \eu^{-\sum_a\sum_{i\in I_\cmp}\e_i R^i_a t^a}
 \cf^D_\sing\left(\bt+\psi(R(\bt)),\e;\lambda\right)~.
\end{equation}
For this choice of $m$'s and for every $i \in I_\cmp$, we have
\begin{multline} \label{eq:Freg-compact-div}
 \cD_i\left(\cf^D - \cf^D_\reg\right) =
 (\e_i-\sum_a\sum_{j\in I_\cmp}\e_j R^j_a Q^a_i)
 \eu^{-\sum_a\sum_{j\in I_\cmp}\e_j R^j_a t^a}
 \cf^D_\sing\left(\bt+\psi(R(\bt)),\e;\lambda\right) \\
 + \eu^{-\sum_a\sum_{j\in I_\cmp}\e_j R^j_a t^a}
 \sum_{a,b} Q^a_i (\delta^b_a-\sum_{j\in I_\cmp}Q^b_j R^j_a)
 \frac{\partial\cf^D_\sing}{\partial t^b}
 \left(\bt+\psi(R(\bt)),\e;\lambda\right) =0~,
\end{multline}
where the last equality follows from the property in \cref{eq:left-inverse}.
\begin{proposition} \label{prop:modified-PF}
For every PF operator $\PF^\eq_\gamma$ with $\gamma \in \Lambda$
and every compact divisor $D_i$, $i \in I_\cmp$, we have the \emph{modified
Picard--Fuchs equations}
\begin{equation}
\label{eq:modifiedPF}
\begin{dcases*}
 \cD_i \PF^\eq_\gamma \cdot \cf^D_\reg =0 &
 if $\sum_a\gamma_a Q^a_i \leq 0$, \\
 (\cD_i+\lambda\sum_a\gamma_a Q^a_i) \PF^\eq_\gamma\cdot\cf^D_\reg=0 &
 if $\sum_a\gamma_a Q^a_i>0$.
\end{dcases*}
\end{equation}
\end{proposition}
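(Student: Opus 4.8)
The plan is to transfer the problem from $\cf^D_\reg$ back to the original disk function, for which we already know $\PF^\eq_\gamma\cf^D=0$ by \cref{eq:epf}. Writing $g:=\cf^D-\cf^D_\reg$, linearity gives $\PF^\eq_\gamma\cf^D_\reg=-\PF^\eq_\gamma\,g$, so it suffices to prove $\cD_i\PF^\eq_\gamma\,g=0$ when $\sum_a\gamma_a Q^a_i\le0$ and $(\cD_i+\lambda\sum_a\gamma_a Q^a_i)\PF^\eq_\gamma\,g=0$ when $\sum_a\gamma_a Q^a_i>0$. The single input I would use about $g$ is the identity $\cD_i g=0$ for every $i\in I_\cmp$, which is exactly the content of \cref{eq:Freg-compact-div}; thus $g$ lies in the common kernel of all compact divisor operators and I never need its explicit form.

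Next I would record the two algebraic facts that drive the computation. First, the operators $\cD_j$ mutually commute (they are first-order operators with constant coefficients in $\bt$), so both factors $A:=\prod_{\{j\mid c_j>0\}}(\cD_j/\lambda)_{c_j}$ and $B:=\prod_{\{j\mid c_j\le0\}}(\cD_j/\lambda)_{-c_j}$, being polynomials in this commuting family, commute with $\cD_i$; here $c_i:=\sum_a\gamma_a Q^a_i$ and $\PF^\eq_\gamma=A-\eu^{-\lambda\sum_a\gamma_a t^a}B$. Second, conjugating $\cD_i$ past the instanton exponential produces a shift, $\cD_i\,\eu^{-\lambda\sum_a\gamma_a t^a}=\eu^{-\lambda\sum_a\gamma_a t^a}(\cD_i-\lambda c_i)$, which follows from $\partial_{t^a}$ acting on the exponent. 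The conceptual point is that the combination appearing in the second case is tuned precisely to cancel this shift, since $(\cD_i+\lambda c_i)\,\eu^{-\lambda\sum_a\gamma_a t^a}=\eu^{-\lambda\sum_a\gamma_a t^a}\cD_i$.

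For the case $c_i\le0$ I would commute $\cD_i$ through $\PF^\eq_\gamma$ to get $\cD_i\PF^\eq_\gamma=A\,\cD_i-\eu^{-\lambda\sum_a\gamma_a t^a}B(\cD_i-\lambda c_i)$, apply it to $g$, and discard every term carrying $\cD_i g=0$; what remains is $\lambda c_i\,\eu^{-\lambda\sum_a\gamma_a t^a}B\,g$. If $c_i=0$ the prefactor kills it; if $c_i<0$ then $i$ lies in the index set of $B$, whose factor $(\cD_i/\lambda)_{-c_i}$ has leading term $\cD_i/\lambda$, so $B\,g=0$ as the operators commute. Either way $\cD_i\PF^\eq_\gamma\,g=0$. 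For the case $c_i>0$ I would instead push $\cD_i+\lambda c_i$ through, using the tuned identity above to obtain $(\cD_i+\lambda c_i)\PF^\eq_\gamma=A(\cD_i+\lambda c_i)-\eu^{-\lambda\sum_a\gamma_a t^a}B\,\cD_i$; applied to $g$ the second term dies and the first reduces to $\lambda c_i\,A\,g$, which vanishes because now $i$ lies in the index set of $A$ and the factor $(\cD_i/\lambda)_{c_i}$ again begins with $\cD_i/\lambda$.

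I expect no genuine obstacle: the argument is purely operator-algebraic once $\cD_i g=0$ is in hand. The only point demanding care is the bookkeeping of which Pochhammer factor carries the annihilating $\cD_i/\lambda$, which is why the degenerate subcase $c_i=0$ must be separated (there the relevant factor is trivial and one relies instead on the vanishing scalar prefactor $\lambda c_i$). Verifying that $A$ and $B$ really commute with $\cD_i$ — i.e.\ that no cross terms among distinct $\cD_j$ spoil the factorization — is routine but should be stated explicitly.
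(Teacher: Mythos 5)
Your proposal is correct and takes essentially the same route as the paper: both arguments rest on the input $\cD_i\bigl(\cf^D-\cf^D_\reg\bigr)=0$ from \cref{eq:Freg-compact-div} together with commuting $\cD_i$ (respectively $\cD_i+\lambda\sum_a\gamma_a Q^a_i$) past the instanton exponential in $\PF^\eq_\gamma$. The only difference is organizational — the paper absorbs the resulting shift into a Pochhammer factor with a $\delta_{ij}$ offset so that a single $\cD_i$ factors out on the right, whereas you handle the residual $\lambda\sum_a\gamma_a Q^a_i$ term by exhibiting the leading $\cD_i/\lambda$ factor inside the relevant Pochhammer product.
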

\begin{proof}
Since $\cf^D$ is a solution of ordinary PF equations, it is
also a solution of modified PF equations.
We compute the commutation relation between
$\cD_i$ and the PF operator. There are two cases:
if $\sum_a\gamma_a Q^a_i \leq 0$, then
\begin{equation}
 \cD_i \PF^\eq_\gamma
 = \left[\prod_{\{j|\sum_a\gamma_a Q^a_j>0\}}
 \left(\tfrac{\cD_j}{\lambda}\right)_{\sum_a\gamma_a Q^a_j}
 -\eu^{-\lambda \sum_{a} \gamma_a t^a}
 \prod_{\{j|\sum_a\gamma_a Q^a_j\leq0\}}
 \left(\tfrac{\cD_j}{\lambda}+\delta_{i,j}\right)_{-\sum_a\gamma_a Q^a_j}
 \right]\cD_i~.
\end{equation}
If instead $\sum_a\gamma_a Q^a_i>0$, then
\begin{multline}
 (\cD_i+\lambda\sum_a\gamma_a Q^a_i) \PF^\eq_\gamma = \\
 = \left[\prod_{\{j|\sum_a\gamma_a Q^a_j>0\}}
 \left(\tfrac{\cD_j}{\lambda}+\delta_{i,j}\right)_{\sum_a\gamma_a Q^a_j}
 -\eu^{-\lambda \sum_{a} \gamma_a t^a}
 \prod_{\{j|\sum_a\gamma_a Q^a_j\leq0\}}
 \left(\tfrac{\cD_j}{\lambda}\right)_{-\sum_a\gamma_a Q^a_j}
 \right]\cD_i~.
\end{multline}
Applying this to $\cf^D_\reg$ together with \cref{eq:Freg-compact-div},
we obtain the claim.
\end{proof}
\begin{remark}
In the second case the semi-classical limit of the
modified PF equations is the same as that of the ordinary PF equations,
while in the first case the semi-classical limit
gives different classical relations. In particular, the order of the PDEs is
increased by one. This implies that in the non-equivariant limit there are
logarithmic solutions of degree higher than those of the ordinary
non-equivariant PF equations.
\end{remark}

We argue that $\cf^D_\reg$ is obtained as a sum of two solutions of the
modified PF equations in such a way that the singularities in the two
cancel out and give a regular solution. While this is somewhat nice, we remark
that $\cf^D_\reg$ is not itself a solution of the ordinary PF equations.
This follows from the fact that its semi-classical part does not
satisfy the classical cohomology relations. However, the
Givental operator associated to the modified PF equations is the same as the
operator associated to the ordinary PF equations.

\section{Enumerative geometry} \label{sec:Gromov-Witten}

We elucidate the relation of our disk partition functions
to Gromov-Witten theory and related computations in the enumerative geometry
of the target $X_\bt$. The discussion focuses mostly on the cohomological
version of the story, as the K-theoretic version is less understood
\cite{Givental:2016afst6, Givental:many, Lee:2001qkt,
Jockers:2018sfl, Jockers:2019wjh, Garoufalidis:2021cha, Chou:2022qki}.
While a connection to genus-zero GW theory is
expected on general grounds, the details of how to match the disk function
$\cf^D$ with counts of stable maps to $X_\bt$ from first principles are still
to be worked out. Nevertheless, we are able to make some speculations deriving
from explicit analysis of the disk function in various examples.

First, we review the connection to enumerative
geometry for compact CY manifolds.
Next we discuss the generalization to non-compact CY manifolds with focus on
toric quotients, where the need for equivariance becomes manifest.

\subsection{Review of the compact case}

In this subsection, we consider compact CY targets $X$ to which Givental's 
formalism
can be applied, e.g.\ compact toric complete intersections 
\cite{Givental:9701016}.
The solutions to non-equivariant PF equations are obtained by acting with the
corresponding $\hat{I}$-operator on solutions to non-equivariant classical
cohomology relations.
In the compact case, these classical solutions are polynomials in the
Kähler moduli $t^a$ and there is a one-to-one map between solutions and compact
cycles in the homology lattice of $X$.
In particular, there are always the solution corresponding to the point
$\pt\in H_0(X)$ and the fundamental class $[X]$. More generally, the
mapping between solutions and cycles goes as follows. Let $C$ be a (compact)
cycle, then there is a classical solution
$\Pi^{\scl}(C)$ defined as
\begin{equation}
\label{eq:fundamental-scl-sol}
 \Pi^{\scl}(C) := (-\lambda)^{\dim_\BC C} \int_{X} \eu^{\varpi_{\bt}} \pd(C)
 = (-\lambda)^{\dim_\BC C}\int_{C} \eu^{\varpi_{\bt}}~.
\end{equation}
This solution is a polynomial in $t^a$ of degree
equal to the complex dimension of the cycle $C$. The coefficients of the
polynomial encode information about the intersection numbers of $C$ with all
other cycles.

From this definition it follows that
\begin{equation}
 \Pi^{\scl}(\pt) = 1~,
 \quad
 \Pi^{\scl}(C^a) = - \lambda t^a = \log z_a~,
 \quad
 \dots~,
 \quad
 \Pi^{\scl}(X) = (-\lambda)^\cd p_\cd (\bt)~,
\end{equation}
where we used $\int_{C^a} \varpi_\bt = t^a$ for $C^a$ a basis of $H_2(X)$
and $p_\cd (\bt)$ is the intersection polynomial of $X$.
The full non-equivariant PF solution is obtained by acting with
Givental's operator,
\begin{equation}
 \Pi(C) := \hat{I}_X \cdot \Pi^\scl(C)~.
 \end{equation}
Since $\Pi^\scl(C)$ is polynomial in $t^a$, one can compute the full solution
by expanding the $\hat{I}$-operator as a power series in the derivatives
$\tfrac{\partial}{\partial t^a}$ up to order equal to the degree of the
classical solution. All contributions of higher order annihilate the
polynomial and do not contribute to the solution. This gives an
efficient algorithm to construct PF solutions, completely equivalent to the
standard Frobenius method.

Let us consider the familiar example of the quintic $X_5$. The PF operator is
\begin{equation}
 \PF = \theta^4 - 5 z \left(1+5\theta\right)_4
 \quad\text{with}\quad
 \theta = -\tfrac1{\lambda}\tfrac{\partial}{\partial t}
 = z\tfrac{\partial}{\partial z}
\end{equation}
from which we can construct the Givental operator
\begin{equation}
 \hat{I}_X = 
 \sum_{d=0}^\infty z^d
 \frac{\left(1+5\theta\right)_{5d}}
 {\left(1+\theta\right)_d^5}~,
\end{equation}
which can be expanded as
\begin{multline}
 \hat{I}_X
 = G^{(0)} + G^{(1)} \theta
 + \left(\tfrac12 G^{(2)} - \tfrac{5\pi^2}{3}G^{(0)}\right)
 \theta^2
 + \left(\tfrac16 G^{(3)} + 40\zeta(3)G^{(0)} - \tfrac{5\pi^2}{3}G^{(1)}\right)
 \theta^3 \\
 + \left(\tfrac1{24}G^{(4)} - \tfrac{\pi^4}{3}G^{(1)} + 40\zeta(3)G^{(1)} - 
\tfrac{5\pi^2}{6}G^{(2)}\right)
 \theta^4 + \dots
\end{multline}
with
\begin{equation}
 G^{(i)} := \sum_{d=0}^\infty z^d \left(\tfrac{\partial}{\partial d}\right)^i
 \frac{\Ga(5d+1)}{\Ga(d+1)^5}~.
\end{equation}
Classical solutions are polynomials of degree not higher than 3, which are
annihilated by $\theta^4$. The homology lattice has dimension 4, hence we have
4 solutions to the PF equation, which are usually referred to as periods,
\begin{equation}
\begin{aligned}
 \Pi(\pt) &= G^{(0)} ~, \\
 \Pi(C^1) &= G^{(0)} \left[\log z + \tfrac{G^{(1)}}{G^{(0)}}\right]
 = G^{(0)} \log\td{z} ~, \\
 \Pi(C_1) &= G^{(0)} \left[\tfrac52 \log\td{z}^2
 + 5\left( \tfrac{G^{(2)}}{2G^{(0)}}
 - \tfrac{(G^{(1)})^2}{2(G^{(0)})^2} - \tfrac{5\pi^2}{3} \right)\right] ~, \\
 \Pi(X_5) &= G^{(0)} \big[\tfrac56 \log\td{z}^3
 + 5\left( \tfrac{G^{(2)}}{2G^{(0)}}
 - \tfrac{(G^{(1)})^2}{2(G^{(0)})^2}
 - \tfrac{5\pi^2}{3} \right) \log\td{z} ~,\\
 &\qquad + 5\left( 40\zeta(3) + \tfrac{G^{(3)}}{6G^{(0)}}
 - \tfrac{G^{(1)} G^{(2)}}{2(G^{(0)})^2}
 + \tfrac{(G^{(1)})^3}{3(G^{(0)})^3}
 \right) \big] ~,
\end{aligned}
\end{equation}
where $C^1$ is the generator of $H_2(X_5)$ and $C_1$ is the generator
of $H_4(X_5)$.

One can introduce flat coordinates $\td{z}_a := z_a\,
\eu^{I^a_1(z)/I_0(z)}$ defined so that
$\Pi(C^a)/\Pi(\pt)=\log\td{z}_a$,
where $I_0,I^a_1$ are the coefficients of the Givental
operator in the series expansion in $\theta_a$, i.e.\
\begin{equation}
 \hat{I}_X
 = \sum_{n=0}^\infty \sum_{a_1,\dots,a_n} I_n^{a_1,\dots,a_n}
 \theta_{a_1}\cdots\theta_{a_n} ~.
\end{equation}
Observe that for general CYs the zeroth-order term $I_0$ can be non-trivial,
but for all toric CYs this function is identically 1.
The change of coordinates $\td{z}(z)$ is known as \emph{mirror map}.

Mirror symmetry predicts that solutions to the PF equations for a compact
CY manifold encode information about its genus-zero Gromov--Witten invariants
$N^0_{\bd}$. More specifically, one can read the GW potential $\Phi^0$
from instanton corrections to the classical solutions 
\begin{equation}
 \Phi^0(\td{z}) = (-\lambda)^\cd p_\cd (\td{\bt})
 + \Phi_\inst^0(\td{z})~,
 \quad
 \Phi_\inst^0(\td{z}) = \sum_{\bd\neq 0} N^0_{\bd} \td{z}^{\bd}~,
 \quad
 \td{z}^{\bd} = \prod_{a} \td{z}_a^{d_a}~,
\end{equation}
where $\bd=(d_1,\dots,d_r)$ is a non-zero effective class in $H_2(X,\BZ)$ that
labels the degree of a non-constant map from a genus-zero surface to $X$.
The classical part of the potential is a generating function of classical
intersection numbers
\begin{equation}
 \kappa_{a_1,\dots,a_\cd}
 = \frac{\partial^\cd}
 {\partial t^{a_1} \cdots \partial t^{a_\cd}} p_\cd (\bt)~.
\end{equation}
It is then conjectured that the potential $\td{\Phi}^0(\td{z})$ can be
re-expanded over a basis of PolyLogs with integer
coefficients defining the Gopakumar--Vafa (GV) invariants $n^0_{\bd}$ that
enumerate rational embedded curves of class ${\bd}$ and genus zero.
In the following we drop the label for the genus since we are only
considering genus-zero invariants.

As all CY twofolds are Hyperkähler, their GW invariants
are trivial, so PF solutions in complex dimension two only encode classical
information (after mirror map)
\begin{equation}
\label{eq:CY2}
 \Pi(\pt) = I_0~,
 \quad
 \frac{\Pi(C^a)}{\Pi(\pt)} = \log\td{z}_a~,
 \quad
 \frac{\Pi(X)}{\Pi(\pt)}
 = \tfrac12 \sum_{a,b} \kappa_{ab} \log\td{z}_a \log\td{z}_b~.
\end{equation}

The case of CY threefolds is the most studied one.
The GV conjecture can be stated as
\begin{equation}
 \Phi_\inst^0(\td{z}) = \sum_{\bd\neq 0} n_{\bd} \Li_3(\td{z}^{\bd}) 
\end{equation}
and the GV invariants can be obtained via the Möbius inversion formula
\begin{equation}
 n_{\bd} = \sum_{k|\bd} N_{\bd/k} \frac{\mu(k)}{k^3}~,
\end{equation}
where $\mu(k)$ is the Möbius function.

The solutions to the PF equations are conjectured to be
\begin{equation}
\label{eq:CY3-t3}
\begin{aligned}
 \frac{\Pi(X)}{\Pi(\pt)}
 &= \sum_a \td{t}^a \frac{\partial\Phi^0}{\partial\td{t}^a} -2\Phi^0 \\
 &= \tfrac16 \sum_{a,b,c} \kappa_{abc} \log\td{z}_a \log\td{z}_b \log\td{z}_c
 + \sum_{\bd\neq0} n_{\bd} \log(\td{z}^{\bd}) \Li_2(\td{z}^{\bd})
 -2 \sum_{\bd\neq0} n_{\bd} \Li_3(\td{z}^{\bd})~,
\end{aligned}
\end{equation}
\begin{equation}
\label{eq:CY3-t2}
 \frac{\Pi(C_a)}{\Pi(\pt)}
 = -\frac{1}{\lambda}\frac{\partial\Phi^0}{\partial\td{t}^a}
 = \tfrac12 \sum_{b,c} \kappa_{abc} \log\td{z}_b \log\td{z}_c
 + \sum_{\bd\neq0} n_{\bd} d_a \Li_2(\td{z}^{\bd})~,
\end{equation}
\begin{equation}
\label{eq:CY3-t1}
 \frac{\Pi(C^a)}{\Pi(\pt)} = -\lambda\td{t}^a = \log\td{z}_a
\end{equation}
with $C^a\in H_2(X)$ and $C_a\in H_4(X)$ such that $C^a\cap C_b=\delta^a_b$.

In the case of a CY fourfold it is conjectured that
\begin{equation}
\label{eq:CY4-t2}
 \frac{\Pi(C_{ab})}{\Pi(\pt)}
 = \tfrac12 \sum_{c,d} \kappa_{abcd} \log\td{z}_c \log\td{z}_d
 + \sum_{\bd\neq0}n_{\bd}(C_{ab})\Li_2(\td{z}^{\bd})~,
\end{equation}
where $C_{ab}\in H_4(X)$ and
\begin{equation}
 \sum_{\bd\neq 0} N_{\bd} \td{z}^{\bd}
 = \sum_{\bd\neq 0} n_{\bd} \Li_2(\td{z}^{\bd}),
 \quad
 n_{\bd} = \sum_{k|\bd} N_{\bd/k} \frac{\mu(k)}{k^2}~.
\end{equation}

Solutions with higher order classical behavior have more complicated expansions
in GV invariants that we do not reproduce here.
See refs.~\cite{Klemm:2007in, Honma:2013hma} for explicit formulas.

For CYs of higher dimension such formulas are not known and we do not
consider such examples in this section (even though solutions to PF equations
exist in any dimension).

Let us go back to the example of the quintic $X_5$. Matching the solutions we
found to the conjectural formulas for CY$_3$, we obtain the identities
\begin{equation}
 5\left( \tfrac{G^{(2)}}{2G^{(0)}}
 - \tfrac{(G^{(1)})^2}{2(G^{(0)})^2} - \tfrac{5\pi^2}{3} \right)
 = \sum_{d=1}^\infty n_d d \Li_2(\td{z}^d)
\end{equation}
and
\begin{equation}
 5\left( 40\zeta(3) + \tfrac{G^{(3)}}{6G^{(0)}}
 - \tfrac{G^{(1)} G^{(2)}}{2(G^{(0)})^2}
 + \tfrac{(G^{(1)})^3}{3(G^{(0)})^3}
 \right)
 = -2\sum_{d=1}^\infty n_d \Li_3(\td{z}^d)~,
\end{equation}
which give the well-known GV invariants of $X_5$.

\subsection{Non-compact case}

In the non-compact case the discussion is more involved,
as the volume is only defined equivariantly and it is a
divergent quantity in the non-equivariant limit. This is the case
relevant to our story, since all toric CY quotients are non-compact.
In the following, $X_\bt$ is a toric Kähler quotient with vanishing first
Chern class as described in \cref{sec:setup}.

We consider the fully equivariant PF operators $\PF^\eq$.
The solution is obtained by acting with the $\hat{I}$-operator on a basis
of classical solutions to the equivariant cohomology relations.
These solutions are naturally labeled by fixed points of the
torus action, i.e.\ basis elements of the localized equivariant cohomology ring.
By the localization formula \cref{eq:fpsum}, we can write $\cf(\bt,\e)$ as a sum
over this basis. Generically, to each fixed point $p\in\fp$ we can
associate the classical solution
\begin{equation}
 \Pi^\scl(p,\e) := \int_{X_\bt} \eu^{\varpi_\bt-H_\e} \pd(p)
 = \eu^{-H_\e(p)}
\end{equation}
with $H_\e(p)$ as in \cref{eq:local-hamiltonian} and $\pd(p)\in
H^{2\cd}_\sft(X_\bt)$ defined as the pushforward of $1\in H^0_\sft(p)$ along
the inclusion of the fixed-point $p\hookrightarrow X_\bt$. When comparing with
the non-equivariant case, we immediately notice that each of these solutions
goes to one in the $\e\to0$ limit. A better choice of basis to perform the 
comparison is
obtained by performing the equivariant upgrade of \cref{eq:fundamental-scl-sol}.
We then define for each cycle $C$ the equivariant solution
\begin{equation}
 \Pi^\scl(C,\e) := (-\lambda)^{\dim_\BC C}
 \int_{X_\bt} \eu^{\varpi_\bt-H_\e} \pd(C)~,
\end{equation}
which expands naturally over the basis of $\Pi^\scl(p,\e)$. These are classical
solutions that give rise to full quantum solutions when we act on them with the
equivariant Givental operator
\begin{equation}
 \Pi(C,\e) := \hat{I}_{X_\bt} \int_{X_\bt} \eu^{\varpi_\bt-H_\e} \pd(C)
 \quad\Rightarrow\quad
 \PF^\eq_\gamma \Pi(C,\e) =0~.
\end{equation}
By analogy with the compact case, we call the functions $\Pi(C,\e)$
\emph{equivariant periods}, since they solve equivariant PF equations.
When $C$ is compact, the integral in $\Pi^\scl(C,\e)$ restricts to
an integral over a compact space, therefore it defines an analytic function in
the $\e_i$'s and its non-equivariant limit is a finite quantity. As the
$\hat{I}$-operator cannot introduce singularities, the same is true for the full
PF solution. Then we have
\begin{equation}
 \lim_{\e\to0} \Pi(C,\e) = \Pi(C)
 \quad
 \text{for $C$ compact}~.
\end{equation}
On the other hand, when $C$ is non-compact, the solution $\Pi(C,\e)$ is not
analytic at $\e=0$. As $X_\bt$ is non-compact, there is no fundamental
class in homology and this is reflected in the fact that $\Pi(X_\bt,\e)$
does not admit a non-equivariant limit of the form \cref{eq:CY3-t3}.
To obtain a well-defined
non-equivariant quantity, we need to perform some regularization.\footnote
{For instance, even classical intersection numbers are not uniquely defined
unless the intersection locus is compact
(see ref.~\cite{Forbes:2005wd} for earlier attempts at regularization).}

The number of independent solutions of the equivariant PF equations
is equal to the number of fixed points,
which is the same as the Euler number $\chi$.
By definition, this equals the dimension of the homology lattice, i.e.\ the
number of independent compact cycles $C$.
This implies that compact equivariant periods generate all PF solutions
and the non-equivariant limit preserves the total number of independent
solutions.

The GV expansion of the GW potential is expected to have an equivariant
generalization but these formulas have not been derived yet. Nevertheless, we
can read some non-equivariant invariants from the $\e \to 0$ limit of the
solutions $\Pi(C,\e)$ when $C$ is compact. The numerical invariants obtained
this way are well-defined and non-ambiguous.
However, not all GV invariants $n_d$ can be obtained this way.
Those that do not appear in the limit of compact solutions
are only defined equivariantly. A regularization scheme for these solutions is
necessary and we show in examples that this allows to compute the integers
$n_d$.
The result however depends on the chosen regularization scheme
and we argue that there is an intrinsic ambiguity in their definition
as non-equivariant quantities.

We argue that, for solutions with regular behavior in $\e$, the
same type of GV formulas hold once the non-equivariant limit is taken, while
for those that do not admit a limit a regularization needs to be performed
first. For the latter, GV formulas only hold up to a correction term
$\delta$ that is annihilated by all compact divisor operators. This term can
bring both classical and quantum corrections that depend on some
non-canonical choices. In particular, we argue that $\cf^D_\reg$ as defined in
\cref{sec:regularization} provides a regularization for the equivariant
solution $\Pi(X_\bt,\e)$.

For toric CYs with $H^2_\cmp(X_\bt)\neq0$,
we define a regularized volume as any function
$\cf_\reg(\bt,\e)$ that is analytic at $\e=0$ and such that
\begin{equation}
\label{eq:regFcl}
 \cD_i\cf_\reg(\bt,\e) = \cD_i\cf(\bt,\e)~,
 \quad
 \forall i \in I_\cmp~.
\end{equation}
If $H^2_\cmp(X_\bt)$ is empty but $H^4_\cmp(X_\bt) \neq 0$, then
we define a regularized volume as any regular function such that
\begin{equation}
 \cD_i\cD_j\cf_\reg(\bt,\e) = \cD_i\cD_j\cf(\bt,\e)~,
 \quad
 \forall i,j \text{ s.t. $D_i\cap D_j$ is compact}~.
\end{equation}
and similarly for higher-codimension compact intersections.
This condition guarantees that when the intersection is compact the
corresponding intersection numbers are the same before and after regularization.
From this we can define regularized intersection numbers
\begin{equation}
 \kappa^\reg_{a_1,\dots,a_\cd}
 = \frac{\partial^\cd}
 {\partial t^{a_1} \cdots \partial t^{a_\cd}}
 \cf_\reg(\bt,0) \in \mathbb Q ~.
\end{equation}
\begin{remark}
If $H^2_\cmp (X_\bt) \cong H_{2\cd-2} (X_\bt)$ is non-empty,
then there is at least one compact divisor $D_i = \sum_{a} D_i^a C_a$
and the corresponding equivariant period $\Pi(D_i,\e)$ is regular.
Then by \cref{eq:regFcl} this period is equal to its regularization,
\begin{equation}
 \Pi_\reg(D_i,\e)\equiv\Pi(D_i,\e)~.
\end{equation}
Similarly, if $H^4_\cmp (X_\bt) \cong H_{2\cd-4} (X_\bt)$ is non-empty,
we can find two divisors that intersect to a compact subspace
and the corresponding period is regular
\begin{equation}
 \Pi_\reg(D_i\cap D_j,\e)\equiv\Pi(D_i\cap D_j,\e)~.
\end{equation}
\end{remark}
From the remark it follows that for a toric CY three-fold with a
compact divisor $D_i$
\begin{multline}
\label{eq:compactDivPeriod}
 \Pi(D_i,0) = \lim_{\e\to0}\sum_a D^a_i\,\Pi(C_a,\e)
 = -\frac{1}{\lambda}\sum_a D^a_i \frac{\partial\Phi^0}{\partial\td{t}^a} \\
 = \sum_a D_i^a \left[
 \tfrac12 \sum_{b,c} \kappa^\reg_{abc} \log\td{z}_b \log\td{z}_c
 + \sum_{\bd\neq0} n_{\bd} d_a \Li_2(\td{z}^{\bd})
 \right]~.
\end{multline}
While the combination of derivatives of the GW potential
in \cref{eq:compactDivPeriod} is well-defined in the
non-equivariant limit, this is not necessarily true for each single derivative
$\partial\Phi^0/\partial\td{t}^a$ as the periods $\Pi(C_a,\e)$ might
not have a regular behavior when considered individually.
In the next sections we show this explicitly for some concrete examples
(see \cref{ex:f2,ex:localA2}).
For a toric CY four-fold with a compact intersection $D_i\cap D_j$,
we obtain
\begin{multline}
 \Pi(D_i\cap D_j,0) = \lim_{\e\to0}\sum_{a,b} D^a_i D^b_j\, \Pi(C_{ab},\e) \\
 = \sum_{a,b} D^a_i D^b_j \left[
 \tfrac12 \sum_{c,d} \kappa^\reg_{abcd} \log\td{z}_c \log\td{z}_d
 + \sum_{\bd\neq0}n_{\bd}(C_{ab})\Li_2(\td{z}^{\bd}) \right]~.
\end{multline}
While the limit of the double sum is well-defined,
each term $\Pi(C_{ab},\e)$ may be singular.

Let us define the function
\begin{equation}
 \Pi_\reg(X_\bt,\e)
 := (-\lambda)^\cd \hat{I}_{X_\bt} \cf_\reg(\bt,\e)~,
\end{equation}
which by construction satisfies the following properties:
\begin{itemize}
\item it is analytic at $\e=0$,
\item it satisfies the modified PF equations in \cref{eq:modifiedPF}
\end{itemize}
Observe, as previously pointed out, that the choice of regularization is
not unique and the prescription in \cref{sec:regularization} is different
from $\Pi_\reg(X_\bt,\e)$. It is however true that both choices
carry a certain amount of ``universal'' enumerative geometric data that is
regularization independent and leads to well-defined integer GV invariants.
The difference between the two regularization schemes is due to some
intrinsic ambiguity in the definition of the non-equivariant limit of
$\Pi(X_\bt,\e)$. The exact relation between the two regularizations
is clarified by the following.
\begin{conjecture}
Let $X_\bt$ be a smooth toric CY three-fold.
The following GV formula holds
\begin{equation} \label{eq:localCY3-t3}
 \lim_{\e\to0} \Pi_\reg(X_\bt,\e)
 = \tfrac16 \sum_{a,b,c} \kappa^\reg_{abc}
 \log\td{z}_a \log\td{z}_b \log\td{z}_c
 + \sum_{\bd\neq0} n^\reg_{\bd} \log(\td{z}^{\bd}) \Li_2(\td{z}^{\bd})
 -2 \sum_{\bd\neq0} n^\reg_{\bd} \Li_3(\td{z}^{\bd})~,
\end{equation}
where the GV invariants are also regularized.
The regularized disk function in \cref{eq:Freg} and
the regularized period $\Pi_\reg(X_\bt,\e)$ are related as
\begin{equation} \label{eq:Freg-GV-CY3}
 (-\lambda)^3\lim_{\e\to0}\cf^D_\reg(\bt,\e;\lambda)
 = \Pi_\reg(X_\bt,0)
 - \tfrac{\pi^2}{6} \sum_{a,b,c}
 \tfrac12\kappa^\reg_{abc}\,c_2^{ab}\log\td{z}_c
 + \zeta(3) \chi
 + \delta~,
\end{equation}
where $c_2 = \tfrac12 \sum_{a,b} c_2^{ab} \phi_a\phi_b$ is the second
Chern class and $\delta$ is in the kernel of all compact divisor operators.
If $X_\bt$ has a compact divisor $D_i$ and $n_{\bd}$ can be read from
\cref{eq:compactDivPeriod}, then that integer is uniquely defined, 
$n^\reg_{\bd}\equiv n_{\bd}$.
If instead $n^\reg_{\bd}$ only appears in \cref{eq:localCY3-t3}
(i.e. when $\sum_a d_a D^a_i=0$ for all $i\in I_\cmp$), then its value is
not guaranteed to be integer and it might depend on the choice of
regularization.
\end{conjecture}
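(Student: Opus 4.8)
The plan is to split the conjecture into two logically separate parts: the structural relation \cref{eq:Freg-GV-CY3} between the regularized disk function and the regularized period, which I expect to be established by a direct $\Ga$-class computation, and the Gopakumar--Vafa expansion \cref{eq:localCY3-t3}, which is the regularized analogue of the compact GV conjecture and hence conjectural at the same level as its compact counterpart \cref{eq:CY3-t3}. The integrality and uniqueness claims would then be obtained as consequences of these two, read off through the preceding Remark on compact-divisor periods.

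First I would establish \cref{eq:Freg-GV-CY3}. Starting from the factorization $\cf^D = \hat{I}_{X_\bt}\cdot\cf_{\Ga}$ and the definition $\Pi_\reg(X_\bt,\e)=(-\lambda)^\cd\hat{I}_{X_\bt}\cf_\reg$, the difference $(-\lambda)^3\cf^D_\reg - \Pi_\reg$ is governed, up to the regularization-dependent term, by the gap between the $\hat\Ga$-twisted volume $\cf_\Ga$ and the bare regularized volume $\cf_\reg$. I would expand $\log\hat\Ga(TX_\bt)=\sum_i\log\Ga(1+x_i/\lambda)$ as in \cref{eq:gamma-class}, use the Calabi--Yau condition $c_1=0$ to kill the $\gamma$-term, and retain the two surviving contributions for a threefold: the $\zeta(2)=\pi^2/6$ term, which pairs $\sum_i x_i^2=-2c_2$ against the degree-one part of $\eu^{\varpi_\bt}$ to produce $-\tfrac{\pi^2}{6}\cdot\tfrac12\kappa^\reg_{abc}c_2^{ab}\log\td z_c$ after the mirror map, and the $\zeta(3)$ term, which pairs $\sum_i x_i^3=3c_3$ against the top class so that $\zeta(3)\int_{X_\bt}c_3$ yields $\zeta(3)\chi$ once the overall $(-\lambda)^3$ is accounted for. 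Acting with $\hat{I}_{X_\bt}$ promotes $z\to\td z$ and $\kappa\to\kappa^\reg$, and everything that the regularization \cref{eq:Freg} adds or removes beyond these universal terms lies, by \cref{eq:Freg-compact-div}, in the common kernel of the compact-divisor operators and is collected into $\delta$.

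For the GV formula \cref{eq:localCY3-t3} I would run the compact derivation of \cref{eq:CY3-t3} with the regularized volume $\cf_\reg$ playing the role of the nonexistent fundamental-class integral. Since $\Pi_\reg(X_\bt,\e)=(-\lambda)^\cd\hat{I}_{X_\bt}\cf_\reg$ solves the modified Picard--Fuchs equations of \cref{prop:modified-PF} and is analytic at $\e=0$, its non-equivariant limit solves the non-equivariant modified PF system; matching its classical part to $\tfrac16\kappa^\reg_{abc}\log\td z_a\log\td z_b\log\td z_c$ fixes the polynomial piece, and the instanton corrections then organize into $\Li_2$ and $\Li_3$ exactly as in the compact case, defining the $n^\reg_{\bd}$. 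The uniqueness dichotomy follows from the preceding Remark: when $\sum_a d_aD^a_i\neq0$ for some compact divisor, the invariant $n^\reg_{\bd}$ also enters the regular period $\Pi(D_i,0)$ through \cref{eq:compactDivPeriod}, which is regularization-independent, forcing $n^\reg_{\bd}\equiv n_{\bd}\in\BZ$; when $\sum_a d_aD^a_i=0$ for all $i\in I_\cmp$ the invariant is invisible to every compact-divisor period and is pinned down only inside $\Pi_\reg(X_\bt)$, hence sensitive to the choice in \cref{eq:Freg}.

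The main obstacle is that \cref{eq:localCY3-t3} rests on the Gopakumar--Vafa integrality conjecture, which is open already for compact CY threefolds, so a complete proof is out of reach by these methods; what one can realistically prove is the \emph{structural} relation \cref{eq:Freg-GV-CY3} together with the reduction of the uniqueness question to the regularization-independence of compact-divisor periods. The genuinely delicate point is controlling $\delta$: one must show that the difference between the two regularization schemes (the one defining $\cf^D_\reg$ and the one defining $\Pi_\reg$) contributes only to classes with $\sum_a d_aD^a_i=0$, so that every GV invariant attached to a compact-divisor-visible class stays a well-defined integer while the remaining ones acquire a scheme-dependent, possibly non-integral shift. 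I expect this to require a careful analysis of $\Lambda_\sing$ and of the image of the $\psi$-map, supplemented by explicit verification on the examples of \cref{sec:examples2}, rather than a uniform argument.
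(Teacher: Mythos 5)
There is no proof to compare against: the statement is explicitly a \emph{conjecture} in the paper, supported only by the example computations of \cref{sec:examples1,sec:examples2} (local $\BP^2$, local $F_2$, local $A_2$, etc.) and by the one-line remark that ``the sub-leading terms are fixed by the expansion of the Gamma-class'' and that $\delta$ arises because regularization and $\hat I_{X_\bt}$ do not commute. You correctly recognize this, and your reconstruction of the structural relation \cref{eq:Freg-GV-CY3} is sound and matches the paper's intent: with $c_1=0$ one has $\sum_i x_i^2=-2c_2$ and $\sum_i x_i^3=3c_3$, so the $\hat\Ga$-class insertion contributes exactly $-\tfrac{\pi^2}{6}\cdot\tfrac12\sum\kappa^\reg_{abc}c_2^{ab}\log\td z_c+\zeta(3)\chi$ after multiplying by $(-\lambda)^3$, in agreement with the $d=3$ case of \cref{eq:gamma-class-scl}. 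Your reduction of the uniqueness dichotomy to the Remark on compact-divisor periods is also exactly the paper's logic. The honest caveats you raise --- that \cref{eq:localCY3-t3} inherits the open GV integrality conjecture, and that controlling $\delta$ (showing the scheme dependence is confined to classes with $\sum_a d_a D^a_i=0$ for all $i\in I_\cmp$) has no uniform argument --- are precisely the points the paper also leaves unproven; it checks them case by case (e.g.\ in local $F_2$, where $n_{1,0}=-1/2$ and reading $n_{d_1,0}$ from $\cf^D_\reg$ rather than $\Pi_\reg$ gives different answers because of $\delta$). So your proposal is not a proof, but neither does the paper claim one; as a blind reconstruction of the evidence and of the intended derivation it is accurate and appropriately scoped.
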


We observe
that not only classical intersection numbers need regularization but in some
cases also the instantonic contributions that define the GV invariants.
As discussed in \cref{eq:singular-instantons}, this happens when $\cf^D_\sing$
contains both classical and instantonic contributions. We see two instances
of this in the examples of $K_{F_2}$ and local $A_2$ geometry.

For general toric CYs the analogous claim reads
\begin{equation}
 (-\lambda)^\cd \lim_{\e \to 0} \cf^D_\reg (\bt,\e;\lambda)
 = \Pi_\reg (X_\bt,0) +\text{sub-leading} + \delta~,
\end{equation}
where $\Pi_\reg(X_\bt,\e)$ is obtained by regularizing the classical
intersection numbers and then applying the Givental operator. The sub-leading
terms are fixed by the expansion of the Gamma-class, see \cref{eq:gamma-class}.
The presence of the correction term $\delta$ is due to the fact
that regularization and $\hat{I}_{X_\bt}$ operator do not commute, which
means that $\cf^D_\reg$ is not necessarily in the image of the Givental
operator.

\section{Examples with compact divisors}
\label{sec:examples1}

\subsection{\texorpdfstring{$\co(-2)$}{O(-2)} over \texorpdfstring{$\BP^1$}{P1}}
\label{sec:A1}

Consider $X_\bt = \tot K_{\BP^1}$,
the total space of the canonical bundle over $\BP^1$,
defined by charge matrix $Q = (1, -2, 1)$ and chamber $t>0$,
also known as the $A_1$ space. Its symplectic volume is
\begin{equation}
 \cf (t,\e) = \oint_\jk  \frac{\dif\phi}{2\pi\ii} \,
 \frac{\eu^{\phi t}} {(\e_1 + \phi)(\e_2 - 2 \phi)(\e_3 + \phi)}
 = \frac{\eu^{-\e_1 t}}{\left(\e_2+2\e_1\right)\left(\e_3-\e_1\right)}
 + \frac{\eu^{-\e_3 t}}{\left(\e_1-\e_3\right) \left(\e_2+2 \e_3\right)}~,
\end{equation}
where JK contour selects poles at $\phi=-\e_1$ and $\phi=-\e_3$.
We define differential operators
\begin{equation}
 \cD_1 = \e_1 +\tfrac{\partial}{\partial t}~,
 \quad
 \cD_2 = \e_2-2\tfrac{\partial}{\partial t}~,
 \quad
 \cD_3 = \e_3 +\tfrac{\partial}{\partial t}~.
\end{equation}
Acting with the operator $\cD_1 \cD_3$ we kill both poles inside of JK,
so we get the relation
\begin{equation}
\label{eq:cl-cohom-A1}
 \cD_1 \cD_3 \cf (t,\e) = 0~,
\end{equation}
which corresponds to the description of equivariant cohomology of $X_\bt$ as
\begin{equation}
 H^\bullet_{\sft}(X_{\bt}) \cong \BC [\phi, \e_1, \e_2, \e_3]
 / \langle x_1 x_3 \rangle~.
\end{equation}
The generic solution to \cref{eq:cl-cohom-A1} takes the form
\begin{equation}
\label{eq:A1classical-solution}
 \cf(t,\e) = c_1(\e)\, \eu^{-\e_1 t} + c_3(\e)\, \eu^{-\e_3 t}
\end{equation}
with $c_1$ and $c_2$ integration constants, which may depend on
$\e_i$ but not on $t$. Indeed our symplectic volume is of this form, with
\begin{equation}
 c_1 = \frac{1} {\left( \e_2+2 \e_1 \right) \left( \e_3 - \e_1 \right)}~,
 \qquad
 c_3 = \frac{1} {\left( \e_2+2 \e_3 \right) \left( \e_1 - \e_3 \right)}~.
\end{equation}

The space $X_\bt$ has a single compact divisor $D_2$ corresponding to
the $\BP^1$ base of the bundle. It follows that
\begin{equation}
 \cD_2 \cf (t,\e) \quad\text{is analytic at $\e=0$}~.
\end{equation}
The cohomological disk partition function is
\begin{equation}
 \cf^D(t,\e;\lambda) =
 \lambda^{-3} \oint_\qjk \frac{\dif\phi}{2\pi\ii} \, \eu^{\phi t}
 \Ga \left( \tfrac{\e_1 + \phi}{\lambda} \right)
 \Ga \left( \tfrac{\e_2 - 2 \phi}{\lambda} \right)
 \Ga \left( \tfrac{\e_3 + \phi}{\lambda} \right)
\end{equation}
with QJK selecting poles at $\phi = - \e_1 - k \lambda$ and
$\phi = - \e_3 - k \lambda$ for $k \in\BZ_{\geq0}$. The classical cohomology 
relation
gets deformed to the quantum cohomology relation
\begin{equation}
\label{eq:A1quantum-cohomology}
 \left[ \cD_1 \cD_3 - \eu^{-\lambda t} (\lambda + \cD_2) \cD_2 \right]
 \cf^D (t,\e;\lambda) = 0~,
\end{equation}
which we can prove as follows:
\begin{equation}
\label{eq:A1quantum-cohomology-proof}
\begin{aligned}
 {\cD}_1 {\cD}_3 \cf^{D} (t,\e;\lambda)
 &= \lambda^{-1} \oint_\qjk \frac{\dif\phi}{2\pi\ii} \,
 \eu^{\phi t}
 \Ga \left( \tfrac{\e_1 + \phi+\lambda}{\lambda} \right)
 \Ga \left( \tfrac{\e_2 - 2 \phi}{\lambda} \right)
 \Ga \left( \tfrac{\e_3 + \phi+\lambda}{\lambda} \right) \\
 &= \lambda^{-1} \oint_{\qjk'} \frac{\dif\phi'}{2\pi\ii} \,
 \eu^{(\phi'-\lambda) t}
 \Ga \left( \tfrac{\e_1 + \phi'}{\lambda} \right)
 \Ga \left( \tfrac{\e_2 - 2 \phi'+2\lambda}{\lambda} \right)
 \Ga \left( \tfrac{\e_3 + \phi'}{\lambda} \right) \\
 &= \eu^{-\lambda t}(\cD_2+\lambda)\cD_2\lambda^{-3} \oint_{\qjk'}
 \frac{\dif\phi'}{2\pi\ii} \, \eu^{\phi' t}
 \Ga \left( \tfrac{\e_1 + \phi'}{\lambda} \right)
 \Ga \left( \tfrac{\e_2 - 2 \phi'}{\lambda} \right)
 \Ga \left( \tfrac{\e_3 + \phi'}{\lambda} \right) \\
 &= \eu^{-\lambda t}({\cD}_2+\lambda) {\cD}_2 \cf^{D} (t,\e;\lambda)~.
\end{aligned}
\end{equation}
Here we repeatedly use the property $x \Ga(x) = \Ga(x+1)$ together with the
change of variable $\phi' = \phi + \lambda$. Under this change of variables,
the QJK contour goes to $\qjk'$, which picks the poles at
$\phi' = -\e_{1,2} - k'\lambda$ with $k' = k-1 \geq -1$, but
at $k' = -1$ there are no poles in the integrand, so we can use the
original contour: when we act with
$\cD_1 \cD_3$, the two classical poles at $k = 0$ are killed
and the contour retracts until the next poles at $k = 1$, i.e.\ $k' = 0$.

An explicit residue computation yields the series expansion of the disk
partition function
\begin{multline}
\label{eq:A1cohom-residue}
 \cf^{D} (t,\e;\lambda)
 = \lambda^{-2} \sum_{d=0}^\infty \eu^{-d\lambda t} \frac{(-1)^d}{d!}
 \left[ \eu^{-\e_1 t} \Ga \left( \tfrac{\e_2+2\e_1}{\lambda} + 2d \right)
 \Ga \left( \tfrac{\e_3-\e_1}{\lambda} - d \right) \right. \\
 \left. {}+ \eu^{-\e_3 t} \Ga \left( \tfrac{\e_1-\e_3}{\lambda} - d \right)
 \Ga \left( \tfrac{\e_2 + 2 \e_3}{\lambda} + 2d \right) \right]~,
\end{multline}
where $z\equiv\eu^{-\lambda t}$ can be regarded as an instanton counting
parameter that distinguishes between contributions of maps of different degree.
If we restrict to zero-instanton sector (the contribution of classical poles)
we obtain the classical part of the disk function
\begin{equation}
 \cf_{\Ga} (t,\e;\lambda)
 = \lambda^{-2} \left[ \eu^{-\e_1 t}
 \Ga \left( \tfrac{\e_2+2\e_1}{\lambda} \right)
 \Ga \left( \tfrac{\e_3- \e_1}{\lambda} \right)
 + \eu^{-\e_3 t}
 \Ga \left( \tfrac{\e_1- \e_3}{\lambda} \right)
 \Ga \left( \tfrac{\e_2+2\e_3}{\lambda} \right) \right]~.
\end{equation}
This is of the same type as the solution in \cref{eq:A1classical-solution}
and it satisfies the relation
\begin{equation}
 \cD_1 \cD_3 \cf_{\Ga} (t,\e;\lambda) = 0
\end{equation}
of classical cohomology.
In the limit $\lambda \to \infty$ both $\cf^D$ and $\cf_{\Ga}$ reduce
to the equivariant volume $\cf$.
Let us analyze \cref{eq:A1quantum-cohomology} and its solutions. Through some
formal manipulations we can re-write it as
\begin{equation}
 \left[ 1-\eu^{-\lambda t}
	\frac {(\cD_2+\lambda)\cD_2} {(\cD_1-\lambda)(\cD_3-\lambda)}
 \right]
 \cf^D(t,\e;\lambda) = \cf_{\Ga} (t,\e;\lambda)~.
\end{equation}
We can invert the operator on the LHS to obtain the solution
\begin{equation}
\begin{aligned}
 \cf^D (t,\e;\lambda)
 &= \sum_{d=0}^\infty
 \left( \eu^{-\lambda t}
 \frac {(\cD_2+\lambda) \cD_2} {(\cD_1-\lambda)(\cD_3-\lambda)}
 \right)^d
 \cf_{\Ga} (t,\e;\lambda) \\
 &= \left[
 \sum_{d=0}^\infty \eu^{-d\lambda t}
 \frac{\left(\tfrac{\cD_2}{\lambda}\right)_{2d}}
 {\left(1-\tfrac{\cD_1}{\lambda}\right)_{d}
 \left(1-\tfrac{\cD_3}{\lambda}\right)_{d}}
 \right]
 \cf_{\Ga} (t,\e;\lambda) \\
 &= {}_3F_2 \left(
 1, \tfrac{\cD_2}{2\lambda} + \tfrac{1}{2}, \tfrac{\cD_2}{2\lambda};
 1-\tfrac{\cD_1}{\lambda }, 1-\tfrac{\cD_3}{\lambda}; 4 \eu^{-\lambda t}
 \right)
 \cf_{\Ga} (t,\e;\lambda) ~,
\end{aligned}
\end{equation}
where we used the identity
\begin{equation}
 \left( \eu^{-\lambda t}f \left( \tfrac{\partial}{\partial t} \right)\right)^d
 = \eu^{-d\lambda t} \prod_{i=0}^{d-1} f
	\left( \tfrac{\partial}{\partial t}-i\lambda \right)~.
\end{equation}
Substituting as initial condition $\cf_{\Ga} (t,\e;\lambda)$ as in
\cref{eq:A1classical-solution} we obtain
\begin{multline}
 \cf^D(t,\e;\lambda)
	= c_1 \eu^{-\e_1 t} {}_2F_1 \left(
	\tfrac{\epsilon_2+2\epsilon_1}{2\lambda},
	\tfrac{\lambda+2\epsilon_1+\epsilon_2}{2\lambda};
	\tfrac{\epsilon_1-\epsilon_3}{\lambda}+1;
	4 \eu^{-\lambda t}
 \right) \\
	+ c_3 \eu^{-\e_3 t} {}_2F_1 \left(
	\tfrac {\epsilon_2+2\epsilon_3} {2\lambda},
	\tfrac{\lambda+2\epsilon_3+\epsilon_2}{2\lambda};
	\tfrac{\epsilon_3-\epsilon_1}{\lambda}+1;
	4\eu^{-\lambda t}
 \right)
\end{multline}
so that for
\begin{equation}
 c_1 = \lambda^{-2}
 \Ga\left(\tfrac{\e_2+2\e_1}{\lambda}\right)
 \Ga\left(\tfrac{\e_3-\e_1}{\lambda}\right)~,
 \qquad
 c_3 = \lambda^{-2}
 \Ga\left(\tfrac{\e_2+2\e_3}{\lambda}\right)
 \Ga\left(\tfrac{\e_1-\e_3}{\lambda}\right)
\end{equation}
we can reproduce the computation of $\cf^D$ via residues
as in \cref{eq:A1cohom-residue}.

The K-theoretic disk partition function is represented by the integral
\begin{equation}
 \cz^D(T,q;\q) =
 - \oint_\qjk \frac{\dif w}{2\pi\ii w} \,
 w^{-T} \frac{1}{(q_1 w;\q)_\infty(q_2 w^{-2};\q)_\infty(q_3 w;\q)_\infty}
\end{equation}
with poles at $w=q_1^{-1}\q^{-k}$ and $w=q_2^{-1}\q^{-k}$ for $k\in\BZ_{\geq0}$.
A residue computation gives
\begin{multline}
\label{eq:A1-residue-Kth}
 \cz^D(T,q;\q) = \sum_{d=0}^\infty \frac {\q^{d T}} {(\q;\q)_\infty 
(\q^{-d};\q)_d}
 \left[ \frac {q_1^T}
	{(q_2q_1^2\q^{2d};\q)_\infty (q_3q_1^{-1}\q^{-d};\q)_\infty}
 \right.\\
 \left. {} + \frac {q_3^T}
	{(q_1q_3^{-1}\q^{-d};\q)_\infty (q_2q_3^2\q^{2d};\q)_\infty}
 \right]~,
\end{multline}
where $\q^T$ can be regarded as instanton counting parameter.

We define the shift operators
\begin{equation}
 \Delta_1 = q_1(T^\dagger)^{-1}~,
 \quad
 \Delta_2 = q_2(T^\dagger)^{2}~,
 \quad
 \Delta_3 = q_3(T^\dagger)^{-1}~.
\end{equation}
The K-theoretic compact divisor equation is
\begin{equation}
 (1-\Delta_2) \cz^D(T,q;\q) = \cz^D(T,q;\q) - q_2 \cz^D(T+2,q;\q)
 = \sum_{n^2=0}^{\infty}\frac{(\q q_2)^{n^2}}{(\q;\q)_{n^2}}
 \sum_{\Lambda_2(T,n^2)} \frac{q_1^{n^1}q_3^{n^3}}{(\q;\q)_{n^1}(\q;\q)_{n^3}}~,
\end{equation}
where $\Lambda_2 (T,n^2) = \{ (n^1,n^3) \in \BN^2 \,|\, n^1+n^3=T+2n^2 \}$.
By the argument in \cref{conj:compact-divisor-K-th} the RHS is regular
in the $q_1, q_2, q_3 \to 0$ limit.

The classical equivariant K-theory ring
\begin{equation}
 K_{\sft}(X_{\bt}) \cong \BC [w^{\pm},q_1^{\pm},q_2^{\pm},q_3^{\pm}]
	/ \langle(1-q_1w)(1-q_3w)\rangle
\end{equation}
is defined by the relation
\begin{equation}
 (1-\Delta_1) (1-\Delta_3) \cz_{\Ga_\q} (T,q) = 0~,
\end{equation}
whose generic solution is
\begin{equation}
 \cz_{\Ga_\q} (T,q) = c_1 q_1^T + c_3 q_3^T~.
\end{equation}
The quantum K-theory ring is then defined by the relation
\begin{equation}
 \left[ (1-\Delta_1) (1-\Delta_3) - \q^T (1-\q\Delta_2) (1-\Delta_2) \right]
	\cz^D (T,q;\q) = 0~,
\end{equation}
which can be derived similarly to \cref{eq:A1quantum-cohomology-proof}
by using the property in \cref{eq:q-pochhammer-difference}.

The quantum K-theory relation can be rewritten as
\begin{equation}
 \left[ 1 -
 \q^T \frac{(1-\q\Delta_2)(1-\Delta_2)} {(1-\q^{-1}\Delta_1)(1-\q^{-1}\Delta_3)}
 \right]
 \cz^D (T,q;\q) = \cz_{\Ga_\q} (T,q)
\end{equation}
and its solution is formally given by
\begin{equation}
\begin{aligned}
 \cz^D(T,q;\q)
 &= \sum_{d=0}^\infty \left(
 \q^T \frac{(1-\q\Delta_2)(1-\Delta_2)}{(1-\q^{-1}\Delta_1)(1-\q^{-1}\Delta_3)}
 \right)^d \cz_{\Ga_\q} (T,q)  \\
 &= \left[\sum_{d=0}^\infty \q^{dT}
 \frac{ (\Delta_2;\q)_{2d}}
 {(\q^{-d}\Delta_1;\q)_{d}(\q^{-d}\Delta_3;\q)_{d}} \right]
 \cz_{\Ga_\q} (T,q)~,
\end{aligned}
\end{equation}
where we used
\begin{equation}
 \left(\q^T f\left(T^\dagger\right)\right)^d
 = \q^{dT} \prod_{i=0}^{d-1} f\left(\q^i T^\dagger\right)~.
\end{equation}
With the initial data
\begin{equation}
 c_1 = \frac{1}{(\q;\q)_\infty(q_2q_1^2;\q)_\infty(q_3q_1^{-1};\q)_\infty}~,
 \qquad
 c_3 = \frac{1}{(\q;\q)_\infty(q_2q_3^2;\q)_\infty(q_1q_3^{-1};\q)_\infty}
\end{equation}
we reproduce the function $\cz^D$ obtained by residues in 
\cref{eq:A1-residue-Kth}.

The solutions to the equivariant PF equations are
\begin{equation}
 \Pi(p_i,\e) = z^{\frac{\e_i}{\lambda}}\sum_{d=0}^\infty z^d
 \frac{\left(\frac{\e_2+2\e_i}{\lambda}\right)_{2d}}
 {\left(1-\frac{\e_1-\e_i}{\lambda}\right)_d
  \left(1-\frac{\e_3-\e_i}{\lambda}\right)_d},\quad i=1,3
\end{equation}
one for each fixed point.
The non-equivariant $\hat{I}$-operator is
\begin{equation}
\label{eq:A1-I-op}
 \lim_{\e\to0} \hat{I}_{X_\bt} = \sum_{d=0}^\infty z^d
 \frac{(2\theta)_{2d}}
 {(1+\theta)_d^2}
 = 1+2G(z)\theta
 +2G(z)^2\theta^2
 +\dots
\end{equation}
with $\theta=z\partial_z$ and
\begin{equation}
 G(z) := \sum_{d=1}^\infty z^d \frac{\Ga(2d)}{\Ga(d+1)^2}
 = -\log\left(\frac{1+\sqrt{1-4 z}}{2}\right)~.
\end{equation}
The solutions to the non-equivariant PF equations are
\begin{equation}
\begin{aligned}
 \Pi(\pt) &= 1 ~,\\
 \Pi(\BP^1) &= \log z + 2 G(z) = \log\td{z} ~,
\end{aligned}
\end{equation}
corresponding to the degree-zero and degree-two generators of the homology 
lattice.
The solution of logarithmic degree one defines the mirror map to flat 
coordinates
$\td{z}=z\,\eu^{2G(z)}$.
As $D_2\cong\BP^1$ is compact, we have the identity
\begin{equation}
 \lim_{\e\to0} (-\lambda) \hat{I}_{X_\bt}\cD_2\cf_{\Ga}
 = \Pi(\BP^1)~.
\end{equation}
The fundamental cycle of $X_\bt$ is non-compact and therefore only defined
equivariantly. Its regularization is annihilated by the modified PF operator
\begin{equation}
 \cD_1 \cD_2 \cD_3 - z \cD_2(\cD_2+\lambda)(\cD_2+2\lambda)
\end{equation}
and it can be computed using \cref{eq:A1-I-op} as
\begin{equation}
 \Pi_\reg(X_\bt) = - \tfrac14 \log^2\td{z}~.
\end{equation}
Moreover,
\begin{equation}
 \lim_{\e\to0} (-\lambda)^2 \cf^D_\reg = \Pi_\reg(X_\bt)
\end{equation}
so that in the flat coordinates $\td{z}=\frac{4z}
{\left(1+\sqrt{1-4 z}\right)^2}$ there are no instanton corrections
and the GV invariants are all vanishing,
which is compatible with the fact that $X_\bt$ is Hyperkähler.

\subsection{\texorpdfstring{$A_2$}{A2} geometry}

Consider the $A_2$ geometry defined by the charge matrix
\begin{equation}
Q =
\begin{pmatrix}
1 & -2 & 1 & 0 \\
0 & 1 & -2 & 1
\end{pmatrix}
\end{equation}
and chamber $t^1,t^2>0$.
Its symplectic volume is
\begin{equation}
 \cf(\bt,\e) = \oint_\jk  \frac{\dif\phi_1 \dif\phi_2}{(2\pi \ii)^2} \,
 \frac {\eu^{\phi_1 t_1+\phi_2 t_2}}
 {(\e_1 + \phi_1) (\e_2 - 2 \phi_1 + \phi_2)
	(\e_3 + \phi_1 - 2\phi_2) (\e_4 + \phi_2)}~,
\end{equation}
where poles are located at
$(-\e_1, -\e_2-2\e_1)$, $(-\e_1, -\e_4)$ and $(-\e_3-2\e_4, - \e_4)$.
We have the following classical cohomology relations
\begin{equation}
 {\cD}_1 {\cD}_4 \cf (\bt,\e) = 0~,
 \quad
 {\cD}_1 {\cD}_3 \cf (\bt,\e) = 0~,
 \quad
 {\cD}_2 {\cD}_4 \cf (\bt,\e) = 0~,
\end{equation}
so the equivariant cohomology ring is given by
\begin{equation}
 \BC [\phi_1, \phi_2, \e_1,\e_2,\e_3,\e_4] /
 \langle x_1 x_4, x_1 x_3, x_2 x_4 \rangle~.
\end{equation}
There are two compact divisors $D_2$ and $D_3$.

The K-theoretic disk function is
\begin{equation}
\label{eq:A2Kth-disk-func}
 \cz^{D}(\bT,q;\q) =
 \oint_\qjk \frac{\dif w_1\dif w_2}{(2\pi \ii)^2w_1w_2} \,
 \frac {w_1^{-T^1}w_2^{-T^2}}
 {(q_1w_1;\q)_\infty (q_2w_1^{-2}w_2;\q)_\infty
	(q_3w_1w_2^{-2};\q)_\infty (q_4w_2;\q)_\infty}
\end{equation}
with poles at $(\q^{-k_1}q_1^{-1},\q^{-2k_1-k_2}q_1^{-2}q_2^{-1})$,
$(\q^{-k_1}q_1^{-1},\q^{-k_2}q_4^{-1})$ and
$(\q^{-k_1-2k_2}q_3^{-1}q_4^{-2},\q^{-k_2}q_4^{-1})$ for $k_1,k_2\geq0$.
The three towers of poles correspond to instanton contributions coming from the
three fixed points in $X_{\bt}$.
We get the following quantum K-theory relations
\begin{equation}
\begin{aligned}
 \left[ (1-\Delta_1)(1-\Delta_4) - \q^{T^1+T^2} (1-\Delta_2)(1-\Delta_3) \right]
	\cz^D(\bT,q;\q) &= 0 ~,\\
 \left[ (1-\Delta_1)(1-\Delta_3) - \q^{T^1} (1-\Delta_2)(1-\q\Delta_2) \right]
	\cz^D(\bT,q;\q) &= 0 ~,\\
 \left[ (1-\Delta_2)(1-\Delta_4) - \q^{T^2} (1-\Delta_3)(1-\q\Delta_3) \right]
	\cz^D(\bT,q;\q) &= 0 ~.
\end{aligned}
\end{equation}
We define the K-theoretic Givental operator
\begin{equation}
 \hat{I}^K_{X_\bt} =
 \sum_{d_1,d_2=0}^\infty \q^{d_1T^1+d_2T^2}
 (\Delta_1;\q)_{-d_1}
 (\Delta_2;\q)_{2d_1-d_2}
 (\Delta_3;\q)_{-d_1+2d_2}
 (\Delta_4;\q)_{-d_2}
\end{equation}
so that we can write the solution as
\begin{equation}
 \cz^D(\bT,q;\q) = \hat{I}^K_{X_\bt}\cdot \cz_{\Ga_\q}(\bT,q)
\end{equation}
with
\begin{equation}
 \cz_{\Ga_\q} (\bT,q) =
 c_{1,2}\, q_1^{T^1+2T^2}q_2^{T^2} +
 c_{1,4}\, q_1^{T^1}q_4^{T^2} +
 c_{3,4}\, q_3^{T^1}q_4^{2T^1+T^2}~.
\end{equation}
The integration coefficients $c_{1,2}$, $c_{1,4}$, $c_{3,4}$ are not fixed
by the equations and they parametrize the moduli space of solutions.
The solution corresponding to the function $\cz^D$ defined by the integral
in \cref{eq:A2Kth-disk-func} is given by the choice of semi-classical data
\begin{equation}
\begin{aligned}
 c_{1,2} &= \frac{1}
 {(\q;\q)^2_\infty (q_1^3q_2^2q_3;\q)_\infty 
(q_1^{-2}q_2^{-1}q_4;\q)_\infty}~,\\
 c_{1,4} &= \frac{1}
 {(\q;\q)^2_\infty (q_1^2q_2^2q_4^{-1};\q)_\infty 
(q_1^{-1}q_3q_4^2;\q)_\infty}~,\\
 c_{3,4} &= \frac{1}
 {(\q;\q)^2_\infty (q_1q_3^{-1}q_4^{-2};\q)_\infty (q_2q_3^2q_4^3;\q)_\infty}~.
\end{aligned}
\end{equation}

In the cohomological limit we have
\begin{equation}
 \cf^{D}(\bt,\e;\lambda) = \hat{I}_{X_\bt} \cdot \cf_{\Ga} (\bt,\e)
\end{equation}
with
\begin{equation}
\begin{aligned}
 \hat{I}_{X_\bt}=
 & \sum_{d_1,d_2=0}^\infty z_1^{d_1}z_2^{d_2}
 \left(\tfrac{\cD_1}{\lambda}\right)_{-d_1}
 \left(\tfrac{\cD_2}{\lambda}\right)_{2d_1-d_2}
 \left(\tfrac{\cD_3}{\lambda}\right)_{-d_1+2d_2}
 \left(\tfrac{\cD_4}{\lambda}\right)_{-d_2} \\
 =& 1 + \sum_{\substack{2d_1-d_2>0\\-d_1+2d_2\leq0}}
 z_1^{d_1}(-z_2)^{d_2}
 \frac
 {\left(\frac{\cD_2}{\lambda}\right)_{2d_1-d_2}}
 {\left(1-\frac{\cD_1}{\lambda}\right)_{d_1}
  \left(1-\frac{\cD_3}{\lambda}\right)_{d_1-2d_2}
  \left(1-\frac{\cD_4}{\lambda}\right)_{d_2}} \\
 & +\sum_{\substack{2d_1-d_2\leq0\\-d_1+2d_2>0}}
 (-z_1)^{d_1}z_2^{d_2}
 \frac
 {\left(\frac{\cD_3}{\lambda}\right)_{-d_1+2d_2}}
 {\left(1-\frac{\cD_1}{\lambda}\right)_{d_1}
  \left(1-\frac{\cD_2}{\lambda}\right)_{-2d_1+d_2}
  \left(1-\frac{\cD_4}{\lambda}\right)_{d_2}} \\
 &+ \sum_{\substack{2d_1-d_2>0\\-d_1+2d_2>0}}
 (-z_1)^{d_1}(-z_2)^{d_2}
 \frac
 {\left(\frac{\cD_2}{\lambda}\right)_{2d_1-d_2}
  \left(\frac{\cD_3}{\lambda}\right)_{-d_1+2d_2}}
 {\left(1-\frac{\cD_1}{\lambda}\right)_{d_1}
  \left(1-\frac{\cD_4}{\lambda}\right)_{d_2}} 
\end{aligned}
\end{equation}
and initial data
\begin{equation}
 \cf_{\Ga} (\bt,\e) =
 c_{1,2}\, \eu^{-\e_1(t^1+2t^2)-\e_2t^2} +
 c_{1,4}\, \eu^{-\e_1t^1-\e_4t^2} +
 c_{3,4}\, \eu^{-\e_3t^1-\e_4(2t^1+t^2)}
\end{equation}
and
\begin{equation}
\begin{aligned}
 c_{1,2} &= \lambda^{-2} \Ga\left(\tfrac{3\e_1+2\e_2+\e_3}{\lambda}\right)
 \Ga\left(\tfrac{-2\e_1-\e_2+\e_4}{\lambda}\right) ~,\\
 c_{1,4} &= \lambda^{-2} \Ga\left(\tfrac{2\e_1+2\e_2-\e_4}{\lambda}\right)
 \Ga\left(\tfrac{-\e_1+\e_3+2\e_4}{\lambda}\right) ~, \\
 c_{3,4} &= \lambda^{-2} \Ga\left(\tfrac{\e_1-\e_3-2\e_4}{\lambda}\right)
 \Ga\left(\tfrac{\e_2+2\e_3+3\e_4}{\lambda}\right)~.
\end{aligned}
\end{equation}
The function $\cf^D$ is annihilated
by the following set of equivariant PF operators
\begin{equation}
\begin{aligned}
 \PF^\eq_{(1,1)} &= \cD_1\cD_4 - z_1 z_2  \cD_2\cD_3 ~, \\
 \PF^\eq_{(1,0)} &= \cD_1\cD_3 - z_1  \cD_2(\cD_2+\lambda) ~, \\
 \PF^\eq_{(0,1)} &= \cD_2\cD_4 - z_2  \cD_3(\cD_3+\lambda)~,
\end{aligned}
\end{equation}
which encode the quantum cohomology relations of $X_\bt$.

The non-equivariant $\hat{I}$-operator can be expanded as
\begin{equation}
 \lim_{\e\to0}\hat{I}_{X_\bt}
 = 1+(2M_1-M_2)\theta_1
 +(-M_1+2M_2)\theta_2
 +\dots~,
\end{equation}
where we define
\begin{equation}
\begin{aligned}
 M_1(z_1,z_2)&:= \sum_{\substack{2d_1-d_2>0\\-d_1+2d_2\leq0}}
 \frac{\Ga(2d_1-d_2)}
 {\Ga(d_1+1)\Ga(d_1-2d_2+1)\Ga(d_2+1)} z_1^{d_1} (-z_2)^{d_2} ~, \\
 M_2(z_1,z_2)&:= \sum_{\substack{2d_1-d_2\leq0\\-d_1+2d_2>0}}
 \frac{\Ga(-d_1+2d_2)}
 {\Ga(d_1+1)\Ga(-2d_1+d_2+1)\Ga(d_2+1)} (-z_1)^{d_1} z_2^{d_2} ~,\\
 M_3(z_1,z_2)&:= \sum_{\substack{2d_1-d_2>0\\-d_1+2d_2>0}}
 \frac{\Ga(2d_1-d_2)\Ga(-d_1+2d_2)}
 {\Ga(d_1+1)\Ga(d_2+1)} (-z_1)^{d_1} (-z_2)^{d_2} ~.
\end{aligned}
\end{equation}
The elementary solutions to the non-equivariant PF equations
\begin{equation}
\begin{aligned}
 \Pi(\pt) &= 1~, \\
 \Pi(C^1) &= \log z_1 + 2M_1 - M_2 = \log \td{z}_1 ~,\\
 \Pi(C^2) &= \log z_2 - M_1 + 2M_2 = \log \td{z}_2 ~,
\end{aligned}
\end{equation}
correspond to the class of the point and
the two generators $C^1,C^2 \in H_2(X_\bt)$. We have
\begin{equation}
\begin{aligned}
 (-\lambda)\lim_{\e\to0} \hat{I}_{X_\bt} \cD_2 \cf_{\Ga} &= \Pi(C^1)~, \\
 (-\lambda)\lim_{\e\to0} \hat{I}_{X_\bt} \cD_3 \cf_{\Ga} &= \Pi(C^2)~.
\end{aligned}
\end{equation}
The modified PF equations admit an additional quadratic solution
that corresponds to the regularized fundamental cycle
\begin{equation}
 \Pi_\reg(X_\bt)
 = -\tfrac13(\log^2\td{z}_1+\log\td{z}_1\log\td{z}_2+\log^2\td{z}_2)~,
\end{equation}
and we have the identity
\begin{equation}
 (-\lambda)^2\lim_{\e\to0} \cf^D_\reg = \Pi_\reg(X_\bt)~,
\end{equation}
where
\begin{equation}
 \cf^D_\reg(\bt,\e;\lambda) = \cf^D(\bt,\e;\lambda)
 - \eu^{\frac{\e_2}{3}(2t^1+t^2)+\frac{\e_3}{3}(t^1+2t^2)}
 \cf_{\Ga}(0,\e)
\end{equation}
for a choice of left-inverse matrix
\begin{equation}
R =
\left(
\begin{array}{ccccc}
 -2/3 & -1/3 \\
 -1/3 & -2/3 \\
\end{array}
\right)
\end{equation}
as defined in \cref{eq:Freg}.
This solution (after mirror map) has no instanton corrections, which can be
explained by the fact that $X_\bt$ is Hyperkähler.

\subsection{\texorpdfstring{$\co(-n)$}{O(-n)} over
\texorpdfstring{$\BP^{n-1}$}{Pn-1}}

The space $X_\bt = \tot K_{\BP^{n-1}}$ is a toric CY
defined by the charge matrix
\begin{equation}
 Q =
 \begin{pmatrix}
  1 & 1 & \dots & 1 & -n \\
 \end{pmatrix}
\end{equation}
and the chamber $t>0$.
The symplectic volume is
\begin{equation}
 \cf(t,\e) = \oint_\jk\frac{\dif\phi}{2\pi\ii} \,
 \frac{\eu^{\phi t}}{(\e_{n+1}-n\phi) \prod_{i=1}^n(\e_i+\phi)}~,
\end{equation}
where we take the poles $\phi=-\e_i$ for $i=1,\dots,n$.
We have the following classical relations
\begin{equation}
 \left[\prod_{i=1}^n {\cD}_i \right] \cf(t,\e) = 0
\end{equation}
providing a representation for the equivariant cohomology
\begin{equation}
 H^\bullet_{\sft} (X_{\bt}) \cong \BC [\phi, \e_1,\e_2,\dots,\e_{n+1}] /
	\langle x_1\dots x_n \rangle~.
\end{equation}
The K-theoretic disk function is
\begin{equation}
 \cz^D(T,q;\q) =
 - \oint_\qjk \frac {\dif w} {2\pi\ii w} \,
 \frac {w^{-T}} {(q_{n+1}w^{-n};\q)_\infty \prod_{i=1}^n (q_iw;\q)_\infty}~,
\end{equation}
and it satisfies the quantum K-theory relation
\begin{equation}
 \left[ \prod_{i=1}^n (1-\Delta_i) -
	\q^T \prod_{k=0}^{n-1} (1-\q^k \Delta_{n+1}) \right]
 \cz^D (T,q;\q) = 0~.
\end{equation}
This is the K-theretic PF equation and it has the solution
\begin{equation}
 \cz^D(T,q;\q) =
 \sum_{d=0}^{\infty} \q^{dT}
 \frac{(\Delta_{n+1};\q)_{nd}}
 {\prod_{i=1}^{n}(\q^{-d}\Delta_i;\q)_{d}}
 \cz_{\Ga_\q} (T,q)
\end{equation}
with
\begin{equation}
 \cz_{\Ga_\q} (T,q)
 = \sum_{i=1}^n \frac{q_i^{T}}{(\q;\q)_\infty(q_0q_i^{n};\q)_\infty
 \prod_{j\neq i}(q_jq_i^{-1};\q)_\infty}~.
\end{equation}
In the cohomological limit we have the disk function
\begin{equation}
 \cf^{D}(t,\e;\lambda) =
 \lambda^{-(n+1)} \oint_\qjk \frac{\dif\phi}{2\pi\ii} \,
 \eu^{\phi t}\, \Ga \left( \frac{\e_{n+1}-n\phi}{\lambda} \right)
 \prod_{i=1}^n \Ga \left( \frac{\e_i+\phi}{\lambda} \right)~,
\end{equation}
which satisfies the quantum cohomology relation
\begin{equation}
 \left[ \prod_{i=1}^n {\cD}_i - \eu^{-\lambda t}
 \prod_{k=0}^{n-1}(k\lambda + \cD_{n+1}) \right]
 \cf^{D}(t,\e;\lambda) = 0~.
\end{equation}
The compact divisor of $X_{\bt}$ is $D_{n+1}$ and we have
\begin{equation}
 \cD_{n+1} \cf^D \quad\text{is analytic at $\e=0$}~.
\end{equation}
The Givental $\hat{I}$-operator can be written as
\begin{equation}
 \hat{I}_{X_\bt} =
 \sum_{d=0}^{\infty} (-1)^{nd}z^d
 \frac{\left(\frac{\cD_{n+1}}{\lambda}\right)_{nd}}
 {\prod_{i=1}^{n}\left(1-\frac{\cD_i}{\lambda}\right)_{d}}~,
\end{equation}
which implies that all instanton contributions for $d>0$ are regular and the 
only
singular term comes from the semi-classical contribution at $d=0$.
In the non-equivariant limit
\begin{equation}
 \lim_{\e\to0}\hat{I}_{X_\bt} = 1 + n
 \sum_{d=1}^{\infty} (-1)^{nd}z^d
 \frac{\Ga(nd)} {\Ga(d+1)^n} \theta
 + O(\theta^2)
\end{equation}
we can read the mirror map
\begin{equation}
 \log\td{z} = \log z + n \sum_{d=0}^{\infty} (-1)^{nd}z^d
 \frac{\Ga(nd)} {\Ga(d+1)^n}~.
\end{equation}
The solutions to equivariant PF equations are labeled by fixed points
\begin{equation}
 \Pi(p_i) = z^{\frac{\e_i}{\lambda}}\sum_{d=0}^\infty ((-1)^n z)^d
 \frac{\left(\frac{\e_{n+1}+n\e_i}{\lambda}\right)_{nd}}
 {\prod_{j=1}^n\left(1-\frac{\e_j-\e_i}{\lambda}\right)_d}~,\quad i=1,\dots,n~.
\end{equation}

The case $n=2$ is discussed in \cref{sec:A1}. In the following
sub-sections we discuss the examples $n=3,4$ in more detail.

\subsubsection{\texorpdfstring{$\co(-3)$}{O(-3)} over
\texorpdfstring{$\BP^{2}$}{P2}}

For $n=3$ the non-equivariant $\hat{I}$-operator can be expanded as
\begin{equation}
 \lim_{\e\to0} \hat{I}_{X_\bt} = \sum_{d=0}^\infty (-z)^d
 \frac{(3\theta)_{3d}}{(1+\theta)_d^3}
 = 1 + 3 G^{(0)} \theta + 3 G^{(1)} \theta^2
 + \tfrac32 (G^{(2)}-\pi^2 G^{(0)}) \theta^3 + \dots~,
\end{equation}
where we define the functions
\begin{equation}
 G^{(i)}(z) := \sum_{d=1}^\infty(-z)^d
 \left(\tfrac{\partial}{\partial d}\right)^i \frac{\Ga(3d)}{\Ga(d+1)^3}~.
\end{equation}
The solutions to the non-equivariant PF equations are
\begin{equation}
\begin{aligned}
 \Pi(\pt) &= 1 ~,\\
 \Pi(\BP^1) &= \log z +3G^{(0)} = \log\td{z} ~, \\
 \Pi(\BP^2) &= \tfrac12\log^2\td{z} -3\left(\tfrac32 (G^{(0)})^2-G^{(1)}\right) 
~.
\end{aligned}
\end{equation}
The modified PF operator
\begin{equation}
 \cD_1 \cD_2 \cD_3 \cD_4 - z
 \cD_4(\cD_4+\lambda)(\cD_4+2\lambda)(\cD_4+3\lambda)
\end{equation}
admits an additional cubic solution associated to the regularized fundamental
cycle,
\begin{equation}
 \Pi_\reg(X_\bt) =
 - \tfrac{1}{18} \log^3\td{z}
 + \log\td{z} \left(\tfrac32 (G^{(0)})^2 - G^{(1)} \right)
 - 2\left(
 \tfrac32 (G^{(0)})^3
 - \tfrac32 G^{(0)} G^{(1)}
 + \tfrac14 G^{(2)}
 - \tfrac{\pi^2}{4} G^{(0)}
 \right) ~.
\end{equation}
Moreover,
\begin{equation}
\label{eq:O(-3)quadratic-sol}
 (-\lambda)^2 \lim_{\e\to0} \hat{I}_{X_\bt}\cD_4\cf_{\Ga}
 = \Pi(\BP^2) + \pi^2\Pi(\pt)~,
\end{equation}
\begin{equation}
 (-\lambda) \lim_{\e\to0} \hat{I}_{X_\bt}\cD_4\cD_i\cf_{\Ga}
 = \Pi(\BP^1)~,
 \quad
 i=1,2,3 ~,
\end{equation}
\begin{equation}
 \lim_{\e\to0} \hat{I}_{X_\bt}\cD_4\cD_i\cD_j\cf_{\Ga}
 = \Pi(\pt)~,
 \quad
 i,j = 1,2,3
\end{equation}
and we have
\begin{equation}
\label{eq:O(-3)cubic-sol}
 (-\lambda)^3 \lim_{\e\to0} \cf^D_\reg
 = \Pi_\reg(X_\bt) - \tfrac{\pi^2}{3}\Pi(\BP^1)~.
\end{equation}
The GV invariants $n_d$ can be read by matching \cref{eq:O(-3)quadratic-sol} to
\cref{eq:CY3-t2}, i.e.\
\begin{equation}
 \tfrac32 (G^{(0)})^2 - G^{(1)}
 = \sum_{d=1}^\infty d\, n_d(\BP^2) \Li_2(\td{z}^d)
\end{equation}
or equivalently by matching $\Pi_\reg(X_\bt)$ to
\cref{eq:localCY3-t3}, i.e.\
\begin{equation}
 \tfrac32 (G^{(0)})^3
 -\tfrac32 G^{(0)} G^{(1)}
 +\tfrac14 G^{(2)}
 -\tfrac{\pi^2}{4} G^{(0)}
 = \sum_{d=1}^\infty n_d(\BP^2) \Li_3(\td{z}^d)~,
\end{equation}
which give the same numbers as in ref.~\cite[Table~1]{Chiang:1999tz}.
The only singular contributions are the classical ones, therefore all GV
invariants are uniquely defined and the only ambiguity is in the regularization
of the classical intersection numbers.

\subsubsection{\texorpdfstring{$\co(-4)$}{O(-4)} over
\texorpdfstring{$\BP^{3}$}{P3}}

For $n=4$ the non-equivariant $\hat{I}$-operator can be expanded as
\begin{multline}
 \lim_{\e\to0} \hat{I} = \sum_{d=0}^\infty z^d
 \frac{(4\theta)_{4d}}{(1+\theta)_d^4}
 = 1 + 4 G_0 \theta + 4 G_1 \theta^2
 + 2 (G_2 - 2 \pi^2 G_0) \theta^3 \\
 + \left( \tfrac23 G_3 - 4 \pi^2 G_1 + 80 \zeta(3) G_0 \right)~,
 \theta^4 + \dots
\end{multline}
where we define the functions
\begin{equation}
 G^{(i)}(z) := \sum_{d=1}^\infty z^d \left(\tfrac{\partial}{\partial d}\right)^i
 \frac{\Ga(4d)}{\Ga(d+1)^4}~.
\end{equation}
The solutions to the non-equivariant PF equations are
\begin{equation}
\begin{aligned}
 \Pi(\pt) &= 1 \\
 \Pi(\BP^1) &= \log z + 4G^{(0)} = \log\td{z} \\
 \Pi(\BP^2) &= \tfrac12\log^2\td{z} + 4 G^{(1)} - 8 (G^{(0)})^2 \\
 \Pi(\BP^3) &= \tfrac16\log^3\td{z} + \left(4G^{(1)}-8(G^{(0)})^2\right) 
\log\td{z}
 +\tfrac{64}{3}(G^{(0)})^3
 -16 G^{(0)} G^{(1)}
 -4 \pi^2 G^{(0)}
 +2 G^{(2)}
\end{aligned}
\end{equation}
The modified PF operator
\begin{equation}
 \cD_1 \cD_2 \cD_3 \cD_4 \cD_5 - z
 \cD_5(\cD_5+\lambda)(\cD_5+2\lambda)(\cD_5+3\lambda)(\cD_5+4\lambda)
\end{equation}
admits an additional quartic solution associated
to the regularized fundamental cycle
\begin{multline}
 \Pi_\reg(X_\bt) = -\tfrac1{96}\log^4\td{z}
 +\left((G^{(0)})^2-\tfrac12 G^{(1)}\right) \log^2\td{z} \\
 +\left(-\tfrac{16}{3}(G^{(0)})^3+4G^{(0)}G^{(1)}+\pi^2G^{(0)}
 -\tfrac12 G^{(2)}\right) \log\td{z}\\
 +\left(8(G^{(0)})^4-8(G^{(0)})^2G^{(1)}-4\pi^2(G^{(0)})^2+2G^{(0)}G^{(2)}
 +\pi^2G^{(1)}-\tfrac16 G^{(3)}-20G^{(0)}\zeta(3)\right)~.
\end{multline}
Moreover,
\begin{equation}
 (-\lambda)^4 \lim_{\e\to0} \cf^D_\reg
 = \Pi_\reg(X_\bt) -\tfrac{5\pi^2}{12}\Pi(\BP^2) + 5\zeta(3)\Pi(\BP^1)~,
\end{equation}
\begin{equation}
 (-\lambda)^3 \lim_{\e\to0} \hat{I}_{X_\bt}\cD_5\cf_{\Ga}
 = \Pi(\BP^3) + \tfrac{5\pi^2}{3} \Pi(\BP^1) - 20 \zeta (3) \Pi(\pt)~,
\end{equation}
\begin{equation}
\label{eq:O(-4)quadratic-sol}
 (-\lambda)^2 \lim_{\e\to0} \hat{I}_{X_\bt}\cD_5\cD_i\cf_{\Ga}
 = \Pi(\BP^2) + \tfrac{5\pi^2}{3} \Pi(\pt)~,
 \quad
 i=1,2,3,4~.
\end{equation}
The GV invariants $n_d$ can be read by matching \cref{eq:O(-4)quadratic-sol} to
\cref{eq:CY4-t2}, i.e.\
\begin{equation}
 2(G^{(0)})^2 - G^{(1)} = \sum_{d=1}^\infty n_d(\BP^2) \Li_2(\td{z}^d)~,
\end{equation}
which gives the same numbers as in ref.~\cite[Table~1]{Klemm:2007in}.
Instantons are non-singular in this case and
GV invariants are uniquely defined.

\subsection{\texorpdfstring{$\co(-2,-2)$}{O(-2,-2)} over
 \texorpdfstring{$\BP^1 \times \BP^1$}{P1xP1}}

We consider $X_\bt = \tot K_{F_0}$, the canonical bundle of
the Hirzebruch surface $F_0$, realized as the quotient of
$\BC^5$ by $U(1)^2$ with charge matrix
\begin{equation}
Q =
\left(
\begin{array}{ccccc}
 1 & 1 & 0 & 0 & -2 \\
 0 & 0 & 1 & 1 & -2 \\
\end{array}
\right)~.
\end{equation}
The chamber is chosen so that $t^1,t^2>0$.
The K-theoretic disk function is defined as
\begin{equation}
 \cz^{D}(\bT,q;\q) =
 \oint_\qjk \frac{\dif w_1 \dif w_2} {(2\pi \ii)^2 w_1 w_2}
 \frac{w_1^{-T^1}w_2^{-T^2}}
 {(q_1w_1;\q)_\infty (q_2w_1;\q)_\infty (q_3w_2;\q)_\infty (q_4w_2;\q)_\infty
  (q_5w_1^{-2}w_2^{-2};\q)_\infty}
\end{equation}
with poles for $(w_1,w_2)$ in the set
\begin{equation}
\left\{
(q_1^{-1}\q^{-d_1},q_3^{-1}\q^{-d_2}),
(q_1^{-1}\q^{-d_1},q_4^{-1}\q^{-d_2}),
(q_2^{-1}\q^{-d_1},q_3^{-1}\q^{-d_2}),
(q_2^{-1}\q^{-d_1},q_4^{-1}\q^{-d_2}) | d_1, d_2 \in \BN \right\}~.
\end{equation}
The function $\cz^D$ satisfies the quantum K-theory relations
\begin{equation}
\begin{aligned}
 \left[ (1-\Delta_1)(1-\Delta_2) - \q^{T^1} (1-\Delta_5)(1-\q\Delta_5) \right]
 \cz^{D}(\bT,q;\q) = 0~, \\
 \left[ (1-\Delta_3)(1-\Delta_4) - \q^{T^2} (1-\Delta_5)(1-\q\Delta_5) \right]
 \cz^{D}(\bT,q;\q) = 0~,
\end{aligned}
\end{equation}
whose formal solution is
\begin{equation}
 \cz^D = \left[
 \sum_{d_1,d_2=0}^\infty \q^{d_1T^1+d_2T^2}
 \frac{(\Delta_5;\q)_{2d_1+2d_2}}
 { 
(\q^{-d_1}\Delta_1;\q)_{d_1}(\q^{-d_1}\Delta_2;\q)_{d_1}(\q^{-d_2}\Delta_3;\q)_{
d_2}(\q^{-d_2}\Delta_4;\q)_{d_2} }
 \right] \cz_{\Ga_\q}
\end{equation}
with
\begin{equation}
 \cz_{\Ga_\q} (\bT,q) =
  c_{1,3}q_1^{T^1}q_3^{T^2}
 +c_{1,4}q_1^{T^1}q_4^{T^2}
 +c_{2,3}q_2^{T^1}q_3^{T^2}
 +c_{2,4}q_2^{T^1}q_4^{T^2}
\end{equation}
and initial data
\begin{equation}
 c_{1,3} = \frac{1}{(\q;\q)_\infty(q_2q_1^{-1};\q)_\infty(q_4q_3^{-1};\q)_\infty
 (q_5q_1^2q_3^2;\q)_\infty}~,
\end{equation}
\begin{equation}
 c_{1,4} = \frac{1}{(\q;\q)_\infty(q_2q_1^{-1};\q)_\infty(q_3q_4^{-1};\q)_\infty
 (q_5q_1^2q_4^2;\q)_\infty}~,
\end{equation}
\begin{equation}
 c_{2,3} = \frac{1}{(\q;\q)_\infty(q_1q_2^{-1};\q)_\infty(q_4q_3^{-1};\q)_\infty
 (q_5q_2^2q_3^2;\q)_\infty}~,
\end{equation}
\begin{equation}
 c_{2,4} = \frac{1}{(\q;\q)_\infty(q_1q_2^{-1};\q)_\infty(q_3q_4^{-1};\q)_\infty
 (q_5q_2^2q_4^2;\q)_\infty}~.
\end{equation}
The total space $\co(-2,-2) \to \BP^1 \times \BP^1$ has a compact divisor $D_5$
corresponding to the zero section (i.e.\ the base $\BP^1 \times \BP^1$)
and therefore we have that
\begin{equation}
 (1-\Delta_5) \cz^D(\bT,q;\q) \quad\text{is analytic at $q_i=1$}~.
\end{equation}

In the cohomological limit $\hbar\to0$ we have
\begin{equation}
 \cf^D = \left[
 \sum_{d_1,d_2=0}^\infty z_1^{d_1}z_2^{d_2}
 \frac{\left(\tfrac{\cD_5}{\lambda}\right)_{2d_1+2d_2}}
 {\left(1-\tfrac{\cD_1}{\lambda}\right)_{d_1}
  \left(1-\tfrac{\cD_2}{\lambda}\right)_{d_1}
  \left(1-\tfrac{\cD_3}{\lambda}\right)_{d_2}
  \left(1-\tfrac{\cD_4}{\lambda}\right)_{d_2}}
 \right] \cf_{\Ga}
\end{equation}
with
\begin{equation}
 \cf_{\Ga} (\bt,\e) =
  c_{1,3}\eu^{-\e_1 t^1-\e_3 t^2}
 +c_{1,4}\eu^{-\e_1 t^1-\e_4 t^2}
 +c_{2,3}\eu^{-\e_2 t^1-\e_3 t^2}
 +c_{2,4}\eu^{-\e_2 t^1-\e_4 t^2}
\end{equation}
and semi-classical data
\begin{equation}
 c_{1,3} = \lambda^{-3}\Ga\left(\tfrac{\e_2-\e_1}{\lambda}\right)
 \Ga\left(\tfrac{\e_4-\e_3}{\lambda}\right)
 \Ga\left(\tfrac{\e_5+2\e_1+2\e_3}{\lambda}\right)~,
\end{equation}
\begin{equation}
 c_{1,4} = \lambda^{-3}\Ga\left(\tfrac{\e_2-\e_1}{\lambda}\right)
 \Ga\left(\tfrac{\e_3-\e_4}{\lambda}\right)
 \Ga\left(\tfrac{\e_5+2\e_1+2\e_4}{\lambda}\right)~,
\end{equation}
\begin{equation}
 c_{2,3} = \lambda^{-3}\Ga\left(\tfrac{\e_1-\e_2}{\lambda}\right)
 \Ga\left(\tfrac{\e_4-\e_3}{\lambda}\right)
 \Ga\left(\tfrac{\e_5+2\e_2+2\e_3}{\lambda}\right)~,
\end{equation}
\begin{equation}
 c_{2,4} = \lambda^{-3}\Ga\left(\tfrac{\e_1-\e_2}{\lambda}\right)
 \Ga\left(\tfrac{\e_3-\e_4}{\lambda}\right)
 \Ga\left(\tfrac{\e_5+2\e_2+2\e_4}{\lambda}\right)~.
\end{equation}
The disk function $\cf^D$ is annihilated by the equivariant PF
operators
\begin{equation}
\begin{aligned}
 \PF^\eq_{(1,0)} &= \cD_1\cD_2-z_1\cD_5(\cD_5+\lambda)~, \\
 \PF^\eq_{(0,1)} &= \cD_3\cD_4-z_2\cD_5(\cD_5+\lambda) ~,
\end{aligned}
\end{equation}
which shows that the instanton operators
\begin{equation}
 \sfp_{d_1,d_2} = z_1^{d_1}z_2^{d_2}
 \frac{\left(\frac{\cD_5}{\lambda}\right)_{2d_1+2d_2}}
 {\left(1-\frac{\cD_1}{\lambda}\right)_{d_1}
 \left(1-\frac{\cD_2}{\lambda}\right)_{d_1}
 \left(1-\frac{\cD_3}{\lambda}\right)_{d_2}
 \left(1-\frac{\cD_4}{\lambda}\right)_{d_2}}
\end{equation}
are proportional to the compact divisor $\cD_5$ if $(d_1,d_2)\neq(0,0)$,
and therefore that the instantons are all regular in the non-equivariant limit.

The solutions to the equivariant PF equations are
\begin{equation}
 \Pi(p_{ij},\e) = \sum_{d_1,d_2=0}^\infty z_1^{d_1}z_2^{d_2}
 \frac{\left(\frac{\e_5+2\e_i+2\e_j}{\lambda}\right)_{2d_1+2d_2}}
 {\prod_{k=1}^2\left(1-\frac{\e_k-\e_i}{\lambda}\right)_{d_1}
  \prod_{l=3}^4\left(1-\frac{\e_l-\e_j}{\lambda}\right)_{d_2}}
 \eu^{-\e_i t^1-\e_j t^2},
 \quad
 i=1,2\quad j=3,4.
\end{equation}
The non-equivariant $\hat{I}$-operator expands as
\begin{multline}
 \lim_{\e\to0}\hat{I}_{X_\bt} = 1
 + 2G^{(00)}(\theta_1+\theta_2)
 + 2G^{(10)}\theta_1^2
 + 2G^{(01)}\theta_2^2
 + 2(G^{(10)}+G^{(01)})\theta_1\theta_2 \\
 + (2G^{(11)}+G^{(20)}-\tfrac{5\pi^2}3 G^{(00)}) \theta_1^2\theta_2
 + (2G^{(11)}+G^{(02)}-\tfrac{5\pi^2}3 G^{(00)}) \theta_1\theta_2^2 \\
 + (G^{(20)}-\tfrac{\pi^2}3 G^{(00)}) \theta_1^3
 + (G^{(02)}-\tfrac{\pi^2}3 G^{(00)}) \theta_2^3
 + \dots~,
\end{multline}
where we define
\begin{equation}
 G^{(ij)}(z_1,z_2) = \sum_{(d_1,d_2)\neq(0,0)}z_1^{d_1}z_2^{d_2}
 \partial_{d_1}^i\partial_{d_2}^j
 \frac{\Ga(2d_1+2d_2)}{\Ga(d_1+1)^2\Ga(d_2+1)^2}~.
\end{equation}
The solutions to the non-equivariant PF equations are
\begin{equation}
\begin{aligned}
 \Pi(\pt) &= 1 ~,\\
 \Pi(C^1) &= \log z_1 + 2 G^{(00)} = \log\td{z}_1 ~,\\
 \Pi(C^2) &= \log z_2 + 2 G^{(00)} = \log\td{z}_2 ~,\\
 \Pi(\BP^1\times\BP^1) &= \log\td{z}_1 \log\td{z}_2
 - 4 (G^{(00)})^2 + 2 G^{(01)} + 2 G^{(10)}~,
\end{aligned}
\end{equation}
where $C^1$ and $C^2$ are the homology two-cycles corresponding to the two 
$\BP^1$'s.
The modified PF operators
\begin{equation}
\begin{aligned}
 &\cD_1\cD_2\cD_5-z_1\cD_5(\cD_5+\lambda)(\cD_5+2\lambda) ~,\\
 &\cD_3\cD_4\cD_5-z_2\cD_5(\cD_5+\lambda)(\cD_5+2\lambda)
\end{aligned}
\end{equation}
allow us to define the regularized cubic solution
\begin{equation}
\begin{aligned}
 \Pi_\reg(X_\bt) =& \tfrac1{24}\log^3\td{z}_1
 -\tfrac18\log^2\td{z}_1\log\td{z}_2
 -\tfrac18\log\td{z}_1\log^2\td{z}_2
 +\tfrac{1}{24}\log^3\td{z}_2 \\
 & +\left((G^{(00)})^2-G^{(01)}\right) \log\td{z}_1
 +\left((G^{(00)})^2-G^{(10)}\right) \log\td{z}_2 \\
 & -2\left(
 \tfrac43 (G^{(00)})^3
 - G^{(00)} G^{(01)}
 - G^{(00)} G^{(10)}
 - \tfrac{\pi^2}{3} G^{(00)}
 + \tfrac12 G^{(11)}
 \right)~.
\end{aligned}
\end{equation}
Then we have,
\begin{equation}
\label{eq:FregLocalP1P1}
 \lim_{\e\to0} (-\lambda)^3 \cf^D_\reg
 = \Pi_\reg(X_\bt)
 + \tfrac{\alpha}{12} \left(\Pi(C^1)-\Pi(C^2)\right)
 \left(8\pi^2+\left(16\alpha^2-3\right)\left(\Pi(C^1)-\Pi(C^2)\right)^2\right)~,
\end{equation}
where $\Pi(C^1)-\Pi(C^2)$ is annihilated by the compact divisor operator $\cD_5$
and $\alpha$ parametrizes the intrinsic ambiguity in the choice of
left-inverse $R^j_a$,
\begin{equation}
 R=
\left(
\begin{array}{ccc}
 \alpha-1/4 & -\alpha-1/4
\end{array}
\right)~.
\end{equation}
Changing the value of $\alpha$ changes the semi-classical data in $\cf^D_\reg$
but it leaves the instanton part of the solution unchanged, therefore the GV
invariants do not depend on this choice.

Observing that
\begin{equation}
 \lim_{\e\to0} (-\lambda)^2 \hat{I}_{X_\bt}\cD_5\cf_{\Ga}
 = \Pi(\BP^1\times\BP^1) + \tfrac{2\pi^2}{3}\Pi(\pt)
\end{equation}
we can match with \cref{eq:CY3-t2} to read the GV invariants $n_{d_1,d_2}$, 
namely
\begin{equation}
 -4 (G^{(00)})^2 + 2 G^{(01)} + 2 G^{(10)}
 = \sum_{(d_1,d_2)\neq(0,0)} (-2d_1-2d_2)
 n_{\bd}(\BP^1\times\BP^1) \Li_2(\td{z}_1^{d_1}\td{z}_2^{d_2})~,
\end{equation}
which reproduce the results of ref.~\cite[Table~9]{Chiang:1999tz}.
We can also match \cref{eq:FregLocalP1P1,eq:localCY3-t3}
\begin{equation}
 \tfrac43 (G^{(00)})^3
 - G^{(00)} G^{(01)}
 - G^{(00)} G^{(10)}
 - \tfrac{\pi^2}{3} G^{(00)}
 + \tfrac12 G^{(11)}
 = \sum_{(d_1,d_2)\neq(0,0)}
 n_{\bd}(\BP^1\times\BP^1) \Li_3(\td{z}_1^{d_1}\td{z}_2^{d_2})~,
\end{equation}
which gives the same GV numbers.
Comparing to ref.~\cite{Honma:2013hma}, the redefinition of Euler's constant 
$\gamma$
amounts in our setup to multiplying by a factor of $\eu^{\e_5(\gamma - h(z))}$
in the shift equation. A similar remark applies to all other cases.

\subsection{\texorpdfstring{$SU(3)_0$}{SU(3)level0} geometry}

Consider the Calabi-Yau three-fold $X_{\bt}$
given by the quotient of $\BC^6$ by $U(1)^3$ with
\begin{equation}
Q =
\begin{pmatrix}
 1 & 1 & 1 & -3 &  0 & 0 \\
 0 & 0 & 1 & -2 & 1  & 0 \\
 0 & 0 & 0 &  1 & -2 & 1
\end{pmatrix}
\end{equation}
and chamber $t^1 > t^2 > 0$, $t^3 > 0$. This CY geometry corresponds
to a 5d gauge theory with $SU(3)$ gauge group and zero Chern-Simons level.
This manifold has two compact toric divisors $D_4$ and $D_5$.
The disk function is defined as
\begin{multline}
 \cf^D (\bt,\e;\lambda) = \lambda^{-6} \oint_\qjk
 \frac {\dif\phi_1\dif\phi_2\dif\phi_3} {(2\pi\ii)^3}
 \eu^{\phi_1 t^1+\phi_2 t^2+\phi_3 t^3}
 \Ga\left(\tfrac{\e_1 + \phi_1}{\lambda}\right)
 \Ga\left(\tfrac{\e_2 + \phi_1}{\lambda}\right) \\
 \Ga\left(\tfrac{\e_3 + \phi_1+ \phi_2}{\lambda}\right)
 \Ga\left(\tfrac{\e_4 -3\phi_1 -2\phi_2 + \phi_3}{\lambda}\right)
 \Ga\left(\tfrac{\e_5 +\phi_2 - 2\phi_3}{\lambda}\right)
 \Ga\left(\tfrac{\e_6 + \phi_3}{\lambda}\right)
\end{multline}
and the poles are located at $(\phi_1,\phi_2,\phi_3)$ equal to
\begin{equation}
\begin{tabu}{r|l}
(-\e_1-k_1\lambda, -\e_5-2\e_6-(k_2+2k_3)\lambda, -\e_6-k_3\lambda) & (1,5,6) 
~,\\
(-\e_2-k_1\lambda, -\e_5-2\e_6-(k_2+2k_3)\lambda, -\e_6-k_3\lambda) & (2,5,6) 
~, \\
(-\e_1-k_1\lambda, \e_1 - \e_3-(-k_1+k_2)\lambda, - \e_1 - 
2\e_3-\e_4-(k_1+2k_2+k_3)\lambda) & (1,3,4) ~, \\
(-\e_2-k_1\lambda, \e_2 - \e_3-(-k_1+k_2)\lambda, - \e_2 - 
2\e_3-\e_4-(k_1+2k_2+k_3)\lambda) & (2,3,4) ~,\\
(-\e_1-k_1\lambda, -\e_3 +\e_1-(-k_1+k_2)\lambda, -\e_6-k_3\lambda) & (1,3,6) 
~,\\
(-\e_2-k_1\lambda, -\e_3 +\e_2-(-k_1+k_2)\lambda, -\e_6-k_3\lambda) & (2,3,6) 
~. \\
\end{tabu}
\end{equation}
The equivariant cohomology ring of $X_{\bt}$ is
\begin{equation}
 H^\bullet_{\sft}(X_{\bt}) \cong \BC[\phi_1,\phi_2,\phi_3,\e_1,\ldots,\e_6] /
 \langle x_1 x_2, x_3 x_5, x_3 x_6, x_4 x_6 \rangle
\end{equation}
and the quantum cohomology relations are encoded in the equivariant PF operators
\begin{equation}
\begin{aligned}
 \PF^\eq_{(1,-1,0)} &= \cD_1\cD_2 - \eu^{-\lambda(t^1-t^2)}\cD_4\cD_5 ~,\\
 \PF^\eq_{(0,1,0)} &= \cD_3\cD_5 - \eu^{-\lambda t^2}\cD_4(\cD_4+\lambda) ~, \\
 \PF^\eq_{(0,1,1)} &= \cD_3\cD_6 - \eu^{-\lambda(t^2+t^3)}\cD_4\cD_5 ~,\\
 \PF^\eq_{(0,0,1)} &= \cD_4\cD_6 - \eu^{-\lambda t^3}\cD_5(\cD_5+\lambda) ~,
\end{aligned}
\end{equation}
whose generic solution is
\begin{equation}
 \label{eq:SU3-solution}
 \cf^{D} =
 \sum_{ \substack{ d_1\geq0,\\d_2\geq-d_1,\\d_3\geq0 } }
 z_1^{d_1}z_2^{d_2}z_3^{d_3} \,
 \left(\tfrac{\cD_1}{\lambda}\right)_{-d_1}
 \left(\tfrac{\cD_2}{\lambda}\right)_{-d_1}
 \left(\tfrac{\cD_3}{\lambda}\right)_{-d_1-d_2}
 \left(\tfrac{\cD_4}{\lambda}\right)_{3d_1+2d_2-d_3}
 \left(\tfrac{\cD_5}{\lambda}\right)_{-d_2+2d_3}
 \left(\tfrac{\cD_6}{\lambda}\right)_{-d_3}
 \, \cf_{\Ga}
\end{equation}
with semi-classical data
\begin{equation}
\begin{aligned}
 \cf_{\Ga}(\bt,\e) =&
  c_{1,5,6} \, \eu^{ - \e_1 t^1 - (\e_5+2\e_6) t^2 - \e_6 t^3 }
  +c_{2,5,6} \, \eu^{ - \e_2 t^1 - (\e_5+2\e_6) t^2 - \e_6 t^3 } \\
 &+c_{1,3,4} \, \eu^{ - \e_1 t_1 - (-\e_1+\e_3) t^2 - (\e_1+2\e_3+\e_4) t^3 }
  +c_{2,3,4} \, \eu^{ - \e_2 t_1 - (-\e_2+\e_3) t^2 - (\e_2+2\e_3+\e_4) t^3 } \\
 &+c_{1,3,6} \, \eu^{ - \e_1 t^1 - (-\e_1+\e_3) t^2 - \e_6 t^3 }
  +c_{2,3,6} \, \eu^{ - \e_2 t^1 - (-\e_2+\e_3) t^2 - \e_6 t^3 }~.
\end{aligned}
\end{equation}
The instanton sum in \cref{eq:SU3-solution} contains only positive powers of
$z_1$, $z_3$ but also negative powers of $z_2$. This is consistent with the
choice of chamber for the Kähler moduli $t^1 > t^2$. After a
change of coordinates in the Kähler cone given by the unimodular matrix
\begin{equation}
\begin{pmatrix}
 1 & -1 & 0 \\
 0 & 1 & 0 \\
 0 & 0 & 1
\end{pmatrix}
\in SL(3,\BZ)
\end{equation}
we can bring back the instanton sum to the standard cone $d_1,d_2,d_3\geq0$.
This choice of Kähler coordinates corresponds to the choice of transformed
charge matrix
\begin{equation}
Q =
\begin{pmatrix}
 1 & 1 & 0 & -1 & -1 & 0 \\
 0 & 0 & 1 & -2 & 1  & 0 \\
 0 & 0 & 0 &  1 & -2 & 1
\end{pmatrix}
\end{equation}
and the chamber is mapped to the region $t^1,t^2,t^3>0$, where we use the same
symbols for the new coordinates on the transformed Kähler cone.
This geometry corresponds to two $\BP^2$ connected by a
$\BP^1$ in one of the phases related by a flop transition
as described in ref.~\cite{Chiang:1999tz}.

The Givental $\hat{I}$-operator is
\begin{multline}
 \hat{I}_{X_\bt} = \sum_{ \substack{ d_1\geq0,\\ d_2\geq0,\\d_3\geq0 } }
 z_1^{d_1}z_2^{d_2}z_3^{d_3} \,
 \left(\tfrac{{\cD}_1}{\lambda}\right)_{-d_1}
 \left(\tfrac{{\cD}_2}{\lambda}\right)_{-d_1}
 \left(\tfrac{{\cD}_3}{\lambda}\right)_{-d_2}
 \left(\tfrac{{\cD}_4}{\lambda}\right)_{d_1+2d_2-d_3}
 \left(\tfrac{{\cD}_5}{\lambda}\right)_{d_1-d_2+2d_3}
 \left(\tfrac{{\cD}_6}{\lambda}\right)_{-d_3} \\
 = 1
 + \sum_{ \substack{ d_1+2d_2-d_3>0,\\ d_1-d_2+2d_3\leq0 } }
 (-z_1)^{d_1}z_2^{d_2}(-z_3)^{d_3} \,
 \tfrac{
 \left(\tfrac{{\cD}_4}{\lambda}\right)_{d_1+2d_2-d_3}
 }{
 \left(1-\tfrac{{\cD}_1}{\lambda}\right)_{d_1}
 \left(1-\tfrac{{\cD}_2}{\lambda}\right)_{d_1}
 \left(1-\tfrac{{\cD}_3}{\lambda}\right)_{d_2}
 \left(1-\tfrac{{\cD}_5}{\lambda}\right)_{-d_1+d_2-2d_3}
 \left(1-\tfrac{{\cD}_6}{\lambda}\right)_{d_3}
 } \\
 + \sum_{ \substack{ d_1+2d_2-d_3\leq0,\\ d_1-d_2+2d_3>0 } }
 (-z_1)^{d_1}(-z_2)^{d_2}z_3^{d_3} \,
 \tfrac{
 \left(\tfrac{{\cD}_5}{\lambda}\right)_{d_1-d_2+2d_3}
 }{
 \left(1-\tfrac{{\cD}_1}{\lambda}\right)_{d_1}
 \left(1-\tfrac{{\cD}_2}{\lambda}\right)_{d_1}
 \left(1-\tfrac{{\cD}_3}{\lambda}\right)_{d_2}
 \left(1-\tfrac{{\cD}_4}{\lambda}\right)_{-d_1-2d_2+d_3}
 \left(1-\tfrac{{\cD}_6}{\lambda}\right)_{d_3}
 } \\
 + \sum_{ \substack{ d_1+2d_2-d_3>0,\\ d_1-d_2+2d_3>0 } }
 z_1^{d_1}(-z_2)^{d_2}(-z_3)^{d_3} \,
 \tfrac{
 \left(\tfrac{{\cD}_4}{\lambda}\right)_{d_1+2d_2-d_3}
 \left(\tfrac{{\cD}_5}{\lambda}\right)_{d_1-d_2+2d_3}
 }{
 \left(1-\tfrac{{\cD}_1}{\lambda}\right)_{d_1}
 \left(1-\tfrac{{\cD}_2}{\lambda}\right)_{d_1}
 \left(1-\tfrac{{\cD}_3}{\lambda}\right)_{d_2}
 \left(1-\tfrac{{\cD}_6}{\lambda}\right)_{d_3}
 }~,
\end{multline}
where all instanton operators are proportional to at least one of the two
compact divisor operators $\cD_4,\cD_5$ except for $\sfp_{(0,0,0)}=1$,
hence the only singular contribution to the disk function comes from the
semi-classical part $\cf_{\Ga}$.

If we define the functions
\begin{equation}
\begin{aligned}
 L_1^{(ijk)}&:= \sum_{ \substack{ d_1+2d_2-d_3>0,\\d_1-d_2+2d_3\leq0 } }
 (-z_1)^{d_1}z_2^{d_2}(-z_3)^{d_3}
 \partial_{d_1}^i\partial_{d_2}^j\partial_{d_3}^k
 \tfrac{\Ga(d_1+2d_2-d_3)}
 {\Ga(d_1+1)^2\Ga(d_2+1)\Ga(-d_1+d_2-2d_3+1)\Ga(d_3+1)} ~, \\
 L_2^{(ijk)}&:= \sum_{ \substack{ d_1+2d_2-d_3\leq0,\\d_1-d_2+2d_3>0 } }
 (-z_1)^{d_1}(-z_2)^{d_2}z_3^{d_3}
 \partial_{d_1}^i\partial_{d_2}^j\partial_{d_3}^k
 \tfrac{\Ga(d_1-d_2+2d_3)}
 {\Ga(d_1+1)^2\Ga(d_2+1)\Ga(-d_1-2d_2+d_3+1)\Ga(d_3+1)} ~, \\
 L_3^{(ijk)}&:= \sum_{ \substack{ d_1+2d_2-d_3>0,\\d_1-d_2+2d_3>0 } }
 z_1^{d_1}(-z_2)^{d_2}(-z_3)^{d_3}
 \partial_{d_1}^i\partial_{d_2}^j\partial_{d_3}^k
 \tfrac{\Ga(d_1+2d_2-d_3)\Ga(d_1-d_2+2d_3)}
 {\Ga(d_1+1)^2\Ga(d_2+1)\Ga(d_3+1)}
\end{aligned}
\end{equation}
and $L_n\equiv L_n^{(000)}$,
we can write the solutions to the non-equivariant PF equations as
\begin{equation}
\begin{aligned}
 \Pi(\pt) =& 1 ~,\\
 \Pi(C^1) =& \log z_1 + L_1 + L_2 = \log \td{z}_1 ~,\\
 \Pi(C^2) =& \log z_2 + 2 L_1 - L_2 = \log \td{z}_2 ~,\\
 \Pi(C^3) =& \log z_3 - L_1 + 2 L_2 = \log \td{z}_3  ~, \\
 \Pi(D_4) =& \tfrac12\log^2\td{z}_2 + \log\td{z}_1\log\td{z}_2
 - 4L_1^2 + L_1 L_2 + \tfrac12 L_2^2
 + 2L_1^{(100)} -L_2^{(100)} +3L_1^{(010)} - L_3 ~, \\
 \Pi(D_5) =& \tfrac12\log^2\td{z}_3 + \log\td{z}_1\log\td{z}_3
 + \tfrac12 L_1^2 + L_1 L_2 - 4 L_2^2
 + 2L_2^{(100)} -L_1^{(100)} +3L_2^{(001)} - L_3 ~,
\end{aligned}
\end{equation}
where $C^a\in H_2(X_\bt)$ are such that $\int_{C^a}\phi_b=\delta^a_b$
and $D_4,D_5\in H_4(X_\bt)$ are the compact divisors.
Matching with \cref{eq:CY3-t2} we can read the GV invariants $n_{d_1,d_2,d_3}$
and we obtain the same result as ref.~\cite[Table~6]{Chiang:1999tz}.

The additional solution to the non-equivariant modified PF equations is
\begin{multline}
 \Pi_\reg(X_\bt) =
 - \tfrac13 \left( \log\td{z}_1\log^2\td{z}_2
 + \log\td{z}_1\log\td{z}_2\log\td{z}_3
 + \log\td{z}_1\log^2\td{z}_3 \right) \\
 + \tfrac16 \left( \log^2\td{z}_2\log\td{z}_3
 + \log\td{z}_2\log^2\td{z}_3 \right) \\
 + \left\{ L_1^2 - L_1 L_2 + L_2^2 + L_3
 - L_1^{(010)} - L_2^{(001)} \right\} \log\td{z}_1 \\
 + \left\{ \tfrac32 L_1^2- L_1^{(100)} - L_1^{(010)} \right\} \log\td{z}_2
 + \left\{ \tfrac32 L_2^2 - L_2^{(100)} - L_2^{(001)} \right\} \log\td{z}_3 \\
 + \Big\{ L_1^2 L_2 + L_1 L_2^2 - \tfrac83 (L_1^3+L_2^3)
 + (L_1+L_2) \left(\tfrac{2\pi^2}{3} - L_3\right) \\
 + L_1 (2L_1^{(100)}-L_2^{(100)}+3L_1^{(010)}-L_3)
 + L_2 (2L_2^{(100)}-L_1^{(100)}+3L_2^{(001)}-L_3) \\
 - \tfrac12 L_1^{(020)} - \tfrac12 L_2^{(002)}
 - L_1^{(110)} - L_2^{(101)} + L_3^{(100)}
 \Big\}~,
\end{multline}
which corresponds to
\begin{multline}
 (-\lambda)^3\lim_{\e\to0}\cf^D_\reg =
 \Pi_\reg(X_\bt) - \tfrac{2\pi^2}{3}\Pi(C^2) - \tfrac{2\pi^2}{3}\Pi(C^3) \\
 +\tfrac16 \left(8 \alpha ^3-3 \alpha ^2 \beta +6 \alpha ^2-3 \alpha
 \beta ^2-6 \alpha  \beta +8 \beta ^3+6 \beta ^2\right)
 \left( \Pi(C^1)-\Pi(C^2)-\Pi(C^3) \right)^3 \\
 +\tfrac{2\pi^2}{3} (\alpha+\beta) \left( \Pi(C^1)-\Pi(C^2)-\Pi(C^3) \right)~,
\end{multline}
where the combination $\Pi(C^1)-\Pi(C^2)-\Pi(C^3)$ is in the kernel of the
operators $\td{\cD}_4,\td{\cD}_5$, as differential operators in the mirror
variables $\td{t}^a$. Here $\alpha,\beta$ are arbitrary numbers that parametrize
the choice of left-inverse $R^j_a$,
\begin{equation}
 R=
\left(
\begin{array}{ccc}
 \alpha & -\alpha -2/3 & -\alpha -1/3 \\
  \beta &  -\beta -1/3 &  -\beta -2/3 \\
\end{array}
\right)~.
\end{equation}

The K-theoretic uplift of the disk function is
\begin{multline}
 \cz^D(\bT,q;\q) = -\oint_\qjk
 \frac {\dif w_1\dif w_2\dif w_3} {(2\pi\ii)^3w_1w_2w_3} \\
 \frac {w_1^{-T^1}w_2^{-T^2}w_3^{-T^3}}
 {(q_1w_1;\q)_\infty (q_2w_1;\q)_\infty (q_3w_1w_2;\q)_\infty
  (q_4w_1^{-3}w_2^{-2}w_3;\q)_\infty (q_5w_2w_3^{-2};\q)_\infty
  (q_6w_3;\q)_\infty}
\end{multline}
satisfying the quantum K-theory relations
\begin{equation}
\begin{aligned}
 \left[ (1-\Delta_1)(1-\Delta_2) - \q^{T^1-T^2} (1-\Delta_4)(1-\Delta_5) \right]
 \cz^D(\bT,q;\q) = 0 ~,\\
 \left[ (1-\Delta_3)(1-\Delta_5) - \q^{T^2} (1-\Delta_4)(1-\q\Delta_4) \right]
 \cz^D(\bT,q;\q) = 0 ~,\\
 \left[ (1-\Delta_3)(1-\Delta_6) - \q^{T^2+T^3} (1-\Delta_4)(1-\Delta_5) \right]
 \cz^D(\bT,q;\q) = 0 ~,\\
 \left[ (1-\Delta_4)(1-\Delta_6) - \q^{T^3} (1-\Delta_5)(1-\q\Delta_5) \right]
 \cz^D(\bT,q;\q) = 0 ~.
\end{aligned}
\end{equation}
The K-theoretic $I$-function operator
\begin{multline}
 \hat{I}^K_{X_\bt} =
 \sum_{ \substack{ d_1\geq0,\\d_2\geq-d_1,\\d_3\geq0 } }
 \q^{d_1T^1+d_2T^2+d_3T^3}
  \times
 (\Delta_1;\q)_{-d_1}
 (\Delta_2;\q)_{-d_1}
 (\Delta_3;\q)_{-d_1-d_2}
 \times \\
 \times
 (\Delta_4;\q)_{3d_1+2d_2-d_3}
 (\Delta_5;\q)_{-d_2+2d_3}
 (\Delta_6;\q)_{-d_3}~,
\end{multline}
creates a solution to PF equations when acting on semi-classical data
\begin{equation}
\begin{aligned}
 \cz_{\Ga_\q}(\bT,q) =&
  +c_{1,5,6} \, q_1^{T^1}q_5^{T^2}q_6^{2T^2+T^3}
  +c_{2,5,6} \, q_2^{T^1}q_5^{T^2}q_6^{2T^2+T^3} \\
 &+c_{1,3,4} \, q_1^{T^1-T^2+T^3}q_3^{T^2+2T^3}q_4^{T^3}
  +c_{2,3,4} \, q_2^{T^1-T^2+T^3}q_3^{T^2+2T^3}q_4^{T^3} \\
 &+c_{1,3,6} \, q_1^{T^1-T^2}q_3^{T^2}q_6^{T^3}
  +c_{2,3,6} \, q_2^{T^1-T^2}q_3^{T^2}q_6^{T^3}~.
\end{aligned}
\end{equation}

\subsection{Local \texorpdfstring{$F_2$}{F2}} \label{ex:f2}

We consider the toric quotient $X_\bt = \tot K_{F_2}$
corresponding to the canonical bundle of the Hirzebruch surface $F_2$. 
This local CY geometry is defined by the charge matrix
\begin{equation}
Q =
\begin{pmatrix}
 1 & 1 & -2 &  0 & 0 \\
 0 & 0 &  1 & -2 & 1
\end{pmatrix}
\end{equation}
and chamber $t^1,t^2> 0$. The total space of the line bundle has one compact
toric divisor $D_4$ corresponding to the base $F_2$.

The disk function is defined by the integral
\begin{equation}
 \cf^D = \lambda^{-5} \oint_\qjk
 \frac {\dif\phi_1 \dif\phi_2} {(2\pi\ii)^2}
 \eu^{\phi_1 t^1+\phi_2 t^2}
 \Ga \left( \tfrac {\e_1 + \phi_1} {\lambda} \right)
 \Ga \left( \tfrac {\e_2 + \phi_1} {\lambda} \right)
 \Ga \left( \tfrac {\e_3 -2\phi_1 + \phi_2} {\lambda} \right)
 \Ga \left( \tfrac {\e_4 -2\phi_2} {\lambda} \right)
 \Ga \left( \tfrac {\e_5 +\phi_2} {\lambda} \right)
\end{equation}
with classical poles in the set
\begin{equation}
 \jk = \{(1,3),(1,5),(2,3),(2,5)\}
\end{equation}
and quantum poles at the values of $(\phi_1,\phi_2)$ equal to
\begin{equation}
\begin{tabu}{l}
(-\e_1 - k_1, -\e_3 - 2\e_1 - 2k_1 - k_2)~, \\
(-\e_2 - k_1, -\e_3 - 2\e_2 - 2k_1 - k_2)~, \\
(-\e_1 - k_1, -\e_5 - k_2)~, \\
(-\e_2 - k_1, -\e_5 - k_2)~.
\end{tabu}
\end{equation}
The equivariant cohomology ring is given by the quotient of
$\BC[\phi_1,\phi_2,\e_1,\dots,\e_5]$ by the ideal
\begin{equation}
 I_\mathrm{SR} = \langle x_1 x_2, x_3 x_5 \rangle~.
\end{equation}
The equivariant PF operators are then defined as
\begin{equation}
\begin{aligned}
 \PF^\eq_{(1,0)} = \cD_1 \cD_2 - \eu^{-\lambda t^1} (\lambda + \cD_3) \cD_3~, \\
 \PF^\eq_{(0,1)} = \cD_3 \cD_5 - \eu^{-\lambda t^2} (\lambda+ \cD_4) \cD_4~.
\end{aligned}
\end{equation}
The $\hat{I}$-operator is
\begin{equation}
\begin{aligned}
 \hat{I}_{X_\bt}=&
 \sum_{d_1,d_2\geq0}^\infty z_1^{d_1}z_2^{d_2}
 \left(\tfrac{\cD_{1}}{\lambda}\right)_{-d_1}
 \left(\tfrac{\cD_{2}}{\lambda}\right)_{-d_1}
 \left(\tfrac{\cD_{3}}{\lambda}\right)_{2d_1-d_2}
 \left(\tfrac{\cD_{4}}{\lambda}\right)_{2d_2}
 \left(\tfrac{\cD_{5}}{\lambda}\right)_{-d_2} \\
 =& 1 + \sum_{d_1=1}^\infty z_1^{d_1}
 \frac
 {\left(\frac{\cD_3}{\lambda}\right)_{2d_1}}
 {\left(1-\frac{\cD_1}{\lambda}\right)_{d_1}
  \left(1-\frac{\cD_2}{\lambda}\right)_{d_1}} \\
 &
 +\sum_{\substack{2d_1-d_2\leq0\\d_2>0}} z_1^{d_1}z_2^{d_2}
 \frac
 {\left(\frac{\cD_4}{\lambda}\right)_{2d_2}}
 {\left(1-\frac{\cD_1}{\lambda}\right)_{d_1}
  \left(1-\frac{\cD_2}{\lambda}\right)_{d_1}
  \left(1-\frac{\cD_3}{\lambda}\right)_{-2d_1+d_2}
  \left(1-\frac{\cD_5}{\lambda}\right)_{d_2}} \\
 &
 +\sum_{\substack{2d_1-d_2>0\\d_2>0}} z_1^{d_1}(-z_2)^{d_2}
 \frac
 {\left(\frac{\cD_3}{\lambda}\right)_{2d_1-d_2}
  \left(\frac{\cD_4}{\lambda}\right)_{2d_2}}
 {\left(1-\frac{\cD_1}{\lambda}\right)_{d_1}
  \left(1-\frac{\cD_2}{\lambda}\right)_{d_1}
  \left(1-\frac{\cD_5}{\lambda}\right)_{d_2}} ~,
\end{aligned}
\end{equation}
All instanton operators are proportional to the compact divisor
operator $\cD_4$, except for those of the form $\sfp_{d_1,0}$. These span
the singular cone of the disk function, which is non-trivial in this example.
It follows that infinitely many terms in the partition function are singular in 
the
non-equivariant limit and a regularization is necessary to get a cubic PF 
solution.

The regular solutions to the PF equations are in correspondence with the four
generators of the homology lattice and in the non-equivariant limit can be 
written
as
\begin{equation}
\begin{aligned}
 \Pi(\pt) &= 1 ~,\\
 \Pi(C^1) &= \log z_1 +2G = \log\td{z}_1 ~, \\
 \Pi(C^2) &= \log z_2 -G+2H_1 = \log\td{z}_2~, \\
 \Pi(D_4) &= \log\td{z}_1\log\td{z}_2 + \log^2\td{z}_2
 -2(2H_1^2-H_1^{(10)}-2H_1^{(01)}) ~,
\end{aligned}
\end{equation}
where we define the functions
\begin{equation}
 G^{(i)} := \sum_{d_1=1}^{\infty} z_1^{d_1} \partial_{d_1}^i
 \frac {\Ga(2d_1)} {\Ga(d_1+1)^2}
\end{equation}
\begin{equation}
 B^{(i)} := \sum_{d_1=1}^{\infty}z_1^{d_1}\partial_{d_1}^i
 \frac{\Ga(2d_1)}{\Ga(d_1+1)^2} \psi^{(0)}(2d_1)
\end{equation}
\begin{equation}
 H_1^{(ij)} := \sum_{\substack{2d_1-d_2\leq0\\d_2>0}}
 z_1^{d_1} z_2^{d_2}
 \partial_{d_1}^i \partial_{d_2}^j
 \frac {\Ga(2d_2)} {\Ga(d_1+1)^2 \Ga(-2d_1+d_2+1) \Ga(d_2+1)}
\end{equation}
\begin{equation}
 H_2^{(ij)} := \sum_{\substack{2d_1-d_2>0\\d_2>0}} z_1^{d_1}(-z_2)^{d_2}
 \partial_{d_1}^i \partial_{d_2}^j
 \frac{\Ga(2d_1-d_2)\Ga(2d_2)} {\Ga(d_1+1)^2 \Ga(d_2+1)}
\end{equation}
and it is understood that
where we do not write superscripts we mean that they are all zero.
The quadratic solution corresponding to the compact divisor $D_4$ then satisfies
\begin{equation}
 \lim_{\e\to0} (-\lambda)^2 \hat{I}_{X_\bt}\cD_4\cf_{\Ga}
 = \Pi(D_4) + \tfrac{2\pi^2}{3}\Pi(\pt)~.
\end{equation}

Using the regularization scheme in \cref{sec:regularization} with
\begin{equation}
 R=
\left(
\begin{array}{ccc}
 \alpha & -1/2
\end{array}
\right)
\end{equation}
we can compute the regularized disk function
\begin{multline}
 (-\lambda)^3\lim_{\e\to0}\cf^D_\reg =
 - \tfrac14\log\td{z}_1\log^2\td{z}_2 - \tfrac16\log^3\td{z}_2
 - \tfrac{\pi^2}{3} \log\td{z}_2 \\
 + \log\td{z}_1 \left( (\tfrac12 G-H_1)^2 + H_2 - H_1^{(01)} \right)
 + \log\td{z}_2 \left( 2 H_1^2 -H_1^{(10)} - 2 H_1^{(01)} \right) \\
 - \tfrac16 G^3 + G^2 H_1 - \tfrac83 H_1^3
 - G H_1^{(10)} - 2 G H_2 - \tfrac{\pi^2}{3} G \\
 + 4 H_1 H_1^{(01)} + 2 H_1 H_1^{(10)}
 + \tfrac{2\pi^2}{3} H_1 - H_1^{(02)} - H_1^{(11)} + H_2^{(10)}
 + \tfrac{\alpha}{3} (\log\td{z}_1-2G) \times \\
 \left((4\alpha(4\alpha+3)+3)G^2 - (4\alpha(4\alpha+3)+3) G \log\td{z}_1
 + \alpha(4\alpha+3) \log^2\td{z}_1 + 2\pi^2\right)~,
\end{multline}
which is a cubic solution to modified PF equations, obtained by operators
\begin{equation}
\begin{aligned}
 &\cD_1\cD_2\cD_4-z_1\cD_3(\cD_3+\lambda)\cD_4~, \\
 &\cD_3\cD_4\cD_5-z_2\cD_4(\cD_4+\lambda)(\cD_4+2\lambda) ~.
\end{aligned}
\end{equation}
The regularized cubic solution associated to the fundamental cycle of $X_\bt$ is
\begin{multline}
\label{eq:Pi-reg-F2}
 \Pi_\reg(X_\bt) =
 - \tfrac14 \log\td{z}_1 \log^2\td{z}_2
 -\tfrac16\log^3\td{z}_2
 +\log\td{z}_1\left(-\tfrac12 B-\tfrac{\gamma}{2} G
 +\left(\tfrac12 G-H_1\right)^2-H_1^{(01)}+H_2\right) \\
 +\log\td{z}_2\left(2H_1^2-H_1^{(10)}-2H_1^{(01)}\right)
 -2\Big(
 \tfrac14 B^{(1)}
 -\tfrac12 B G
 -\tfrac{\gamma}{4} G^2
 +\tfrac1{12} G^3
 -\tfrac12 G^2 H_1
 +\tfrac43 H_1^3 \\
 -2H_1 H_1^{(01)}
 +\tfrac12 H_1^{(02)}
 +\tfrac12 G H_1^{(10)}
 -H_1 H_1^{(10)}
 +\tfrac12 H_1^{(11)}
 +G H_2
 -\tfrac12 H_2^{(10)}
 -\tfrac{\pi^2}{12} G
 -\tfrac{\pi^2}{3} H_1
 \Big) ~,
\end{multline}
and it differs from $\cf^D_\reg$ by a lower-degree term proportional to the 
period
$\Pi(C^2)=\log\td{z}_2$ and also by a correction term $\delta$ that only
depends on $z_1$ (and not $z_2$). As $\lim_{\e\to0}\cD_4=\tfrac{\partial}
{\partial t^1}$, it follows that $\delta$ is in the kernel of the compact
divisor operator, as in \cref{eq:Freg-GV-CY3}.

The GV invariants $n_{d_1,d_2}$ can be read by matching \cref{eq:Pi-reg-F2} with
\cref{eq:localCY3-t3},
\begin{multline}
 \sum_{(d_1,d_2)\neq(0,0)} n_{d_1,d_2} \log(\td{z}_1^{d_1}\td{z}_2^{d_2})
 \Li_2 (\td{z}_1^{d_1}\td{z}_2^{d_2}) = \\
 = \log\td{z}_1\left(-\tfrac12 B-\tfrac{\gamma}{2} G
 +\left(\tfrac12 G-H_1\right)^2-H_1^{(01)}+H_2\right) \\
 + \log\td{z}_2\left(2H_1^2-H_1^{(10)}-2H_1^{(01)}\right)
\end{multline}
or
\begin{multline}
 \sum_{(d_1,d_2)\neq(0,0)} n_{d_1,d_2} 
 \Li_3 (\td{z}_1^{d_1}\td{z}_2^{d_2}) =
 \tfrac14 B^{(1)}
 -\tfrac12 B G
 -\tfrac{\gamma}{4} G^2
 +\tfrac1{12} G^3
 -\tfrac12 G^2 H_1
 +\tfrac43 H_1^3 \\
 -2H_1 H_1^{(01)}
 +\tfrac12 H_1^{(02)}
 +\tfrac12 G H_1^{(10)}
 -H_1 H_1^{(10)}
 +\tfrac12 H_1^{(11)}
 +G H_2
 -\tfrac12 H_2^{(10)}
 -\tfrac{\pi^2}{12} G
 -\tfrac{\pi^2}{3} H_1~,
\end{multline}
which give the same results as those in ref.~\cite[Table~11]{Chiang:1999tz},
including $n_{1,0}=-1/2$.
We should remark, however, that from $\Pi(D_4)$ one can read all $n_{\bd}$ with
$d_2\neq0$ and since $D_4$ is compact these numbers are uniquely defined.
The GV invariants $n_{d_1,0}$ instead only appear in the
expansion of $\Pi_\reg(X_\bt)$, which is regularization-dependent, hence they
are not guaranteed to be integers, as it is clear from the result
$n_{1,0}=-1/2$. If we were to read $n_{d_1,0}$ from
the non-equivariant limit of $\cf^D_\reg$ instead, we would get different
results (precisely because of the correction term $\delta$).
This signals that when instantons are singular
then some of the GW invariants (as computed from PF solutions) need
regularization and no canonical choice exists.
The discussion can be easily generalized to the K-theoretic case and also there
we observe that the instantons of charges $(d_1,0)$ are singular in the
$q\to1$ limit.

\subsection{Local \texorpdfstring{$A_2$}{A2}}
\label{ex:localA2}

We consider the CY manifold corresponding to the charge matrix
\begin{equation}
Q =
\begin{pmatrix}
 1 & 1 & -2 &  0 &  0  & 0 \\
 0 & 0 &  1 & -2 &  1  & 0 \\
 0 & 0 &  0 &  1 & -2  & 1
\end{pmatrix}
\end{equation}
with chamber $t^1,t^2,t^3>0$. By geometric engineering arguments this geometry
corresponds to a 5d gauge theory with gauge group $SU(3)$
and Chern-Simons level 3.
This manifold has two compact toric divisors $D_4$ and $D_5$.
We define the disk function
\begin{multline}
 \cf^D (\bt,\e;\lambda) = \lambda^{-6} \oint_\qjk
 \frac {\dif\phi_1 \dif\phi_2 \dif\phi_3} {(2\pi \ii)^3}
 \eu^{\phi_1 t^1+\phi_2 t^2+\phi_3 t^3}
 \Ga\left(\tfrac{\e_1 + \phi_1}{\lambda}\right)
 \Ga\left(\tfrac{\e_2 + \phi_1}{\lambda}\right) \\
 \Ga\left(\tfrac{\e_3 -2\phi_1 + \phi_2}{\lambda}\right)
 \Ga\left(\tfrac{\e_4 -2\phi_2 + \phi_3}{\lambda}\right)
 \Ga\left(\tfrac{\e_5 +\phi_2 - 2\phi_3}{\lambda}\right)
 \Ga\left(\tfrac{\e_6 + \phi_3}{\lambda}\right)
\end{multline}
with poles in $(\phi_1, \phi_2, \phi_3)$ located at (minus)
\begin{equation}
\begin{tabu}{l}
(\e_1 + k_1, \e_3 + 2\e_1 + 2k_1 + k_2, \e_6 + k_3)~, \\
(\e_2 + k_1, \e_3 + 2\e_2 + 2k_1 + k_2, \e_6 + k_3) ~, \\
(\e_1 + k_1, 2\e_1 + \e_3 + 2k_1 + k_2, 4\e_1 + 2\e_3 + \e_4 + 4k_1 + 2k_2 + 
k_3) ~, \\
(\e_2 + k_1, 2\e_2 + \e_3 + 2k_1 + k_2, 4\e_2 + 2\e_3 + \e_4 + 4k_1 + 2k_2 + 
k_3) ~, \\
(\e_1 + k_1, \e_5 + 2\e_6 + k_2 + 2k_3, \e_6 + k_3) ~,\\
(\e_2 + k_1, \e_5 + 2\e_6 + k_2 + 2k_3, \e_6 + k_3) ~.
\end{tabu}
\end{equation}
The equivariant cohomology ring is the quotient by the ideal
\begin{equation}
 I_\mathrm{SR} = \langle x_1 x_2, x_3 x_5, x_3 x_6, x_4 x_6 \rangle~.
\end{equation}
The quantum cohomology relations / equivariant PF operators are
\begin{equation}
\begin{aligned}
 \PF^\eq_{(1,0,0)} &= \cD_1\cD_2 - \eu^{-\lambda t^1}(\lambda+\cD_3)\cD_3 ~, \\
 \PF^\eq_{(0,1,0)} &= \cD_3\cD_5 - \eu^{-\lambda t^2}(\lambda+\cD_4)\cD_4 ~,\\
 \PF^\eq_{(0,1,1)} &= \cD_3\cD_6 - \eu^{-\lambda(t^2+t^3)}\cD_4\cD_5 ~, \\
 \PF^\eq_{(0,0,1)} &= \cD_4\cD_6 - \eu^{-\lambda t^3}(\lambda+\cD_5)\cD_5~,
\end{aligned}
\end{equation}
from which we can derive the Givental $\hat{I}$-operator
\begin{multline}
 \hat{I}_{X_\bt} = 
 1 + \sum_{d_1=1}^\infty z_1^{d_1}
 \frac
 {\left(\frac{\cD_3}{\lambda}\right)_{2d_1}}
 {\left(1-\frac{\cD_1}{\lambda}\right)_{d_1}
  \left(1-\frac{\cD_2}{\lambda}\right)_{d_1}} \\
 +  \sum_{\substack{2d_1-d_2\leq0\\2d_2-d_3>0\\-d_2+2d_3\leq0}}
 z_1^{d_1}z_2^{d_2}(-z_3)^{d_3}
 \frac
 {\left(\frac{\cD_4}{\lambda}\right)_{2d_2-d_3}}
 {\left(1-\frac{\cD_1}{\lambda}\right)_{d_1}
  \left(1-\frac{\cD_2}{\lambda}\right)_{d_1}
  \left(1-\frac{\cD_3}{\lambda}\right)_{-2d_1+d_2}
  \left(1-\frac{\cD_5}{\lambda}\right)_{d_2-2d_3}
  \left(1-\frac{\cD_6}{\lambda}\right)_{d_3}} \\
 + \sum_{\substack{2d_1-d_2\leq0\\2d_2-d_3\leq0\\-d_2+2d_3>0}}
 z_1^{d_1}(-z_2)^{d_2}z_3^{d_3}
 \frac
 {\left(\frac{\cD_5}{\lambda}\right)_{-d_2+2d_3}}
 {\left(1-\frac{\cD_1}{\lambda}\right)_{d_1}
  \left(1-\frac{\cD_2}{\lambda}\right)_{d_1}
  \left(1-\frac{\cD_3}{\lambda}\right)_{-2d_1+d_2}
  \left(1-\frac{\cD_4}{\lambda}\right)_{-2d_2+d_3}
  \left(1-\frac{\cD_6}{\lambda}\right)_{d_3}} \\
 + \sum_{\substack{2d_1-d_2\leq0\\2d_2-d_3>0\\-d_2+2d_3>0}}
 z_1^{d_1}(-z_2)^{d_2}(-z_3)^{d_3}
 \frac
 {\left(\frac{\cD_4}{\lambda}\right)_{2d_2-d_3}
  \left(\frac{\cD_5}{\lambda}\right)_{-d_2+2d_3}}
 {\left(1-\frac{\cD_1}{\lambda}\right)_{d_1}
  \left(1-\frac{\cD_2}{\lambda}\right)_{d_1}
  \left(1-\frac{\cD_3}{\lambda}\right)_{-2d_1+d_2}
  \left(1-\frac{\cD_6}{\lambda}\right)_{d_3}} \\
 + \sum_{\substack{2d_1-d_2>0\\2d_2-d_3>0\\-d_2+2d_3\leq0}}
 z_1^{d_1}(-z_2)^{d_2}(-z_3)^{d_3}
 \frac
 {\left(\frac{\cD_3}{\lambda}\right)_{2d_1-d_2}
  \left(\frac{\cD_4}{\lambda}\right)_{2d_2-d_3}}
 {\left(1-\frac{\cD_1}{\lambda}\right)_{d_1}
  \left(1-\frac{\cD_2}{\lambda}\right)_{d_1}
  \left(1-\frac{\cD_5}{\lambda}\right)_{d_2-2d_3}
  \left(1-\frac{\cD_6}{\lambda}\right)_{d_3}} \\
 + \sum_{\substack{2d_1-d_2>0\\2d_2-d_3\leq0\\-d_2+2d_3>0}}
 z_1^{d_1}z_2^{d_2}z_3^{d_3}
 \frac
 {\left(\frac{\cD_3}{\lambda}\right)_{2d_1-d_2}
  \left(\frac{\cD_5}{\lambda}\right)_{-d_2+2d_3}}
 {\left(1-\frac{\cD_1}{\lambda}\right)_{d_1}
  \left(1-\frac{\cD_2}{\lambda}\right)_{d_1}
  \left(1-\frac{\cD_4}{\lambda}\right)_{-2d_2+d_3}
  \left(1-\frac{\cD_6}{\lambda}\right)_{d_3}} \\
 + \sum_{\substack{2d_1-d_2>0\\2d_2-d_3>0\\-d_2+2d_3>0}}
 z_1^{d_1}z_2^{d_2}(-z_3)^{d_3}
 \frac
 {\left(\frac{\cD_3}{\lambda}\right)_{2d_1-d_2}
  \left(\frac{\cD_4}{\lambda}\right)_{2d_2-d_3}
  \left(\frac{\cD_5}{\lambda}\right)_{-d_2+2d_3}}
 {\left(1-\frac{\cD_1}{\lambda}\right)_{d_1}
  \left(1-\frac{\cD_2}{\lambda}\right)_{d_1}
  \left(1-\frac{\cD_6}{\lambda}\right)_{d_3}} ~.
\end{multline}
The instanton operators are regular except for those of the form
\begin{equation}
 \sfp_{(d_1,0,0)} = z_1^{d_1}
 \frac
 {\left(\frac{\cD_3}{\lambda}\right)_{2d_1}}
 {\left(1-\frac{\cD_1}{\lambda}\right)_{d_1}
  \left(1-\frac{\cD_2}{\lambda}\right)_{d_1}}
\end{equation}
which are not proportional to any of the compact divisor operators 
$\cD_4,\cD_5$.
It follows that the $z_1$ instantons are
singular in the non-equivariant limit, similarly to the local $F_2$ case.
All other instanton operators either contain $\cD_4$ or $\cD_5$ in the numerator
and the corresponding instanton contributions are regular.

Observe that for $z_3=0$ the $\hat{I}$-operator reduces to that of local $F_2$,
since the two charge matrices are equal once we remove the last line
from the one of local $A_2$.
Similarly, for $z_1=0$ the $\hat{I}$-operator reduces to that of the $A_2$ case,
which corresponds to removing the first line of the charge matrix.

The solutions to the non-equivariant PF equations are
\begin{equation}
\begin{aligned}
 \Pi(\pt) =& 1 ~,\\
 \Pi(C^1) =& \log z_1 + 2M_0 = \log\td{z}_1 ~, \\
 \Pi(C^2) =& \log z_2 - M_0 + 2M_1 - M_2 = \log\td{z}_2 ~,\\
 \Pi(C^3) =& \log z_3 - M_1 + 2M_2 = \log\td{z}_3 ~,\\
 \Pi(D_4) =& (\log\td{z}_1+\log\td{z}_2)\log\td{z}_2
 - 4 M_1^2 + 4 M_1 M_2 - M_2^2 - 4M_3 \\
 & + 2M_1^{(100)} + 4M_1^{(010)} - M_2^{(100)} - 2M_2^{(010)} ~,\\
 \Pi(D_5) =& (\log\td{z}_1+2\log\td{z}_2+2\log\td{z}_3)\log\td{z}_3
 + 2M_1^2 - 2M_1 M_2 - 4M_2^2 + 2M_3 \\
 & - M_1^{(100)} - 2M_1^{(010)} + 2M_2^{(100)} + 4M_2^{(010)} + 6M_2^{(001)}~,
\end{aligned}
\end{equation}
where we define the functions
\begin{equation}
 M_0 := \sum_{d_1=1}^\infty z_1^{d_1} \frac{\Ga(2d_1)}{\Ga(d_1+1)^2}~,
 \quad
 B^{(i)} := \sum_{d_1=1}^{\infty}z_1^{d_1}\partial_{d_1}^i
 \frac{\Ga(2d_1)}{\Ga(d_1+1)^2} \psi^{(0)}(2d_1)~,
\end{equation}
\begin{multline}
 M_1^{(ijk)} := \sum_{\substack{2d_1-d_2\leq0\\2d_2-d_3>0\\-d_2+2d_3\leq0}}
 z_1^{d_1} z_2^{d_2} (-z_3)^{d_3} \times \\
 \times\partial_{d_1}^i \partial_{d_2}^j \partial_{d_3}^k
 \frac{\Ga(2d_2-d_3)}{\Ga(d_1+1)^2 \Ga(-2d_1+d_2+1) \Ga(d_2-2d_3+1) 
\Ga(d_3+1)}~,
\end{multline}
\begin{multline}
 M_2^{(ijk)} := \sum_{\substack{2d_1-d_2\leq0\\2d_2-d_3\leq0\\-d_2+2d_3>0}}
 z_1^{d_1} (-z_2)^{d_2} z_3^{d_3}
 \partial_{d_1}^i \partial_{d_2}^j \partial_{d_3}^k \times \\
 \times
 \frac{\Ga(-d_2+2d_3)}{\Ga(d_1+1)^2 \Ga(-2d_1+d_2+1) \Ga(-2d_2+d_3+1) 
\Ga(d_3+1)}~,
\end{multline}
\begin{equation}
 M_3^{(ijk)} := \sum_{\substack{2d_1-d_2\leq0\\2d_2-d_3>0\\-d_2+2d_3>0}}
 z_1^{d_1}(-z_2)^{d_2}(-z_3)^{d_3}
 \partial_{d_1}^i\partial_{d_2}^j\partial_{d_3}^k
 \frac{\Ga(2d_2-d_3)\Ga(-d_2+2d_3)}{\Ga(d_1+1)^2\Ga(-2d_1+d_2+1)\Ga(d_3+1)}~,
\end{equation}
\begin{equation}
 M_4^{(ijk)} := \sum_{\substack{2d_1-d_2>0\\2d_2-d_3>0\\-d_2+2d_3\leq0}}
 z_1^{d_1}(-z_2)^{d_2}(-z_3)^{d_3}
 \partial_{d_1}^i\partial_{d_2}^j\partial_{d_3}^k
 \frac{\Ga(2d_1-d_2) \Ga(2d_2-d_3)}{\Ga(d_1+1)^2 \Ga(d_2-2d_3+1) \Ga(d_3+1)}~.
\end{equation}

The GV invariants $n_{d_1,d_2,d_3}$ can be read from $\Pi(D_4)$ or $\Pi(D_5)$
if $-2d_2+d_3\neq0$ or $d_2-2d_3\neq0$, respectively. If $d_2=d_3=0$, then
$n_{d_1,0,0}$ cannot be read from either of these regular solutions and a
regularization for $\Pi(X_\bt,\e)$ is needed.
The regularized cubic solution of the modified PF equations is
\begin{multline}
 \Pi_\reg(X_\bt)
 =
 - \tfrac13 \log\td{z}_1 \log^2\td{z}_2
 - \tfrac13 \log\td{z}_1 \log^2\td{z}_3
 - \tfrac13 \log^2\td{z}_2 \log\td{z}_3
 - \tfrac23 \log\td{z}_2 \log^2\td{z}_3 \\
 - \tfrac13 \log\td{z}_1 \log^2\td{z}_2 \log\td{z}_3
 - \tfrac29 \log^3\td{z}_2
 - \tfrac49 \log^3\td{z}_3 \\
 + \log\td{z}_1 \left\{ - \tfrac23 (B+\gamma M_0-\tfrac{M_0^2}{2})
 - M_0 M_1 + M_1^2 - M_1 M_2 + M_2^2 + M_3 + M_4 - M_1^{(010)} - M_2^{(001)}
 \right\} \\
 + \log\td{z}_2 \left\{ 2M_1^2 - 2 M_1 M_2 - 2 M_1^{(010)} - M_1^{(100)}
 + 2 M_2^2 - 2 M_2^{(001)} + 2 M_3
 \right\} \\
 + \log\td{z}_3 \left\{ 3 M_2^2 - 4 M_2^{(001)} - 2 M_2^{(010)} - M_2^{(100)}
 \right\} \\
 + \Big\{
 - \tfrac23 ( B^{(1)} - \gamma M_0^2 - 2 M_0 B
 + \tfrac13 M_0^3 - \tfrac{\pi^2}{3} M_0 )
 + M_0^2 M_1 - 2 M_0 M_4 + 4 M_1^2 M_2 \\
 - 2 M_1 M_2^2 - 4 M_1 M_3 + 2 M_2 M_3
 - \tfrac83 (M_1^3 + M_2^3) + \tfrac{2\pi^2}{3} (M_1+M_2)
 - M_0 M_1^{(100)} \\
 + M_1 ( 4 M_1^{(010)} + 2 M_1^{(100)} - 2 M_2^{(010)} - M_2^{(100)} )
 + M_2 ( 4 M_2^{(010)} + 2 M_2^{(100)} - 2 M_1^{(010)} - M_1^{(100)}
 + 6 M_2^{(001)} ) \\
 - M_1^{(020)} - M_1^{(110)} - 2 M_2^{(002)} - 2 M_2^{(011)} - M_2^{(101)}
 + 2 M_3^{(010)} + M_3^{(100)} + M_4^{(100)}
 \Big\}
\end{multline}
and by matching against \cref{eq:localCY3-t3} we can read all GV invariants
and reproduce the results of ref.~\cite[Table~4]{Chiang:1999tz}
(modulo some typos); we also get
$n_{1,0,0}=-2/3$ as observed in ref.~\cite[Section~4.1.8]{Honma:2018qcr}.
%
%
%
The numbers affected by typos are
$n_{1,d_2,d_3} = - 2 (d_2-1) d_3 + d_2 (d_2-1)$
for $d_3 > d_2$, as well as the bold entries in the tables
\begin{equation}
d_1=2: \quad
\begin{tabu}{c|ccccccc}
d_2 \diagdown d_3 & 0 & 1 & 2 & 3 & 4 & 5 & 6 \\
\hline
3 &   -6 &  -10 &  -12 &  -12 &  -10 & \bold{-14} & \bold{-18} \\
4 &  -32 &  -70 &  -96 & -110 & -112 & -126       & -192 \\
5 & -110 & -270 & -416 & -518 & -576 & -630       & -784 
\end{tabu}
\end{equation}
\begin{equation}
d_1=3: \quad
\begin{tabu}{c|ccccccc}
d_2 \diagdown d_3 & 0 & 1 & 2 & 3 & 4 & 5 & 6 \\
\hline
4 &   -8 &  -14 &  -18 &  -20 &  -20 & \bold{-18} & \bold{-24} \\
5 & -110 & -270 & -416 & -518 & -576 & -630       & -784 
\end{tabu}
\end{equation}
Since $n_{d_1,0,0}$ can only be read from $\Pi_\reg(X_\bt)$, it is not 
surprising that the obtained GV invariants are not all integer.

The regularized disk function $\cf^D_\reg$ can be obtained from \cref{eq:Freg}
by using the left-inverse matrix
\begin{equation}
 R=
\left(
\begin{array}{ccc}
 \alpha & -2/3 & -1/3 \\
  \beta & -1/3 & -2/3 \\
\end{array}
\right)~,
\end{equation}
where $\alpha,\beta$ parametrize the ambiguity in the choice regularization.

\section{Examples without compact divisors}
\label{sec:examples2}

In this section we present three examples with empty $H^2_\cmp(X_\bt)$ and 
non-empty $H^4_\cmp(X_\bt)$. The elements of $H^4_\cmp(X_\bt)$ are in
one-to-one correspondence with compact double intersections
of non-compact toric divisors.

\subsection{Resolved conifold}

The resolved conifold
$X_\bt = \tot \co(-1) \oplus \co(-1) \to \BP^1$ is defined by the charge matrix
\begin{equation}
Q =
\begin{pmatrix}
 1 & 1 & -1 & -1\\
\end{pmatrix}
\end{equation}
and the chamber $t>0$.
The equivariant symplectic volume is
\begin{equation}
 \cf(t,\e) = \oint_\jk  \frac{\dif\phi}{2\pi\ii} \,
\frac{\eu^{\phi t}} {(\e_1 + \phi) (\e_2 +\phi) (\e_3 - \phi) (\e_4-\phi)}~,
\end{equation}
where we take poles at $\phi=-\e_1$ and $\phi=-\e_2$.
We have the classical cohomology relation
\begin{equation}
 \cD_1\cD_2 \cf (t,\e) = 0
\end{equation}
so that the equivariant cohomology ring is
\begin{equation}
 H^\bullet_{\sft}(X_{\bt})
 \cong\BC[\phi,\e_1,\e_2,\e_3,\e_4] / \langle (\e_1+\phi)(\e_2+\phi) \rangle~.
\end{equation}

The K-theoretic disk function is defined as
\begin{equation}
 \cz^D(T,q;\q) =
 -\oint_\qjk \frac{\dif w}{2\pi\ii w} \,
 \frac {w^{-T}}
 {(q_1w;\q)_\infty(q_2w;\q)_\infty(q_3w^{-1};\q)_\infty(q_4w^{-1};\q)_\infty}
\end{equation}
with two towers of poles at $w=q_1^{-1}\q^{-d}$ and $w=q_2^{-1}\q^{-d}$
for $d\geq0$. The quantum K-theory is encoded in the difference equation
\begin{equation}
 \left[ (1-\Delta_1) (1-\Delta_2) - \q^T (1-\Delta_3) (1-\Delta_4) \right]
 \cz^D (T,q;\q) = 0
\end{equation}
with solution
\begin{equation}
 \cz^D(T,q;\q) = \sum_{d=0}^\infty \q^{dT}
 \frac {(\Delta_3;\q)_{d}(\Delta_4;\q)_{d}}
 {(\q^{-d}\Delta_1;\q)_{d}(\q^{-d}\Delta_2;\q)_{d}}
 \cz_{\Ga_\q} (T,q;\q)~,
\end{equation}
where the function
\begin{equation}
 \cz_{\Ga_\q} (T,q;\q) = c_1 q_1^T + c_2 q_2^T
\end{equation}
with
\begin{equation}
\begin{aligned}
 c_1 &= \frac{1}
 {(\q;\q)_\infty(q_2q_1^{-1};\q)_\infty(q_3q_1;\q)_\infty(q_4q_1;\q)_\infty}~,\\
 c_2 &= \frac{1}
 {(\q;\q)_\infty(q_1q_2^{-1};\q)_\infty(q_3q_2;\q)_\infty(q_4q_2;\q)_\infty}
\end{aligned}
\end{equation}
satisfies the classical K-theory relation
\begin{equation}
 (1-\Delta_1) (1-\Delta_2) \cz_{\Ga_\q} (T,q;\q) = 0~.
\end{equation}
The resolved conifold has no compact divisors, but the intersection of $D_3$ and
$D_4$ is the base of the fibration $\BP^1$ that generates $H_2 (X_{\bt})$.
It follows that the disk partition function satisfies
a generalization of the compact divisor equation, namely
\begin{equation}
 (1-\Delta_3)(1-\Delta_4)\cz^D(T,q;\q) \quad\text{is analytic at $q_i=1$}~.
\end{equation}
To see why this is the case, we rewrite
\begin{equation}
 (1-\Delta_3) (1-\Delta_4) \cz^D (T,q;\q) = \sum_{n^3\geq0} \sum_{n^4\geq0}
 \frac {(\q q_3)^{n^3}} {(\q;\q)_{n^3}} \frac {(\q q_4)^{n^4}} {(\q;\q)_{n^4}}
 \sum_{\Lambda_{3,4} (T,n^3,n^4)}
	\frac {q_1^{n^1}q_2^{n^2}} {(\q;\q)_{n^1}(\q;\q)_{n^2}}
\end{equation}
with
\begin{equation}
 \Lambda_{3,4}(T,n^3,n^4) =
	\left\{ (n^1,n^2) \in \BN^2\,\middle|\,n^1+n^2=T+n^3+n^4 \right\}
\end{equation}
so that each term in the $\q$ expansion is finite and polynomial in the $q_i$.
Sending all the $q_i$ to 1 is therefore a well-defined limit.

The cohomological limit $\hbar\to0$ is straightforward to compute.
The disk function becomes
\begin{equation}
 \cf^{D}(t,\e;\lambda) =
 \lambda^{-4} \oint_\qjk \frac{\dif\phi} {2\pi\ii} \,
 \eu^{\phi t}\,
 \Ga \left( \tfrac{\e_1 + \phi}{\lambda} \right)
 \Ga \left( \tfrac{\e_2 + \phi}{\lambda} \right)
 \Ga \left( \tfrac{\e_3 - \phi}{\lambda} \right)
 \Ga \left( \tfrac{\e_4 - \phi}{\lambda} \right)
\end{equation}
satisfying the quantum cohomology relation
\begin{equation}
 \left[ \cD_1 \cD_2 - \eu^{-\lambda t} \cD_3 \cD_4 \right]
 \cf^D (t,\e;\lambda) = 0~.
\end{equation}
We can write the instanton expansion
\begin{equation}
 \cf^D (t,\e;\lambda)
 = \sum_{d=0}^\infty z^d
 \frac{\left(\tfrac{\cD_3}{\lambda}\right)_d
 \left(\tfrac{\cD_4}{\lambda}\right)_d}
 {\left(1-\tfrac{\cD_1}{\lambda}\right)_d
 \left(1-\tfrac{\cD_2}{\lambda}\right)_d}
 \cf_{\Ga} (t,\e;\lambda)
\end{equation}
with
\begin{equation}
 \cf_{\Ga} (t,\e;\lambda) =
 \frac {\eu^{-\e_1 t}} {\lambda^3}
 \Ga \left( \tfrac{\e_2-\e_1}{\lambda} \right)
 \Ga \left( \tfrac{\e_1+\e_3}{\lambda} \right)
 \Ga \left( \tfrac{\e_1+\e_4}{\lambda} \right)
 + \frac{\eu^{-\e_2 t}}{\lambda^3}
 \Ga \left( \tfrac{\e_1-\e_2}{\lambda} \right)
 \Ga \left( \tfrac{\e_1+\e_3}{\lambda} \right)
 \Ga \left( \tfrac{\e_1+\e_4}{\lambda} \right)~.
\end{equation}

The instanton operators
\begin{equation}
 \sfp_d = z^d \frac{\left(\frac{\cD_3}{\lambda}\right)_d
 \left( \frac{\cD_4}{\lambda} \right)_d}
 {\left( 1-\frac{\cD_1}{\lambda} \right)_d
 \left( 1-\frac{\cD_2}{\lambda} \right)_d}
\end{equation}
are proportional to the product $\cD_3\cD_4$, which corresponds to the 
intersection
of divisors $D_3$, $D_4$. Since the intersection is compact, by
\cref{eq:singular-instantons}, the instanton corrections are non-singular.
Hence we can compute
\begin{equation}
 \lim_{\e\to0}\left[
 \cf^D (t,\e;\lambda)-\cf_{\Ga} (t,\e;\lambda)\right]
 = \frac1{(-\lambda)^3}
 \left[
 \log z\Li_2(z)
 -2\Li_3(z)
 \right]~.
\end{equation}
The equivariant Givental $I$-function is
\begin{equation}
 I_{X_{\bt}} = \sum_{d=0}^\infty \eu^{-\lambda d t + \phi t}
 \frac{\left(\frac{x_3}{\lambda}\right)_{d}\left(\frac{x_4}{\lambda}\right)_{d}}
 {\left(1-\frac{x_1}{\lambda}\right)_{d}\left(1-\frac{x_2}{\lambda}\right)_{d}}
\end{equation}
and the solutions to the equivariant PF equations are
\begin{equation}
 \Pi(p_i) = z^{\frac{\e_i}{\lambda}} \sum_{d=0}^\infty z^d
 \frac{\prod_{j=3}^4\left(\frac{\e_j-\e_i}{\lambda}\right)_{d}}
 {\prod_{j=1}^2\left(1-\frac{\e_j-\e_i}{\lambda}\right)_{d}}~,
 \quad
 i=1,2~.
\end{equation}
The regular periods that survive the non-equivariant limit are
\begin{equation}
\begin{aligned}
 \Pi(\pt) &= 1 ~,\\
 \Pi(\BP^1) &= \log z ~,
\end{aligned}
\end{equation}
which correspond to the two generators of the homology lattice.
The modified PF operator
\begin{equation}
 \cD_1\cD_2\cD_3\cD_4 -z (\cD_3+\lambda)\cD_3(\cD_4+\lambda)\cD_4
\end{equation}
admits the following quadratic and cubic solutions
\begin{equation}
\begin{aligned}
 \Pi_\reg(D_i) &= -\tfrac12\log^2 z - \Li_2(z)~,\quad i=3,4~, \\
 \Pi_\reg(X_\bt) &= \tfrac16\log^3 z + \log z\Li_2(z) - 2\Li_3(z) ~,
\end{aligned}
\end{equation}
corresponding to the non-compact cycles of $X_\bt$.
By compactness of $D_3\cap D_4$ we have
\begin{equation}
 \lim_{\e\to0}(-\lambda)\hat{I}_{X_\bt}\cD_3\cD_4\cf_{\Ga}
 = \Pi(\BP^1)~.
\end{equation}
Since there are no compact divisors, we cannot use \cref{eq:CY3-t2} to read
the GV invariants and we cannot apply the regularization procedure to $\cf^D$.
What we can do in this case is restrict to a non-compact divisor and
regularize the restricted disk function. The non-compact divisor has itself
a compact divisor corresponding to the $\BP^1$. Define
\begin{equation}
 \cf^D|_{D_4} := \hat{I}_{X_\bt}\cD_4\cf_{\Ga}~,
\end{equation}
which is still singular but can be regularized via the procedure in
\cref{sec:regularization}, namely
\begin{equation}
 \cf^D_\reg(t)|_{D_4} := \cf^D(t)|_{D_4}-\eu^{\e_3 t}\cf^D(0)|_{D_4}~,
\end{equation}
where we used the fact that $D_3$ is a compact divisor inside of $D_4$.
This function is regular
\begin{equation}
 \lim_{\e\to0}(-\lambda)^2\cf^D_\reg(t)|_{D_4} = \Pi_\reg(D_4)
\end{equation}
and from \cref{eq:CY3-t2} we can read the GV invariants $n_d=\delta_{d,1}$.
The same can be done upon exchanging the divisors $D_3$ and $D_4$.

\subsection{\texorpdfstring{$\co(-1,-1)\oplus\co(-1,-1)$
over $\BP^1 \times \BP^1$}{O(-1,-1)⊕O(-1,-1) over P1xP1}}

The charge matrix is
\begin{equation}
Q =
\begin{pmatrix}
 1 & 1 & 0 & 0 & -1 & -1 \\
 0 & 0 & 1 & 1 & -1 & -1
\end{pmatrix}
\end{equation}
with the chamber defined by $t^1,t^2>0$ and the disk function is
\begin{multline}
 \cf^D = \frac1{\lambda^6}
 \oint_{\qjk}\frac{\dif\phi_1\dif\phi_2}{(2\pi\ii)^2} 
 \eu^{\phi_1 t^1+\phi_2 t^2}
 \Ga\left(\tfrac{\e_1+\phi_1}{\lambda}\right)
 \Ga\left(\tfrac{\e_2+\phi_1}{\lambda}\right)
 \Ga\left(\tfrac{\e_3+\phi_2}{\lambda}\right)
 \Ga\left(\tfrac{\e_4+\phi_2}{\lambda}\right)\\
 \Ga\left(\tfrac{\e_5-\phi_1-\phi_2}{\lambda}\right)
 \Ga\left(\tfrac{\e_6-\phi_1-\phi_2}{\lambda}\right)~,
\end{multline}
which is annihilated by the equivariant PF operators
\begin{equation}
\begin{aligned}
 &\cD_1\cD_2-z_1\cD_5\cD_6~, \\
 &\cD_3\cD_4-z_2\cD_5\cD_6~.
\end{aligned}
\end{equation}
Similarly to the resolved conifold case, we have two non-compact divisors
$D_5$, $D_6$ that intersect to a compact four-cycle corresponding to the base
$\BP^1\times\BP^1$. The instanton operators are
\begin{equation}
 \sfp_{d_1,d_2} = z_1^{d_1} z_2^{d_2}
 \frac{\left(\frac{\cD_5}{\lambda}\right)_{d_1+d_2}
 \left(\frac{\cD_6}{\lambda}\right)_{d_1+d_2}}
 {\left(1-\frac{\cD_1}{\lambda}\right)_{d_1}
 \left(1-\frac{\cD_2}{\lambda}\right)_{d_1}
 \left(1-\frac{\cD_3}{\lambda}\right)_{d_2}
 \left(1-\frac{\cD_4}{\lambda}\right)_{d_2}}
\end{equation}
so that instanton corrections of degree $(d_1,d_2) \neq (0,0)$ are
regular in the non-equivariant limit.
The non-equivariant $\hat{I}$-operator expands as
\begin{equation}
 \lim_{\e\to0}\hat{I}_{X_\bt} = 1 + G(z_1,z_2) 
 (\theta_1+\theta_2)^2 + \dots~,
\end{equation}
where
\begin{equation}
 G(z_1,z_2) = \sum_{(d_1,d_2)\neq(0,0)} z_1^{d_1} z_2^{d_2}
 \frac{\Ga(d_1+d_2)^2}{\Ga(d_1+1)^2\Ga(d_2+1)^2}~.
\end{equation}
Since there is no linear term in the expansion, it follows that the
mirror map is trivial,
\begin{equation}
 \td{z}_i = z_i ~.
\end{equation}
The solutions to the non-equivariant PF equations are
\begin{equation}
\begin{aligned}
 \Pi(\pt) &= 1 ~,\\
 \Pi(C_1) &= \log z_1~, \\
 \Pi(C_2) &= \log z_2~, \\
 \Pi(\BP^1\times\BP^1) &= \log z_1 \log z_2 + 2 G(z_1,z_2) ~,
\end{aligned}
\end{equation}
where $C_1$ and $C_2$ are the homology two-cycles
corresponding to the two $\BP^1$'s.

We can compute the following regular solution to the PF equations
\begin{equation}
 \lim_{\e\to0}(-\lambda)^2\hat{I}_{X_\bt}\cD_5\cD_6\cf_{\Ga}
 = \Pi(\BP^1\times\BP^1) + \tfrac{\pi^2}{3}\Pi(\pt)
\end{equation}
from which we can read the GV invariants $n_{d_1,d_2}(\BP^1\times\BP^1)$ by
using \eqref{eq:CY4-t2}. It follows that
\begin{equation}
 G(z_1,z_2) = \sum_{(d_1,d_2) \neq (0,0)}
 n_{d_1,d_2}(\BP^1\times\BP^1) \Li_2 (\td{z}_1^{d_1} \td{z}_2^{d_2})
\end{equation}
and the $n_{d_1,d_2}(\BP^1\times\BP^1)$ match those in
ref.~\cite[Section 3.3]{Klemm:2007in}.

In this case there are no singular instantons and we can read all GV invariants
from the period $\Pi(\BP^1\times\BP^1)$. Similarly to the resolved conifold 
case,
one could also compute a regularized cubic solution and read the same GV 
invariants
from that solution.

\subsection{\texorpdfstring{$\co(-1)\oplus\co(-2)$
over $\BP^2$}{O(-1)⊕O(-2) over P2}}

The charge matrix is
\begin{equation}
Q =
\begin{pmatrix}
 1 & 1 & 1 & -1 & -2
\end{pmatrix}
\end{equation}
with the chamber defined by $t>0$ and the disk function is
\begin{equation}
 \cf^D = \frac1{\lambda^5}
 \oint_{\qjk}\frac{\dif\phi}{2\pi\ii} 
 \eu^{\phi t}
 \Ga\left(\tfrac{\e_1+\phi}{\lambda}\right)
 \Ga\left(\tfrac{\e_2+\phi}{\lambda}\right)
 \Ga\left(\tfrac{\e_3+\phi}{\lambda}\right)
 \Ga\left(\tfrac{\e_4-\phi}{\lambda}\right)
 \Ga\left(\tfrac{\e_5-2\phi}{\lambda}\right)~,
\end{equation}
which is annihilated by the equivariant PF operator
\begin{equation}
 \cD_1\cD_2\cD_3-z\cD_4\cD_5(\cD_5+\lambda)~.
\end{equation}
Similarly to the resolved conifold case, we have two non-compact divisors
$D_4$, $D_5$ that intersect to a compact four-cycle corresponding to the base
$\BP^2$. The instanton operators are
\begin{equation}
 \sfp_d = (-z)^d \frac{\left(\frac{\cD_4}{\lambda}\right)_d
 \left(\frac{\cD_5}{\lambda}\right)_{2d}}
 {\left(1-\frac{\cD_1}{\lambda}\right)_d
 \left(1-\frac{\cD_2}{\lambda}\right)_d
 \left(1-\frac{\cD_3}{\lambda}\right)_d},
\end{equation}
so that instanton corrections are regular in the non-equivariant limit.

The non-equivariant $\hat{I}$-operator expands as
\begin{equation}
 \lim_{\e\to0}\hat{I}_{X_\bt} = 1 + G(z) \theta^2 + \dots~,
\end{equation}
where
\begin{equation}
 G(z) = \sum_{d=1}^\infty (-z)^{d}
 \frac{\Ga(2d+1)}{d^4\Ga(d)^2}
 = 2\Li_2\left(\tfrac12\left(1-\sqrt{1+4z}\right)\right)
 - \Li_1^2\left(\tfrac12\left(1-\sqrt{1+4z}\right)\right)~.
\end{equation}
Since there is no linear term in the expansion, it follows that the
mirror map is trivial,
\begin{equation}
 \td{z} = z~.
\end{equation}
The solutions to the non-equivariant PF equations are
\begin{equation}
\begin{aligned}
 \Pi(\pt) &= 1~, \\
 \Pi(\BP^1) &= \log z_1 ~,\\
 \Pi(\BP^2) &= \tfrac12\log^2 z + G(z)~.
\end{aligned}
\end{equation}
We can compute the following regular solution to the PF equations
\begin{equation}
 \lim_{\e\to0}(-\lambda)^2\hat{I}_{X_\bt}\cD_4\cD_5\cf_{\Ga}
 = \Pi(\BP^2) + \tfrac{2\pi^2}{3}\Pi(\pt)
\end{equation}
from which we can read the GV invariants $n_{d}(\BP^2)$ by using
\cref{eq:CY4-t2}. It follows that
\begin{equation}
 G(z) = 2\sum_{d=1}^\infty
 n_{d}(\BP^2) \Li_2 (z^d)
\end{equation}
and the numbers $n_d(\BP^2)$ match those
in ref.~\cite[Section 3.2]{Klemm:2007in}.
In this case too there are no singular instantons and all GV invariants can be
read from the period $\Pi(\BP^2)$.

\section{Conclusions} \label{sec:conclusions} 
 
In this work we study the disk partition function $\cf^D (\bt,\e; 
\lambda)$ and its K-theoretic 
generalization $\cz^D (\bT,q;\q)$ for toric non-compact Kähler manifolds. 
We concentrate on 
structural issues related to the dependence of $\cf^D (\bt,\e; \lambda)$ 
on equivariant parameters $\e$'s and the ability to extract a non-equivariant 
answer. 
For non-compact manifolds the singularities in $\cf^D (\bt,\e; \lambda)$ at 
$\e=0$ are controlled 
by compact divisors (if $H^2_\cmp (X_{\bt})$ is non-empty). The nature of 
singularities depends on how compact divisors appear in 
the equivariant quantum 
cohomology relations. Using the formalism of Givental's equivariant 
$I$ and $J$ functions, we discuss
the nature of singularities in $\e$'s, the possibility to extract 
a non-equivariant answer (as well as the ambiguities involved),
and its impact on the enumerative geometry of the 
corresponding non-compact 
toric manifolds. We explain the relation between equivariant 
and modified PF equations, which are a natural generalization
of PF for non-compact manifolds. 
We perform a similar analysis for the K-theoretic function $\cz^D (\bT,q;\q)$.

Physically, $\cf^D (\bt,\e; \lambda)$ is a GLSM disk partition function
with a space-filling brane (all boundary conditions are Neumann)
\cite{Hori:2013ika, Sugishita:2013jca, Honda:2013uca}.
Our considerations on Givental's equivariant function, 
operators and the contours and formalism
extend to a more general setup
\begin{multline} \label{eq:cont-gamma-end}
 \cf^D_\al (\bt,\e; \lambda) =
 \lambda^{-N} \oint_\qjk \prod_{a=1}^r
 \frac {\dif \phi_a} {2\pi \ii} \,
 \eu^{\sum_a \phi_a t^a } \prod_{i=1}^N \Ga \left (\frac{x_i} {\lambda} \right) 
 \al(x) \\
 = \lambda^{-N} \oint_\jk \prod_{a} \frac{\dif\phi_a}{2\pi\ii}
 I_{X_{\bt}} \prod_{i} \Ga\left (\frac{x_i}{\lambda} \right ) \al(x)
 = \hat{I}_{X_\bt} \cdot \cf_{\al \Ga} (\bt, \e )~,
\end{multline}
where $\al(x)$ is a periodic function
in all its variables with period $\lambda$.
This object satisfies the equivariant PF equation, and
semiclassically it can be identified \cite{Hori:2013ika}
with the central charge of a brane $\mathcal B$,
with $\al$ being the Chern character of $\mathcal B$
\begin{equation}
 \cf^D_\al (\bt,\e; \lambda) = \int_{X_\bt} \eu^{\varpi_{\bt}-H_{\e}}
 \hat \Ga_\eq \ch (\mathcal B) + O(\eu^{-\lambda t})~.
\end{equation}
For example, if we split the set $\{ 1,2,\ldots, N \}$ into 
two disjoint subsets that we denote $\mathrm{Neu}$ (for Neumann 
directions) and $\mathrm{Dir}$ (for Dirichlet), then 
we can define the periodic function
\begin{equation} \label{con-function}
 \al(x) = \prod_{i \in \mathrm{Dir}} (1 - \eu^{2\pi\ii x_i/\lambda})~.
\end{equation}
The corresponding object
\begin{equation}\label{con-brane-PF}
 \cf^D_\al (\bt,\e; \lambda) = \lambda^{-N} (-2\pi\ii)^{|\mathrm{Dir}|} 
 \oint_\qjk  \prod_{a=1}^r
 \frac {\dif \phi_a} {2\pi \ii} \, \eu^{\sum_a \phi_a t^a }
 \frac
 {\prod_{i\in \mathrm{Neu}} \Ga \left (\frac{x_i}{\lambda} \right)}
 {\prod_{j\in \mathrm{Dir}} \Ga \left (1-\frac{x_j}{\lambda} \right)} 
 \eu^{\ii \pi \sum_{j\in \mathrm{Dir}} \frac{x_j}{\lambda}}
\end{equation}
is the GLSM disk partition function
with mixed boundary conditions \cite{Honda:2013uca}.
We use the identity \cref{gamma-sin} and the same contour $\qjk$ as before
but, due to the property that
the function in \cref{con-function} vanishes at some towers of poles,
these disappear from the final answer.

All our considerations are applicable to these objects,
and depending on the choice of boundary conditions the result 
may (or may not) contain singular terms in $\e$'s at $\e=0$.
It's worth noting that, even when such objects are non-singular,
for example for branes with a compact support,
they cannot be used to fix (regularization scheme dependent)
ambiguities in the GV numbers, as they are blind to such sectors.
The semiclassical part of \cref{con-brane-PF} 
can be interpreted as an integral
\begin{equation}
 \int_M \eu^{\varpi_{\bt}-H_{\e}}
 \frac{\hat \Ga_\eq (TM)}{\hat \Ga_\eq (NM)}
 \eu^{\frac{\ii\pi}{\lambda} c_1 (NM)} + O(\eu^{-\lambda t})
\end{equation}  
over the submanifold $M =\bigcap_{i \in \mathrm{Dir}} D_i$,
where we denote by the same symbol
$\varpi_{\bt}-H_{\e}$ and its pull-back to $M$,
$TM$ stands for tangent bundle and $NM$ for normal bundle of $M$ in $X_\bt$.
This is the equivariant extension of the curvature terms
of the D-brane effective action \cite{Bachas:1999um},
with the $\hat\Ga$-class replaced by \emph{some} square root of $\hat{A}$. 
The story can be generalized to K-theory \cite{Yoshida:2014ssa}.

The disk partition function $\cf^D (\bt,\e; \lambda)$ is well-defined only 
when equivariant parameters are turned on, and for non-compact spaces some 
non-canonical choices are always 
involved when we try to extract the non-equivariant part of the answer. 
Since $\cf^D (\bt,\e; \lambda)$ satisfies the equivariant PF equation,
we can think of it as a generalized period on the mirror \cite{Hori:2000kt}.
We think that equivariant parameters should be taken seriously and one needs
to understand their role in mirror symmetry.
We hope to come back to these issues in the future.

\paragraph{Acknowledgements}
We thank
Andrea Brini,
Alessandro Georgoudis,
Pietro Longhi,
Nikita Nekrasov,
Mauricio Romo,
and Johannes Walcher
for discussions.
We are also thankful to Michèle Vergne for feedback on an earlier draft of this manuscript.
The work of NP was partially supported by the US
Department of Energy under grant DE-SC0010008.
Opinions and conclusions expressed here are those of the authors
and do not necessarily reflect the views of funding agencies.

\appendix

\section{Useful formulas}

We collect useful formulas that we refer
to in the main body of the paper.

The Gamma-class of a complex vector bundle $E$
(whenever $E$ is omitted, it is understood that $E=TX$)
is defined in terms of its Chern roots $x_i$ as the power series
\begin{multline} \label{eq:gamma-class}
 \hat \Ga(E) := \prod_{i} \Ga \left( 1+ \tfrac{x_i}{\lambda} \right) =
 1 - \gamma c_1 \lambda^{-1}
 + \left[ \left( \tfrac{\gamma^2} {2} + \tfrac {\pi^2} {12} \right) c_1^2
 - \tfrac {\pi^2} {6} c_2 \right] \lambda^{-2} \\
 + \left[
	\left( \zeta(3) + \tfrac {\gamma \pi^2} {6} \right) c_1 c_2
	- \left( \tfrac {\zeta(3)} {3} + \tfrac {\gamma^3} {3}
		+ \tfrac {\gamma \pi^2} {12} \right) c_1^3
	- \zeta(3) c_3
 \right] \lambda^{-3} \\
 +\left[
 \left(\tfrac{\pi^4}{90}+\gamma\zeta(3)\right) c_1 c_3
 -\left(\tfrac{\gamma^2\pi^2}{12}+\tfrac{\pi^4}{40}
 +\gamma\zeta(3)\right) c_1^2 c_2
 \right. \\
 +\left.
 \left(\tfrac{\gamma^4}{24}+\tfrac{\gamma^2\pi^2}{24}+\tfrac{\pi^4}{160}
 +\tfrac{\gamma\zeta(3)}{3}\right) c_1^4
 -\tfrac{\pi^4}{90} c_4
 +\tfrac{7\pi^4}{360} c_2^2
 \right]\lambda^{-4}
 + O (\lambda^{-5})~,
\end{multline}
where $\gamma$ is the Euler--Mascheroni constant and the r.h.s.\ is
expanded over a basis of Chern classes $c_i$.
The equivariant version $\hat \Ga_\eq$ is obtained by replacing
Chern roots with equivariant Chern roots.
(For $E=TX$, it is thus a function of the equivariant curvature.)
From the expansion, it follows that for $X=CY_d$
\begin{equation} \label{eq:gamma-class-scl}
\begin{aligned}
 d=2 \Longrightarrow &
 \int_X \eu^{\varpi} \hat \Ga(TX) =
 \tfrac12\int_X \varpi^2
 -\tfrac{\pi^2}{6\lambda^2} \int_X c_2 ~,\\
 d=3 \Longrightarrow &
 \int_X \eu^{\varpi} \hat \Ga(TX) =
 \tfrac1{6}\int_X \varpi^3
 -\tfrac{\pi^2}{6\lambda^2} \int_X \varpi c_2
 -\tfrac{\zeta(3)}{\lambda^3} \int_X c_3 ~,\\
 d=4 \Longrightarrow &
 \int_X \eu^{\varpi} \hat \Ga(TX) =
 \tfrac1{24}\int_X \varpi^4
 -\tfrac{\pi^2}{12\lambda^2} \int_X \varpi^2 c_2
 -\tfrac{\zeta(3)}{\lambda^3} \int_X \varpi c_3
 -\tfrac{\pi^4}{90\lambda^4}  \int_X \left(c_4-\tfrac{7}{4}c_2^2\right)
\end{aligned}
\end{equation}
with the caveat that equivariant versions should be used for non-compact $X$.

The Pochhammer symbol is defined as the function
\begin{equation}
\label{eq:pochhammer}
 (z)_n := \frac{\Ga(z+n)}{\Ga(z)}
\end{equation}
for $n \in \BZ$. It satisfies the following useful identities
\begin{equation}
\label{eq:pochhammer-identity1}
 (z)_n =
 \begin{dcases*}
  \prod_{i=0}^{n-1} (z+i) & if $n>0$ \\
  1 & if $n=0$ \\
  \prod_{i=n}^{-1} \frac1{(z+i)} & if $n<0$
 \end{dcases*}
\end{equation}
and
\begin{equation}
\label{eq:pochhammer-identity2}
 (z)_{-n} = \frac1{(z-n)_n} = \frac{(-1)^n}{(1-z)_n}~.
\end{equation}

The $\q$-analog of the Pochhammer symbol is known as the $\q$-Pochhammer symbol
$(w;\q)_n$. For $n\in\BZ$ it is defined as
\begin{equation} \label{eq:q-pochhammer}
 (w;\q)_n :=
 \begin{dcases*}
  \prod_{i=0}^{n-1} (1-\q^i w) & if $n>0$ \\
  1 & if $n=0$ \\
  \prod_{i=n}^{-1} \frac1{(1-\q^i w)} & if $n<0$
 \end{dcases*}
\end{equation}
and it satisfies the following identity
\begin{equation}
 (w;\q)_{-n} = \frac1{(\q^{-n}w;\q)_n}
 = \frac{(-\q w^{-1})^{n}\q^{\frac{n(n-1)}{2}}}{(\q w^{-1};\q)_n}~.
\end{equation}

Then one can introduce the Jackson $q$-Gamma function
\begin{equation}
\label{eq:q-gamma}
 \Ga_{\q}(z) := \frac{(\q;\q)_{\infty}
 (1-\q)^{1-z}}{\left(\q^z;\q\right)_{\infty}}~,
\end{equation}
which we regard as the $\q$-analog of the Euler Gamma function.
Similarly to \cref{eq:gamma-class} one can use the $\q$-Gamma
function to define a $\q$-Gamma-class in K-theory as
\begin{equation} \label{eq:q-gamma-class}
 \hat \Ga_\q(E) :=
 \prod_{i} \Ga_{\q} \left( 1 + \tfrac{x_i}{\lambda} \right)
 = (\q;\q)_\infty^{\operatorname{rk} E} (1-\q)^{-c_1(E)/\lambda}
 \prod_i \frac{1}{\left(\q L_i;\q\right)_{\infty}}~,
\end{equation}
where $L_i=\eu^{-\hbar x_i}$ are the K-theoretic
Chern roots of $E$ and $\q=\eu^{-\hbar\lambda}$.

The $\q$-Gamma function satisfies the recurrence relation
\begin{equation}
\label{eq:recurrence-q-gamma}
 \frac{1-\q^z}{1-\q}\Ga_{\q}(z)=\Ga_{\q}(z+1)~,
\end{equation}
which is the $\q$-analogue of the standard identity $z\Ga(z)=\Ga(z+1)$.

The infinite $\q$-Pochhammer satisfies the $\q$-difference equation
\begin{equation} \label{eq:q-pochhammer-difference}
 \frac{(1-z)}{(z;\q)_\infty} = \frac{1}{(\q z;\q)_\infty}
\end{equation}
as well as the $\q$-binomial theorem
\begin{equation} \label{eq:qbinomial}
 \frac{1}{(z;\q)_\infty} = \sum_{n=0}^\infty \frac{z^n}{(\q;\q)_{n}}~,
\end{equation}
where we can write the coefficient
as a sum over integer partitions $\mu$ of length $\leq n$
\begin{equation} \label{eq:p-len}
 \frac{1}{(\q;\q)_n} = \sum_{\ell(\mu)\leq n} \q^{|\mu|}~.
\end{equation}

Finally, we recall Euler's reflection formula
\begin{equation} \label{gamma-sin}
 \Ga(1+z) \Ga(1-z)
 = \frac{\pi z}{\sin(\pi z)}
 = \frac{(-2\pi\ii z) \eu^{\pi\ii z}}{(1-\eu^{2\pi\ii z})}
\end{equation}
and its $\q$-analogue
\begin{equation}
 \Ga_\q (1+z) \Ga_\q(1-z) = \frac{(1-\q^z)(\q;\q)_\infty^2}{\theta(\q^z;\q)}~,
\end{equation}
where $\theta(w;\q):=(w;\q)_\infty(\q w^{-1};\q)_\infty$ is a theta function.

\printbibliography
\end{document}